\date{}
% LaTeX 2e
% 
\documentclass[11pt]{article}
\textwidth 15.3cm
\oddsidemargin 0in
\evensidemargin 0in
\textheight 22.4cm
\topmargin 0in
\headsep 0in

\usepackage{amsmath,amsthm,amsfonts,amssymb,amsopn,amscd}     
\usepackage{color}

\DeclareMathOperator{\Ker}{Ker}
\DeclareMathOperator{\Stab}{Stab}

\DeclareMathOperator{\Ind}{Ind}

\def\tilde{\widetilde}

\def\a{\alpha}

\def\setminus{\smallsetminus}

\def\A{{\cal A}}
\def\B{{\cal B}}

\def\C{{\cal C}}

\def\F{{\cal F}}
\def\cF{{\cal F}}

\def\R{{\cal R}}
\def\LL{{\mathcal L}}

\def\H{{\cal H}}
\def\K{{\cal K}}
\def\S{{\cal S}}

\def\PSL{{{\rm PSL}(2,\mathbb R)}}

\def\S2{S^{1(2)}}
\def\Poi{{\cal P}_+^\uparrow}
\def\tPoi{\tilde{\cal P}_+^\uparrow}
% 
% \newtheorem{theorem}{Theorem}[section]
% \newtheorem{lemma}[theorem]{Lemma}
% \newtheorem{conjecture}[theorem]{Conjecture}
% \newtheorem{corollary}[theorem]{Corollary}
% \newtheorem{proposition}[theorem]{Proposition}
% \newtheorem{consequence}[theorem]{Consequence}

%%{\theorembodyfont{\itshape}
%%\newtheorem{definition}[theorem]{Definition}
%\newtheorem{definition}[theorem]{Definition}
%\newtheorem{assumption}[theorem]{Assumption}
%\newtheorem{remark}[theorem]{Remark}
% \newtheorem{example}[theorem]{Example}
%%} 

\newtheorem{theorem}{Theorem}[section]
\newtheorem{lemma}[theorem]{Lemma}

\newtheorem{corollary}[theorem]{Corollary}

\newtheorem{proposition}[theorem]{Proposition}

\theoremstyle{definition} 

\theoremstyle{remark} 

\def\setminus{\smallsetminus}

\def\PSL{PSU(1,1)}

\def\Z{{\mathbb Z}}

\def\RR{{\mathbb R}}
\def\CC{{\mathbb C}}

\def\SL2{{{\rm SL}(2,\RR)}}
\def\SLC{{{\rm SL}(2,\CC)}}
\def\PSL2{{{\rm PSL}(2,\RR)}}

\def\U1{{{\rm V}(1)}}
\def\SU2{{{\rm SV}(2)}}

\def\SO{{{\rm SO}}}
\def\SU{{{\rm SU}}}

\def\A{{\mathcal A}}
\def\B{{\mathcal B}}
\def\C{{\mathcal C}}

\def\F{{\mathcal F}}
\def\H{{\mathcal H}}

\def\K{{\mathcal K}}

\def\O{{\mathcal O}}

\def\cP{{\mathcal P}}
\def\R{{\mathbb R}}
\def\T{{\mathcal T}}

\def\W{{\mathcal W}}
\def\cW{{\mathcal W}}

\def\eins{{\mathbf 1}}

\title{\Huge{Where Infinite Spin Particles\\ are Localizable}}

\author{{\sc Roberto Longo\footnote{Supported in part by the ERC
      Advanced Grant 669240 QUEST ``Quantum Algebraic Structures and
      Models'', PRIN-MIUR, GNAMPA-INdAM, and Alexander von Humboldt
    foundation.}}
\\
Dipartimento di Matematica,
Universit\`a di Roma Tor Vergata,\\
Via della Ricerca Scientifica, 1, I-00133 Roma, Italy\\
E-mail: {\tt longo@mat.uniroma2.it}
\\
\phantom{X}\\
{\sc Vincenzo Morinelli\footnote{Supported in part by
PRIN-MIUR and GNAMPA-INdAM.}}
\\
Dipartimento di Matematica,
Universit\`a di Roma Tor Vergata,\\
Via della Ricerca Scientifica, 1, I-00133 Roma, Italy\\
E-mail: {\tt morinell@mat.uniroma2.it}
\\
\phantom{X}\\
{\sc Karl-Henning Rehren}
\\
Institut f\"ur Theoretische Physik, Universit\"at G\"ottingen,\\ 
37077 G\"ottingen, Germany\\
%\thanks{Electronic address:
E-mail: {\tt rehren@theorie.physik.uni-goettingen.de}
}

\date{}
\markboth{}
{Where Infinite Spin Particles are Localizable}

\begin{document} 

\maketitle

\begin{abstract}
Particles states transforming in one of the infinite spin
  representations of the Poincar\'e group 
(as classified by E. Wigner) 
  are consistent with
  fundamental physical principles, but local fields generating them
  from the vacuum state cannot exist. While it is known that
  infinite spin states localized in a spacelike cone are dense in the
  one-particle space, we show here that the subspace of states
  localized in any double cone is trivial. This implies that the
  free field theory associated with infinite spin has no observables
  localized in bounded regions. In an interacting theory, if the
  vacuum vector is cyclic for a double cone local algebra, then the theory
  does not contain infinite spin representations. 
%%%  
We also prove
that if a Doplicher-Haag-Roberts representation (localized in a double
cone) of a local net is covariant under a unitary representation of the Poincar\'e group containing
infinite spin, then it has infinite statistics. 

These results hold under the natural assumption of the Bisognano-Wichmann property, and we give 
a counter-example (with continuous particle degeneracy) without this property where the conclusions fail.
Our results hold true in any spacetime dimension $s+1$ where infinite spin representations exist, namely $s\geq 2$.

\end{abstract}

\newpage

\section{Introduction}
\label{sec:intro}
The classical notion of particles as pointlike
objects is meaningless in Quantum Mechanics. Here the wave function
satisfies the Schr\"odinger equation and the Heisenberg uncertainty
relation prevents a sharp localization; increasing energy is needed
for better localization. We are going to discuss the intrinsic
particle localization properties, and show why infinite spin particles
exhibit an essential difference from finite spin particles in this respect.
\smallskip

\noindent
{\it Wigner particles and classification of Poincar\'e group representations.}
In Relativistic Quantum Mechanics, one better defines a particle through its symmetry, rather than localization, property. The Schr\"odinger equation is replaced by the Lorentz invariant Klein-Gordon equation and this point of view led to define a particle as an irreducible, positive energy, projective unitary representation of the Poincar\'e group $\Poi$, hence to an irreducible, positive energy, unitary representation of the double (universal) cover of $\Poi$. These are the ``minimal" Poincar\'e covariant objects, the building blocks of any more complete theory.

Within this point of view, E. Wigner \cite{W} obtained his famous
classification of the irreducible, positive energy, unitary
representations of the double cover of $\Poi$, which is isomorphic to
$\RR^4\rtimes\SLC$. 

We briefly recall that a unitary, positive energy representation $U$
is classified, up to unitary equivalence, by two parameters $m$ and
$s$. The \emph{mass} $m$ takes values in $[0,\infty)$ (the lower point
in the energy spectrum). If $m>0$, then the values of the \emph{spin}
$s$ are $0, \frac12, 1, \frac32, 2,\dots$ (the unitary representations of the cover of the rotation subgroup).

In the mass zero case, the representations fall in two distinct
classes according to the representations of the little group, which is
the double cover of $E(2)$, the Euclidean group of the plane. The representation with trivial $E(2)$-translations are representations of the (double) torus, labelled by the \emph{helicity}, a parameter $s$ that takes the place of the spin, $s = 0, \frac12, 1, \frac32, 2,\dots$.

The remaining massless representations correspond to infinite-dimensional, irreducible representations of the double cover of $E(2)$ and are labelled by a parameter $\kappa >0$ (the \emph{radius} of the circle that is the joint spectrum of the $E(2)$-translations) and a $\pm$ sign (Bose/Fermi alternative). They are called \emph{infinite spin} (or continuous spin) representations.

Infinite spin particles have so far not been
observed in nature, although they are compatible with all physical
first principles, and are usually disregarded without
further explanation. A result by Yngvason \cite{Y} shows
that they cannot appear in a Wightman theory \cite{Y}
since no Wightman fields (which have pointlike localization)
transforming under an infinite spin representation can exist. One of the main aims of this paper is to study
the peculiar localization property of these particles so as to explain
why they are not observable in finite space and time. They are
however localizable in certain unbounded spacetime
regions \cite{BGL} (cf.\ also \cite{IM}). Indeed, the authors of \cite{MSY} have constructed such fields $\Phi(x,e)$
which are localized along rays $x+\RR_+\cdot e$ where $e$ is a
spacelike direction.

We mention at this point that more general notions of particle are necessary to describe situations where, for example, infrared clouds are present, cf. \cite{BPS}; these will not be considered in the present paper.

%%%
The main body of this article deals with the issue of localization
of (one-particle) states. In Sect.\ \ref{sec:QFT}, we present some
consequences for the localization of algebras of observables (in the
sense of spacelike commutation relations). The one-particle results
directly pass to {\em free} fields by second quantization, and we shall discuss
general results in the interacting case.
%%%
\smallskip

\noindent
{\it Localized particle states.}
Given a particle, namely an irreducible, positive energy
representation $U$ of $\Poi$, what are the localized states of of $U$? 

If we restrict our attention to finite spin particles, the answer is 
well known in the Quantum Field Theory context, 
where one assumes the 
existence of a local free field transforming in a given
representation.
In the scalar case, for simplicity, the one-particle Hilbert space $\H$ can
be obtained by equipping the Schwartz function space $S(\RR^{s+1})$
  with a scalar product, given by the two-point function 
  $(f,g)=(\Phi(f)\Omega,\Phi(g)\Omega)$ of the field. Its Hilbert space closure
  $\H$ can be viewed as the space of positive energy solutions to the
  Klein-Gordon wave equation, and carries an irreducible
  representation $U$ of $\Poi$ with zero spin/helicity. The
  localization of one-particle states is given by the support of the
  Schwartz functions: by assigning to an open region  
$X$
the closed real linear subspace  
$H_\Phi(X)\subset\H$,
the closure of the space of
  real smooth functions with support in $X$, one obtains a local
  $U$-covariant net of standard subspaces of $\H$ (see below). The
  locality of the field, together with the identity $i\Im(f,g) = (\Omega,[\Phi(f)^*,\Phi(g)]\Omega)$, 
imply that two subspaces $H_\Phi(X)$ and
  $H_\Phi(Y)$ are symplectically orthogonal whenever $X$ and $Y$ are
  spacelike separated. 
  
In the sequel, we describe the procedure of {\em modular}
  localization, which 
intrinsically associates with a given representation the states
localized in a region $X$, without referring to a local
field. 
\smallskip

\noindent
{\it Terminology.}
A wedge region $W$ is a Poincar\'e transform
  of the standard wedge $W_0=\{x\in\RR^4: x_3>\vert x_0\vert\}$, and
  $\W$ is the set of all wedge regions. The standard one-parameter
  family of boosts preserving $W_0$ is called $\Lambda_{W_0}(t)$, and
  we put $\Lambda_W(t):=g\Lambda_{W_0}(t)g^{-1}$ if $W=g(W_0)$.  
  A double cone $O$ is
  the open intersection of a future and a backward light cone, and
  $\O$ is the set of all double cones. A spacelike cone is a
  region of the form $C=x+\bigcup_{t>0}t\cdot O$ where $x\in\RR^4$ and
  $O\in\O$ is a double cone spacelike to the point $0$, and $\C$ is 
  the set of all spacelike cones. Two regions $X$, $Y$ are spacelike 
  separated if every pair of points $(x,y)\in X\times Y$ is
  spacelike separated. The spacelike complement of a region $X$ is
  denoted by $X'$.
\smallskip

\noindent
{\it Standard subspaces.}
Let $\H$ be a Hilbert space. A \emph{standard subspace} $H$ of $\H$ is
a closed, real linear subspace which is cyclic ($H +iH$ is dense) and
separating ($H\cap iH = \{0\}$). Standard subspaces of the
one-particle space naturally appear in the above free field
construction: the standardness of $H_\Phi(O)$ is equivalent to the Reeh-Schlieder property that
the vacuum vector is cyclic and separating for the corresponding local
von Neumann algebras $\A(O)$ \cite{A}.

If $H$ is a standard subspace, the Tomita operator $S: \xi +
i\eta\mapsto \xi - i\eta$, $\xi,\eta\in H$, remembers $H$ as
$H=\Ker(S-1)$, and its polar decomposition $S=J\Delta^{1/2}$ gives the modular operator $\Delta$ and the modular conjugation $J$ that satisfy the one-particle version of the Tomita-Takesaki theorem: 
\begin{equation}
\Delta^{it}H = H \quad \forall\,t\in\RR\ , \qquad JH = H' \ ,
\end{equation}
where $H'$ is the symplectic complement of $H$.
\smallskip

\noindent 
{\it The Bisognano-Wichmann property.} Now, let $U$ be a positive energy representation of $\Poi$ on $\H$,
and $\W\ni W\longmapsto H(W)$ a net of standard subspaces on the wedge regions
of the Minkowski spacetime $\RR^4$, which is $U$-covariant: 
\[
U(g)H(W) = H(gW) \ .
\]
The Bisognano-Wichmann property \cite{BW} asserts that the modular group
of $H(W)$ is related to the boost transformations $\Lambda_W$
preserving $W$: 
\begin{equation}\label{BW}
\Delta_W^{it} = U\big(\Lambda_W(-2\pi t)\big)\ .
\end{equation}
If $H(C)$ is cyclic for all cones $C$, then $J_W$ acts geometrically
as a reflection around the edge of the wedge, so there exists an
anti-unitary PCT operator 
\begin{equation}\label{BW1}
\Theta \equiv U(R_W)J_{W}\ ,
\end{equation}
where $R_W$ is the spatial $\pi$-rotation mapping $W$ onto $W$. 

In Quantum Field Theory,
the Bisognano-Wichmann property pertains to the standard subspaces
$\overline{\A(W)_{\rm s.a.}\Omega}$ where $\A(W)_{\rm s.a.}$ is the
selfadjoint part of the von Neumann algebra of local observables in a
wedge. It was established model-independently for large classes of
quantum field theories, cf.\ Sect.\ \ref{comm:BW}. Because the
modular group is characterized by the KMS property, its physical
meaning is that the vacuum state is a KMS state for the boost
subgroup, when restricted to the algebra of a Rindler wedge; in other words, the restriction of the vacuum state is a thermal state for the geodesic observer on the Rindler spacetime.
By this feature it is closely
related to the Hawking-Unruh effect. We therefore believe the
Bisognano-Wichmann property to be of a most fundamental character, and
refer to the final comment \ref{comm:BW} and to \cite{H} for a discussion of this important point.
\smallskip

\noindent
{\it Modular localization.} 
The paper \cite{BGL} provided a canonical construction of a local net
$H_U$ of standard subspaces on the wedge regions of the Minkowski
spacetime $\RR^4$ associated with any unitary, positive energy,
representation $U$ of the Poincar\'e group (with anti-unitary PCT
operator $\Theta$). One \emph{defines} $\Delta_W$ and $J_W$ by the
equations (\ref{BW}), (\ref{BW1}), then sets $S_W \equiv J_W\Delta_W^{1/2}$ and
\begin{equation}\label{HUO}
H_U(W)\equiv \big\{\xi\in\H: S_W\xi =\xi\big\}\ .
\end{equation}
Isotony of the assignment $W\longmapsto H_U(W)$ 
  (i.e., $H_U(W_1)\subset H_U(W_2)$ whenever $W_1\subset W_2$) follows
  from positivity of the energy. Moreover $H_U$ is local (or twisted-local if we consider representations of the cover of the Poincar\'e group), indeed $H_U$ is wedge dual: $H_U(W')= H_U(W)'$.
  
This construction is
  intrinsic, depending only on the representation $U$ without
  reference to a quantum field.
By construction, $H_U$ satisfies the Bisognano-Wichmann property.

Notice that any net $W\longmapsto H(W)$ on wedges defines closed, real
linear subspaces associated with any region $X$ that is contained in
some wedge:
\begin{equation}\label{HO}
H(X) \equiv \bigcap_{\W\ni W\supset X}H(W) \ .
\end{equation}
Obviously, these definitions respect isotony ($H(X_1)\subset H(X_2)$
whenever $X_1\subset X_2$), and locality.

If $H(O)$ is cyclic
for double cones $O$, then $H(W)$ defined by additivity from the
double cones coincides with the original $H(W)$ (assuming the Bisognano-Wichmann property).

A general result \cite{BGL} shows furthermore that $H_U(C)$ defined as in
\eqref{HO} from the canonical net \eqref{HUO} is standard for spacelike
cones $C\in\C$, for every representation $U$.   

If $U$ is a representation with finite spin/helicity, 
then the modular
localization subspace $H_U(X)$ as in \eqref{HO} agrees with the standard
subspace $H_\Phi(O)$ defined by the free field one-particle
construction recalled above,
therefore $H_U(O)$ is standard for any
double cone $O$, and in this case we explicitly see how the space
$H(X)$ of particle states localized in a bounded region $X$ is cyclic. 

We should also comment that  the paper \cite{BGL} deals with the bosonic case (true representation of $\Poi$), however
the fermionic case can be treated analogously with usual modifications (and quantization on the anti-symmetric Fock space).
\smallskip

\noindent
{\it Infinite spin particles cannot be localized in bounded regions.}
As recalled, in Wigner's classification of unitary, positive energy,
irreducible representations of the Poincar\'e group \cite{W}, massless
representations fit in two classes, the ones with finite spin
(helicity) and the ones with infinite spin, according to the
representations of the ``little group'', the double cover $\tilde
E(2)$ of the Euclidean group of the plane $E(2)$. 

Let $U$ be a massless representation with infinite spin; the space
$H_U(C)$ was shown to be standard (cyclic) for spacelike cones but it remained open whether there are non-zero vectors localized in bounded regions \cite{GL}. Generalized (stringlike) Wightman fields associated with $U$ were later constructed \cite{MSY}, but the above localization problem remained unsettled.

We shall show here that $H_U(X)$ is trivial if $X$ is bounded, say
$X=O$ a double cone, namely
\[
H_U(O) \equiv \bigcap_{\W\ni W\supset O}H_U(W) = \{0\} \ .
\]
\smallskip

\noindent
{\it Quantum Field Theory, I.} 
An immediate consequence is that the free field net $\A$ of local von Neumann algebras
associated with a representation $U$ of $\Poi$ with infinite spin is
well defined, the vacuum vector is cyclic for $\A(C)$ if $C$ is a
spacelike cone, but $\A(O)=\CC\cdot\eins$ if $O$ is a double cone: there is no non-trivial observable localized in a bounded spacetime region.

It also follows that there are no compactly localized observables on
the same Hilbert space that are relatively local w.r.t.\ the infinite
spin free field net. The absence of such observables was recently also
demonstrated within an explicit field theoretic ansatz \cite{K}.

%This conforms with the findings of a recent field theoretic ansatz \cite{K}.  

An important more general corollary is that, if $\B$ is any (Fermi-)local net of
von Neumann algebras on a Hilbert space, covariant under a unitary
positive energy representation $U$ of the Poincar\'e
group, with the vacuum vector being cyclic (Reeh-Schlieder property)
for double cone algebras, then no infinite spin representation can appear in the irreducible direct integral decomposition of $U$ (up to measure zero), provided that $\B$ satisfies the fundamental Bisognano-Wichmann property \cite{BW}. 

This shows why infinite spin particles do not appear in a theory of
local observables. They are however compatible with stringlike
localization. At this point it is worth mentioning that localization
in spacelike cones is natural in Quantum Field Theory, indeed massive
charges may always be localized in spacelike cones
\cite{BF}. Low-dimensional non-trivial models with trivial local
algebras are exhibited in \cite{LL}.

\smallskip

\noindent
{\it Strategy of proof.} 
Let $U$ be a unitary, massless irreducible representation of $\Poi$.
The starting point is the observation that $U$ is dilation covariant if and only if it has finite spin. 
Assuming $H_U(O)$ to be standard for double cones $O$, we infer by the
Huygens principle that 
$H_U(V_+)$ is standard, where $V_+$ is the forward light cone. By standard subspace analysis, in particular by using an analogue of Borchers' theorem \cite{B,L}, $\Delta_{H_U(V_+)}$ has dilation commutation relations with $U$. So $U$ must have finite spin.
\smallskip

\noindent
{\it Extensions of results.} 
Our results hold in any space dimension $s\geq 2$. As is known, if $s$
is even the Huygens principle doesn't hold and we need to work with a
corresponding property of the wave equation that we haven't found in
the literature. The case $s=2$ is peculiar as the infinite spin
representations are not ``infinite'', namely they are associated with
one-dimensional representations of the little group. The Fermi case,
namely representations of a cover of $\Poi$, is also studied. We treat
the case $s=3$ (the physical Minkowski spacetime) in detail and add a further section with the necessary analysis in different spacetime dimensions.
\smallskip

\noindent
{\it Quantum Field Theory, II.} 
For interacting theories satisfying the
Bisognano-Wich\-mann property, we show that the subspace
  $\overline{\B(O)\Omega}$ (independent of the double cone $O$) cannot carry an
  infinite spin representation. Thus, if the theory possesses infinite
spin particles, then the vacuum vector cannot be cyclic for $\B(O)$,
where $\B$ is the local net of von Neumann algebras describing our theory,
i.e., the infinite spin particles states cannot be generated from the
vacuum by operations in bounded spacetime regions. 

Indeed, no infinite spin particle state can be obtained by adding to $\B$ a finite charge 
localized in a bounded spacetime region. In other words, no infinite spin representation 
can appear in the irreducible disintegration of the covariance unitary 
representation of DHR sectors of $\B$ (with finite statistics).

We emphasize that the Bisognano-Wichmann property is essential in the
argument, by providing a counter-example without this
property, in which free infinite spin particles exist with cyclic double cone
algebras. 
 
Thus, at least one (artificial) way to
accomodate New Physics involving observable infinite spin particles
would consist in relaxing the Bisognano-Wichmann property -- in spite of its
very fundamental nature. More interesting is the picture (Sect.\ \ref{fieldstr}) that we obtain
when we start with a (compactly) local observable net;
we have a field algebra net that generates a non-trivial but non-cyclic subspace,
an \emph{interacting} theory with infinite spin particles; this
structure exactly complies with the picture envisaged in \cite{Sch}.

\section{Standard subspaces}
We begin by recalling some definitions and results on standard
subspaces and their modular structures. Further details can be found
in \cite{L,LN}. 

A linear, real, closed subspace $H$ of a complex Hilbert space $\H$ is called \emph{cyclic} if 
$H+iH$ is dense in $\H$, \emph{separating} if $H\cap iH=\{0\}$ and 
\emph{standard} if it is cyclic and separating.

Given a standard subspace $H$ one defines the Tomita operator $S_H$, the closed, anti-linear involution
with domain $H + iH$, given by $S_H : \xi + i\eta \mapsto \xi + i\eta$, $\xi,\eta\in H$. The polar decomposition $S_H = J_H\Delta_H^{1/2}$ defines the positive selfadjoint \emph{modular operator} $\Delta_H$ and the anti-unitary
\emph{modular conjugation} $J_H$. $\Delta_H$ is invertible and $J_H\Delta_H J_H=\Delta_H^{-1}$.

Pairs $(J,\Delta)$, where $J$ is an anti-unitary involution and $\Delta$ a selfadjoint positive invertible operator s.t.\ $J\Delta J=\Delta^{-1}$ are in 1-1 correspondence with closed, anti-linear, densely defined involutions $S=J\Delta^{1/2}$ and in 1-1 correspondence with standard subspaces $H=\Ker(S - 1)$.

If $H$ is a closed, real linear subspace of $\H$, the \emph{symplectic complement} of $H$ is defined by
\[
H' \equiv \{\xi\in\H\ ;\ \Im(\xi,\eta)=0\ \forall \eta\in H\} = (iH)^{\bot_\R}\ ,
\]
where $\bot_\R$ denotes the orthogonal in $\H$ viewed as a real Hilbert space with respect to the real part of the scalar product.
$H'$ is a closed, real linear subspace of $\H$ and $H = H''$. 

$H$ is cyclic (separating) iff $H'$ is separating (cyclic), thus $H$ is standard iff $H'$ is standard and we have
\[
S_{H'} = S^*_H \ .
\]
The fundamental properties of the modular operator and conjugation are
\begin{equation*}
\Delta_H^{it}H = H, \quad J_H H = H' \ ,\qquad  t\in\RR\ ,
\end{equation*}
and $t\mapsto \Delta_H^{it}$ is called the one-parameter unitary
\emph{modular group} of $H$ (cf.\ \cite{T,RV}). 

Let $H$ be a real linear subspace of $\H$ and $V$  a one-parameter
group of unitaries on $\H$ such that $V(t)H = H,\ t\in\RR$. $V$
satisfies the \emph{KMS condition} with inverse temperature $\beta>0$ on $H$ if, for every given $\xi,\eta\in H$, there exists a function $F$, analytic on the strip $\{z\in\CC: 0<\Im z<1\}$, bounded and continuous on its closure,  such that:
\begin{gather*}
F(t)=\langle \eta, V (t)\xi\rangle\ ,\qquad t\in\R\ , \\
F(t+i\beta)=\langle V(t)\xi,\eta\rangle\ ,\qquad t\in\R\ .
\end{gather*}
Since the uniform limit of holomorphic functions is holomorphic, it follows that if the KMS condition holds on $H$, then it holds on the closure $\overline{H}$ of $H$.
\begin{lemma}{\rm \cite{L,LN}.}
If $H\subset\H$ is a standard subspace, then $t\mapsto\Delta_H^{-it}$ satisfies the KMS condition at inverse temperature 1.

Conversely, if $H$ is a closed, real linear, cyclic subspace of $\H$
and $V$ a one-parameter unitary group on $\H$ with $V(t)H = H, \ t\in\RR$, satisfying the KMS condition on $H$ at inverse temperature 1, then $H$ is standard and $V(t)=\Delta^{-it}_H$.
\end{lemma}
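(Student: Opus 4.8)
The plan is to prove both implications by combining the spectral calculus for $\Delta_H$ with an analytic-continuation argument in the strip $\{0\le \Im z\le 1\}$; the only genuinely non-routine point will be the reconstruction of the modular data in the converse direction.

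For the direct implication, fix $\xi,\eta\in H$. Since $H=\Ker(S_H-1)\subset\mathrm{dom}(S_H)=\mathrm{dom}(\Delta_H^{1/2})$, both vectors lie in $\mathrm{dom}(\Delta_H^{1/2})$ and satisfy $\Delta_H^{1/2}\xi=J_H\xi$, $\Delta_H^{1/2}\eta=J_H\eta$. I would introduce, for $w$ in the strip $0\le\mathrm{Re}\,w\le 1$, the function $\tilde F(w):=\langle \Delta_H^{\bar w/2}\eta,\Delta_H^{w/2}\xi\rangle$. Writing everything through the spectral resolution $\Delta_H=\int_0^\infty\lambda\,dE(\lambda)$, the elementary bound $\lambda^{\mathrm{Re}\,u}\le 1+\lambda$ for $0\le\mathrm{Re}\,u\le 1/2$ shows that $u\mapsto\Delta_H^{u}\zeta$ is a bounded, analytic $\H$-valued map on $0<\mathrm{Re}\,u<1/2$ for any $\zeta\in\mathrm{dom}(\Delta_H^{1/2})$; hence $\tilde F$ is analytic on the open strip and bounded and continuous on its closure. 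On the lower boundary $w=-it$ one reads off $\tilde F(-it)=\langle\eta,\Delta_H^{-it}\xi\rangle$, so $F(t):=\tilde F(-it)$ has the prescribed values. On the upper boundary $w=1-it$, inserting $\Delta_H^{1/2}\xi=J_H\xi$, $\Delta_H^{1/2}\eta=J_H\eta$ and the commutation $J_H\Delta_H^{it}=\Delta_H^{it}J_H$ (a consequence of $J_H\Delta_H J_H=\Delta_H^{-1}$ and $J_H^2=1$), the antiunitarity of $J_H$ turns $\tilde F(1-it)$ into $\langle\Delta_H^{-it}\xi,\eta\rangle$. This is precisely the KMS boundary condition at inverse temperature $1$ for $V(t)=\Delta_H^{-it}$, which is what the direct implication asserts.

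For the converse, I would first apply Stone's theorem to obtain a positive, self-adjoint, invertible operator $\Delta$ with $V(t)=\Delta^{-it}$; the invariance $V(t)H=H$ then reads $\Delta^{it}H=H$. Next, testing the KMS condition with $\eta=\xi\in H$ makes $z\mapsto\langle\xi,\Delta^{-iz}\xi\rangle$ the analytic continuation of the Laplace--Mellin transform of the \emph{positive} spectral measure $\langle\xi,E(\cdot)\xi\rangle$; continuity up to the edge $\Im z=1$ forces (by monotone convergence) the moment $\int_0^\infty\lambda\,d\langle\xi,E(\lambda)\xi\rangle<\infty$, i.e.\ $H\subset\mathrm{dom}(\Delta^{1/2})$. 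With this domain information the symmetric splitting of the direct part is available, and the function $\tilde F(w)=\langle \Delta^{\bar w/2}\eta,\Delta^{w/2}\xi\rangle$ agrees, by uniqueness of bounded analytic continuation on the strip, with the KMS-given continuation of $\langle\eta,\Delta^{-it}\xi\rangle$. Comparing, at $t=0$, the value $\tilde F(1)=\langle\Delta^{1/2}\eta,\Delta^{1/2}\xi\rangle$ with the KMS boundary value $\langle\xi,\eta\rangle$ yields the key identity
\[
\langle \Delta^{1/2}\eta,\Delta^{1/2}\xi\rangle=\langle\xi,\eta\rangle,\qquad \xi,\eta\in H.
\]

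It remains to manufacture the modular conjugation and to identify $H$, and this is where the main difficulty lies. The displayed identity says that $\xi\mapsto\Delta^{1/2}\xi$ is a real-linear, inner-product-reversing isometry of $H$; a direct norm computation then shows that $\xi+i\eta\mapsto\Delta^{1/2}\xi-i\Delta^{1/2}\eta$ (with $\xi,\eta\in H$) is well defined and isometric on the dense domain $H+iH$, hence extends to an antilinear isometry $J$ on $\H$ with $J\zeta=\Delta^{1/2}\zeta$ for $\zeta\in H$. I would then set $S:=J\Delta^{1/2}$ and verify, using $\Delta^{it}H=H$ together with the identity above, that $J^2=1$ (so that $J$ is antiunitary), $J\Delta J=\Delta^{-1}$, and $S\zeta=\zeta$ for all $\zeta\in H$; thus $S$ is a Tomita operator with polar decomposition $J\Delta^{1/2}$ and $H\subseteq\Ker(S-1)$. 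The hard part is to upgrade this inclusion to an equality: here one invokes that $H$ is cyclic, whence $H'$ is separating, and computes $\Ker(S-1)$ against $H'$ (equivalently, runs the same construction symmetrically) to force $\Ker(S-1)=H$. Once $H=\Ker(S-1)$ is established, $H$ is standard, its modular operator is exactly $\Delta$, and therefore $\Delta_H^{-it}=\Delta^{-it}=V(t)$. I expect the domain bookkeeping and the uniform bound on the strip (via Phragm\'en--Lindel\"of) to be routine, and the genuine obstacle to be this final identification of $H$ with $\Ker(S-1)$.
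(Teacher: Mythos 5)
The paper itself does not prove this lemma (it is quoted from \cite{L,LN}), so your proposal has to stand on its own. The direct implication is correct and complete: with $\xi,\eta\in H\subset\mathrm{dom}(\Delta_H^{1/2})$ and $\Delta_H^{1/2}\xi=J_H\xi$, your function $\tilde F(w)=\langle\Delta_H^{\bar w/2}\eta,\Delta_H^{w/2}\xi\rangle$ is analytic on the open strip, bounded and continuous on its closure, and the two boundary identities you compute are exactly the KMS boundary values. In the converse, the architecture (Stone's theorem, the moment bound $H\subset\mathrm{dom}(\Delta^{1/2})$, the identity $\langle\Delta^{1/2}\eta,\Delta^{1/2}\xi\rangle=\langle\xi,\eta\rangle$, the norm computation making $\xi+i\eta\mapsto\Delta^{1/2}\xi-i\Delta^{1/2}\eta$ a well-defined isometry $J$, and $H\subseteq\Ker(S-1)$ for $S=J\Delta^{1/2}$) is sound and completable; in particular $J\Delta^{it}=\Delta^{it}J$ on the dense domain, surjectivity of $J$, $J\Delta J=\Delta^{-1}$ and $J^2=\eins$ all follow from $\Delta^{it}H=H$ and the identity, as you assert. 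One caveat: the moment step is dispatched too quickly. ``Continuity up to the edge plus monotone convergence'' is not by itself a proof that a positive measure whose Fourier transform continues boundedly to the strip has the exponential moment; one needs a regularization, e.g.\ $\Phi_\delta(z)=\int e^{-izx}e^{-\delta x^2}\,d\nu(x)$ is entire, coincides on $\RR$ with the convolution of the KMS function $F$ with a Gaussian, hence by Schwarz reflection is bounded on the strip by $\sup|F|$; positivity of $\nu$ and monotone convergence in $\delta\downarrow0$ and then $a\uparrow1$ give $\int e^{x}\,d\nu\le\sup|F|$. This is standard and fillable, but it is the place where positivity actually enters.

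The genuine gap is the final step, which you yourself flag as the crux: upgrading $H\subseteq\Ker(S-1)$ to equality. The route you sketch --- pass to $H'$, use that it is separating, and ``run the same construction symmetrically'' --- does not work as stated, because the symmetric construction for $H'$ would require the KMS condition for $t\mapsto V(-t)$ on $H'$, which is not among the hypotheses; deriving this dual KMS property is essentially equivalent to the theorem being proved (in the von Neumann algebra setting it is a \emph{consequence} of modular theory, not an input). A correct completion uses precisely the two hypotheses not yet exhausted, cyclicity and $\Delta^{it}$-invariance: show that $D=H+iH$ (a dense, $\Delta^{it}$-invariant complex subspace of $\mathrm{dom}(\Delta^{1/2})$) is a core for $\Delta^{1/2}$, hence for $S$, since their graph norms coincide. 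Indeed, smearing vectors of $D$ with Schwartz functions whose Fourier transforms have compact support produces, via graph-norm convergent Riemann sums, vectors in the graph closure of $D$ with compact spectral support for $\log\Delta$, and such vectors approximate every element of $\mathrm{dom}(\Delta^{1/2})$ in graph norm. Once $D$ is a core, take $\zeta\in\Ker(S-1)$ and $\zeta_n=\xi_n+i\eta_n\in D$ with $\zeta_n\to\zeta$ and $S\zeta_n\to\zeta$; then $\xi_n=\frac12(\zeta_n+S\zeta_n)\in H$ converges to $\zeta$, and $H$ is closed, so $\zeta\in H$. (Alternatively one could invoke Lemma \ref{inc} with $K:=\Ker(S-1)$ as the standard subspace and $H$ as the cyclic, $\Delta^{it}$-invariant subspace, but in \cite{L,LN} that lemma is itself deduced from the present KMS characterization, so a self-contained argument should use the core method.) With this replacement your proof becomes complete, and the identification $H$ standard, $\Delta_H=\Delta$, $V(t)=\Delta_H^{-it}$ follows as you say.
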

\noindent
The following lemma is a consequence of the KMS condition for the modular group.
\begin{lemma}\label{inc}{\rm \cite{L,LN}.}
Let $H\subset \H$ be a standard subspace, and $K\subset H$ a closed, real linear subspace of $K$. 

If $\Delta_H^{it}K=K$, $\forall t\in\R$, then $K$ is a standard subspace of $\K\equiv \overline{K+iK}$ and $\Delta_H|_K$ is the modular operator of $K$ on $\K$. 
If moreover $K$ is a cyclic subspace of $\H$, then $H=K$.
\end{lemma}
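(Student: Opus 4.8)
The plan is to transfer the KMS characterization of the modular group (the preceding lemma) from $H$ to the subspace $K$, and then, in the cyclic case, to identify the two Tomita operators directly.

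First I would check that $\Delta_H^{it}$ restricts to a one-parameter unitary group on $\K=\overline{K+iK}$. Since $\Delta_H^{it}$ is complex linear it commutes with multiplication by $i$, so the invariance $\Delta_H^{it}K=K$ gives $\Delta_H^{it}(K+iK)=K+iK$, and by continuity $\Delta_H^{it}\K=\K$. Hence $V(t):=\Delta_H^{-it}|_\K$ is a one-parameter unitary group on $\K$ with $V(t)K=K$, and $K$ is cyclic in $\K$ by definition. By the forward direction of the preceding lemma, $t\mapsto\Delta_H^{-it}$ satisfies the KMS condition at inverse temperature $1$ on $H$; since the defining analytic functions exist for every pair of vectors in $H$, they exist in particular for every pair in $K\subset H$, so the KMS condition holds on $K$ as well.

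Now I would apply the converse direction of the preceding lemma, taking $\K$ as the ambient Hilbert space, $K$ as the cyclic real subspace, and $V$ as the unitary group: it follows that $K$ is standard in $\K$ and that $\Delta_K^{-it}=V(t)=\Delta_H^{-it}|_\K$ for all $t$, i.e. $\Delta_K=\Delta_H|_\K$. This is exactly the first assertion. For the last statement, assume $K$ is cyclic in $\H$, so that $\K=\H$ and the previous step gives $\Delta_K=\Delta_H=:\Delta$ as operators on $\H$ (equality of the unitary groups $\Delta_K^{it}=\Delta_H^{it}$ for all $t$ determines the generator, hence $\Delta$, uniquely by Stone's theorem). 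Because $K\subset H$, the domains satisfy $K+iK\subset H+iH$ and the two Tomita operators agree there, so $S_K\subset S_H$. But $\mathrm{dom}(S_K)=\mathrm{dom}(\Delta_K^{1/2})=\mathrm{dom}(\Delta^{1/2})=\mathrm{dom}(\Delta_H^{1/2})=\mathrm{dom}(S_H)$; two closed operators with the same domain, one extending the other, must coincide, so $S_K=S_H$. Therefore $K=\Ker(S_K-1)=\Ker(S_H-1)=H$.

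I expect the only genuinely delicate point to be the transfer of the KMS condition in the first step: one must be sure that $\Delta_H^{it}$ really leaves $\K$ invariant, so that its restriction is a bona fide unitary group on $\K$, and that the analytic KMS functions attached to pairs in $H$ serve without modification for pairs in $K$. Both reduce to the complex linearity of $\Delta_H^{it}$ and to the inclusion $K\subset H$. The subsequent identification $\Delta_K=\Delta_H$ via Stone's theorem and the resulting equality of the domains of the Tomita operators are then routine, and the conclusion $K=H$ follows at once.
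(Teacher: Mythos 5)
Your proof is correct and follows exactly the route the paper indicates: the paper presents this lemma as ``a consequence of the KMS condition for the modular group'' (citing \cite{L,LN}), which is precisely your argument of transferring the KMS condition from $H$ to $K\subset H$ and invoking the converse (uniqueness) direction of the preceding lemma on $\K=\overline{K+iK}$. Your handling of the cyclic case --- obtaining $\Delta_K=\Delta_H$, noting $S_K\subset S_H$, and concluding $S_K=S_H$ from the equality of domains $\mathrm{dom}(\Delta_K^{1/2})=\mathrm{dom}(\Delta_H^{1/2})$ --- is likewise the standard argument in the cited references.
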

\noindent
We shall also need the following basic lemma.
\begin{lemma}\label{inc2}{\rm \cite{L,LN}.}
Let $H\subset \H$ be a standard subspace, and $U$ a unitary on $\H$ such that $UH=H$. Then $U$ commutes with $\Delta_H$ and $J_H$.
\end{lemma}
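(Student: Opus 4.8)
The plan is to reduce everything to the single observation that $U$ commutes with the Tomita operator $S_H$, and then to read off the commutation with $\Delta_H$ and $J_H$ from the uniqueness of the polar decomposition $S_H = J_H\Delta_H^{1/2}$.

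First I would verify $U S_H U^{-1} = S_H$. Since $U$ is complex-linear and $UH = H$, it maps the domain $H+iH$ of $S_H$ onto itself, as $U(H+iH) = UH + iUH = H+iH$. For $\xi,\eta\in H$ we have $U\xi, U\eta\in H$, so that
\[
S_H U(\xi + i\eta) = S_H(U\xi + iU\eta) = U\xi - iU\eta = U(\xi - i\eta) = U S_H(\xi + i\eta) \ ,
\]
which gives $S_H U = U S_H$ on the common domain, i.e. $U S_H U^{-1} = S_H$.

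Next I would conjugate the polar decomposition by $U$, obtaining
\[
S_H = U S_H U^{-1} = \big(U J_H U^{-1}\big)\big(U \Delta_H^{1/2} U^{-1}\big) \ .
\]
Because $U$ is unitary, $U J_H U^{-1}$ is again anti-unitary and $U\Delta_H^{1/2} U^{-1}$ is again positive selfadjoint, so the right-hand side is a polar decomposition of $S_H$. Uniqueness of the polar decomposition then forces $U J_H U^{-1} = J_H$ and $U\Delta_H^{1/2} U^{-1} = \Delta_H^{1/2}$, and squaring the latter yields $U\Delta_H U^{-1} = \Delta_H$; this is precisely the claim.

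The only point needing care is the first step: one must check that $U$ genuinely preserves the (generally non-closed) domain $H+iH$ and correctly intertwines the anti-linear involution, which is immediate from $UH=H$ and the linearity of $U$. If one prefers to avoid invoking uniqueness of the polar decomposition for the anti-linear operator $S_H$, the same conclusions follow by first noting $U S_H^* U^{-1} = S_H^*$ (obtained by taking adjoints in $U S_H = S_H U$), then $U\Delta_H U^{-1} = U S_H^* S_H U^{-1} = \Delta_H$, and finally $U J_H U^{-1} = U S_H \Delta_H^{-1/2} U^{-1} = J_H$.
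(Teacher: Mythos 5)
Your proof is correct, and it is the standard argument for this fact: the paper itself states Lemma \ref{inc2} without proof (citing \cite{L,LN}), and the proof given in those references is precisely your route --- $UH=H$ implies $US_HU^{-1}=S_H$ on the domain $H+iH$, and then uniqueness of the polar decomposition (or, equivalently, your fallback via $S_H^*S_H=\Delta_H$) yields commutation with $\Delta_H$ and $J_H$. Nothing is missing; your care about the anti-linearity of $S_H$ and the domain $H+iH$ is exactly the right point to check.
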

\noindent
The following is the one-particle analogue of Borchers' theorem \cite{B}.
\begin{theorem}{\rm \cite{L,LN}.}\label{Borch}
Let $H\subset\H$ be a standard subspace, and $U$ a one-parameter unitary group on $\H$ with positive generator, such that $U(t)H\subset H$, $t\geq 0$. Then $\Delta_H^{is}U(t)\Delta_H^{-is}= U(e^{-2\pi s}t)$.
\end{theorem}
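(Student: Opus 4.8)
The plan is to reduce the assertion to a statement about the joint analytic structure of the two one-parameter groups $\Delta_H^{is}$ and $U(t)$, and then to extract the affine commutation relation from a complex-analytic uniqueness argument. First I would record the operator consequence of the semigroup inclusion: for $\xi\in H$ and $t\ge 0$ one has $U(t)\xi\in H$, hence $S_H U(t)\xi = U(t)\xi$, i.e.\ $\Delta_H^{1/2}U(t)\xi = J_H U(t)\xi$. This is the boundary identity that ties $U$ to the modular data. I would also note that the hypothesis is \emph{stable under the modular flow}: since $\Delta_H^{is}H=H$, the conjugated groups $U_s(t):=\Delta_H^{is}U(t)\Delta_H^{-is}$ again have positive generator $\Delta_H^{is}P\Delta_H^{-is}$ and satisfy $U_s(t)H\subseteq H$ for $t\ge 0$. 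The theorem is then the statement that this modular action on admissible translation groups is precisely the rescaling $P\mapsto e^{-2\pi s}P$.

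The heart of the proof is a one-variable analytic function argument. For vectors $\xi,\eta$ in a common core of analytic vectors for the generator $P$ of $U$ and for the modular Hamiltonian $-\tfrac1{2\pi}\log\Delta_H$, I would study the scalar function $f(s)=\langle\eta,\Delta_H^{is}U(t)\Delta_H^{-is}\xi\rangle$ (with $t\ge 0$ a parameter). Two independent sources of analyticity are available: positivity of the generator of $U$, which makes $z\mapsto U(z)\xi=e^{izP}\xi$ holomorphic and bounded on the closed upper half-plane; and the KMS property of the modular group (the KMS Lemma above), which continues $s\mapsto\Delta_H^{is}\zeta$ into a strip with a prescribed boundary relation. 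Combining these, $f$ extends holomorphically to a suitable strip and is bounded there, while the boundary identity $\Delta_H^{1/2}U(t)\xi=J_HU(t)\xi$ (together with $J_H\Delta_H J_H=\Delta_H^{-1}$) forces its boundary values to coincide with those of $s\mapsto\langle\eta,U(e^{-2\pi s}t)\xi\rangle$. A three-lines / Phragm\'en--Lindel\"of (equivalently edge-of-the-wedge) argument then identifies the two holomorphic functions on the whole strip, and in particular on the real axis, giving $\langle\eta,\Delta_H^{is}U(t)\Delta_H^{-is}\xi\rangle=\langle\eta,U(e^{-2\pi s}t)\xi\rangle$. By density of the core this yields the desired operator identity.

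The main obstacle is the analytic and domain bookkeeping forced by the unboundedness of $\Delta_H^{1/2}$: one must produce a dense set of vectors that are simultaneously well behaved for $P$ and for $\Delta_H^{it}$, verify that $U(t)\xi$ remains in the domain of $\Delta_H^{1/2}$ along the continuation, and establish the uniform bounds in the strip needed to apply the Phragm\'en--Lindel\"of theorem; matching the two different analyticity regions (a half-plane coming from positive energy versus a period-$i$ strip coming from KMS) along their common boundary is the delicate point. Once the scaling relation is in hand, the companion relation $J_HU(t)J_H=U(-t)$ follows from the same boundary identity, completing the description of the representation of the affine group generated by $\Delta_H^{it}$ and $U$.
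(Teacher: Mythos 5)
You are comparing against a proof the paper does not actually contain: Theorem \ref{Borch} is quoted from \cite{L,LN} without proof, so the benchmark is the standard argument given there (the one-particle version of Borchers' theorem, in the streamlined form due to Florig). Your sketch assembles the correct toolkit --- the identity $S_HU(t)\xi=U(t)\xi$ for $\xi\in H$, $t\geq 0$, the dual inclusion $U(-t)H'\subset H'$, half-plane analyticity of $\zeta\mapsto U(\zeta)$ from positivity of the generator, strip analyticity from modular theory, and a uniqueness principle for holomorphic functions --- and these are indeed the ingredients of the cited proof. But the way you propose to combine them is circular at the decisive step.

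Concretely: your central function $f(s)=\langle\eta,\Delta_H^{is}U(t)\Delta_H^{-is}\xi\rangle$ does extend holomorphically, namely as $f(z)=\langle\Delta_H^{-i\bar z}\eta,\,U(t)\,\Delta_H^{-iz}\xi\rangle$ on $0<\Im z<\tfrac12$ for $\xi\in H\subset D(\Delta_H^{1/2})$, $\eta\in H'\subset D(\Delta_H^{-1/2})$; the domain issues you flag as the "main obstacle" are not actually the problem. The problem is that in $f$ the time variable stays real, so positivity of the generator is never used, the factor $e^{-2\pi s}$ never appears, and the boundary value at $\Im z=\tfrac12$ is $\overline{\langle\Delta_H^{-is}\eta,\,J_HU(t)J_H\,\Delta_H^{-is}\xi\rangle}$ --- an expression you cannot evaluate without already knowing $J_HU(t)J_H=U(-t)$, which is part of the theorem. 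Meanwhile your comparison function $g(s)=\langle\eta,U(e^{-2\pi s}t)\xi\rangle$ continues holomorphically to the \emph{opposite} strip $-\tfrac12<\Im z<0$, so "matching the two along their common boundary $\RR$" is literally the assertion $f=g$ on $\RR$, i.e.\ the theorem itself; no Phragm\'en--Lindel\"of or edge-of-the-wedge argument can manufacture that identity, and indeed the statement is false without positive energy, which your mechanism never invokes. The missing idea in the cited proof is to couple the modular parameter and the time parameter from the start: one studies $F(z)=\langle\Delta_H^{-i\bar z}\eta,\,U(e^{2\pi z}t)\,\Delta_H^{-iz}\xi\rangle$ on the closed strip $0\leq\Im z\leq\tfrac12$, where every factor is separately controlled ($e^{2\pi z}t$ lies in the closed upper half-plane, so $U(e^{2\pi z}t)$ is an analytic family of contractions by positive energy, while the modular factors are handled as above). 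The semigroup inclusions $U(r)H\subset H$ and $U(-r)H'\subset H'$, $r\geq 0$, are then what identify the boundary values (they guarantee that vectors such as $U(e^{2\pi s}t)\Delta_H^{-is}\xi$ lie again in $H$, hence in controllable domains, allowing further continuation), and a reflection/periodicity argument of Liouville type forces $F$ to be constant, which is exactly $\Delta_H^{is}U(e^{2\pi s}t)\Delta_H^{-is}=U(t)$, equivalent to the stated commutation relation. Without building the rescaled time $e^{2\pi z}t$ into the analytic function, your plan cannot get started.
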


\noindent
We now want to study the tensor product of standard subspaces.
Let $H$ and $K$ be standard subspaces  of the Hilbert spaces $\H$ and
$\K$ respectively, and $S_H,\,S_K$ the associated Tomita operators. Then
$S\equiv S_H\otimes S_K$
is a closed, densely defined anti-linear involution. Therefore $S=S_M$ where 
$M \equiv \big\{\xi\in {\rm Dom}(S) : S\xi=\xi\big\}$
is a standard subspace of $\H\otimes\K$. 

We define the \emph{tensor product} of $H$ and $K$ by $H\otimes K\equiv M$; in other words $H\otimes K$ is defined through the formula
\[
S_{H\otimes K}\equiv S_H\otimes S_K\ .
\]
\begin{proposition}\label{com}
If $H$ and $K$ are standard subspaces of $\H$ and $\K$ respectively, we have
\[
(H\otimes K)' = H'\otimes K'\ .
\]
\end{proposition}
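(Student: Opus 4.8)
The plan is to exploit the one-to-one correspondence between standard subspaces and their Tomita operators recorded above: since $(H\otimes K)'$ and $H'\otimes K'$ are both standard, it suffices to prove that they carry the same Tomita operator, i.e.\ that $S_{(H\otimes K)'} = S_{H'\otimes K'}$. I will use only two facts already at our disposal, namely the identity $S_{L'} = S_L^*$ valid for any standard subspace $L$, and the defining formula $S_{H\otimes K} = S_H\otimes S_K$. Granting these, the whole proposition is reduced to the factorization of the adjoint, $(S_H\otimes S_K)^* = S_H^*\otimes S_K^*$, after which
\[
S_{(H\otimes K)'} = (S_H\otimes S_K)^* = S_H^*\otimes S_K^* = S_{H'}\otimes S_{K'} = S_{H'\otimes K'}\ ,
\]
and uniqueness gives the claim.

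The cleanest way to verify the factorization of the adjoint is through the polar decomposition rather than by a direct computation with the $S$'s. Writing $S_H = J_H\Delta_H^{1/2}$ and $S_K = J_K\Delta_K^{1/2}$, one has
\[
S_H\otimes S_K = (J_H\otimes J_K)\,(\Delta_H^{1/2}\otimes\Delta_K^{1/2}) = (J_H\otimes J_K)\,(\Delta_H\otimes\Delta_K)^{1/2}\ ,
\]
where $J_H\otimes J_K$ is an anti-unitary involution and $\Delta_H\otimes\Delta_K$ is positive, selfadjoint and invertible. By uniqueness of the polar decomposition of the closed anti-linear involution $S_{H\otimes K}$ this identifies $J_{H\otimes K} = J_H\otimes J_K$ and $\Delta_{H\otimes K} = \Delta_H\otimes\Delta_K$. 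Recalling that for any standard $L$ one has $J_{L'} = J_L$ and $\Delta_{L'} = \Delta_L^{-1}$ (equivalently $S_L^* = J_L\Delta_L^{-1/2}$), I then compute
\[
S_{(H\otimes K)'} = J_{H\otimes K}\,\Delta_{H\otimes K}^{-1/2} = (J_H\otimes J_K)\,(\Delta_H^{-1/2}\otimes\Delta_K^{-1/2})\ ,
\]
while $S_{H'\otimes K'} = S_{H'}\otimes S_{K'} = (J_H\Delta_H^{-1/2})\otimes(J_K\Delta_K^{-1/2})$ produces exactly the same operator.

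The only genuine obstacle is the domain bookkeeping concealed in these manipulations with unbounded operators: for general closed unbounded $A,B$ one has merely $(A\otimes B)^*\supseteq A^*\otimes B^*$, so the factorization of the adjoint is not automatic and must not be asserted at the level of the $S$'s alone. The polar route circumvents this, because the only delicate factor is the positive selfadjoint $\Delta_H\otimes\Delta_K$, whose square root and inverse square root are controlled by the joint spectral calculus of $\Delta_H$ and $\Delta_K$, yielding $\Delta_H^{1/2}\otimes\Delta_K^{1/2} = (\Delta_H\otimes\Delta_K)^{1/2}$ with matching domains; the remaining anti-unitary factors $J_H\otimes J_K$ are bounded and hence harmless. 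I would therefore phrase the entire argument in terms of $J$ and $\Delta$, relying on the already-established fact that $S_H\otimes S_K$ is a closed anti-linear involution, so that uniqueness of the polar decomposition together with the bijection between Tomita operators and standard subspaces closes the proof.
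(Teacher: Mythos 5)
Your proof is correct and follows essentially the same route as the paper, whose entire proof is the chain $S_{(H\otimes K)'} = S_{H\otimes K}^* = \big(S_H\otimes S_K\big)^* = S_H^*\otimes S_K^* = S_{H'\otimes K'}$ declared ``immediate''. The only difference is that you explicitly justify the middle equality $\big(S_H\otimes S_K\big)^* = S_H^*\otimes S_K^*$ --- which, as you rightly note, is not automatic for unbounded operators, where only the inclusion $\supseteq$ is free --- by passing to the polar decomposition $S_{H\otimes K}=(J_H\otimes J_K)(\Delta_H\otimes\Delta_K)^{1/2}$ and invoking its uniqueness, a step the paper takes for granted.
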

\begin{proof}
Immediate from the equality 
\[
S_{(H\otimes K)'} = S_{H\otimes K}^* = \big(S_H\otimes S_K\big)^* = 
S_H^*\otimes S_K^* = S_{H'\otimes K'}\ .
\]
\end{proof}
With $H$, $K$ real linear subspaces of $\H$ and $\K$ respectively we denote by
$H\odot K$ the real linear span of $\{\xi\otimes \eta:\xi\in H,\,\eta\in K\}$.
\begin{proposition}\label{lem:b}
Let $H$ and $K$ be standard subspaces of $\H$ and $\K$. We have:
\[
H\otimes K=\overline{H\odot K}\ .
\]
\end{proposition}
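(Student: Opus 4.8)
The plan is to prove the non-trivial inclusion by means of Lemma~\ref{inc}, applied to the ambient standard subspace $M\equiv H\otimes K$ and the candidate subspace $N\equiv\overline{H\odot K}$. The easy half, $N\subseteq M$, I would dispose of first: for $\xi\in H$ and $\eta\in K$ one has $S_{H\otimes K}(\xi\otimes\eta)=S_H\xi\otimes S_K\eta=\xi\otimes\eta$, so every elementary tensor $\xi\otimes\eta$ lies in $M=\Ker(S_{H\otimes K}-1)$. Since $M$ is a closed real linear subspace, it contains the real span $H\odot K$ and hence its closure $N$.

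For the reverse inclusion $M\subseteq N$ I would verify the hypotheses of Lemma~\ref{inc} for $N\subseteq M$. The decisive point is the identification of the modular data of $M$. From $S_{H\otimes K}=S_H\otimes S_K=(J_H\Delta_H^{1/2})\otimes(J_K\Delta_K^{1/2})$ and the factorisation $(J_H\Delta_H^{1/2})\otimes(J_K\Delta_K^{1/2})=(J_H\otimes J_K)\,(\Delta_H^{1/2}\otimes\Delta_K^{1/2})$ on elementary tensors, together with the observations that $J_H\otimes J_K$ is an anti-unitary involution and $\Delta_H^{1/2}\otimes\Delta_K^{1/2}=(\Delta_H\otimes\Delta_K)^{1/2}$ is positive selfadjoint, the uniqueness of the polar decomposition gives $J_{H\otimes K}=J_H\otimes J_K$ and $\Delta_{H\otimes K}=\Delta_H\otimes\Delta_K$, whence $\Delta_M^{it}=\Delta_H^{it}\otimes\Delta_K^{it}$. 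Because $\Delta_H^{it}H=H$ and $\Delta_K^{it}K=K$, each $\Delta_M^{it}$ maps $\xi\otimes\eta\in H\odot K$ to $\Delta_H^{it}\xi\otimes\Delta_K^{it}\eta\in H\odot K$, so $\Delta_M^{it}(H\odot K)=H\odot K$ and, passing to the closure, $\Delta_M^{it}N=N$ for all $t\in\R$.

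It remains to check that $N$ is cyclic in $\H\otimes\K$. Here I would use that $H+iH$ is dense in $\H$ and $K+iK$ is dense in $\K$, so that their algebraic tensor product is dense in $\H\otimes\K$. Expanding $(\xi_1+i\xi_2)\otimes(\eta_1+i\eta_2)$ with $\xi_j\in H$, $\eta_j\in K$ into its four terms shows that this algebraic tensor product is contained in $(H\odot K)+i(H\odot K)\subseteq N+iN$; hence $N+iN$ is dense, i.e.\ $N$ is cyclic. With $N\subseteq M$ a closed real subspace satisfying $\Delta_M^{it}N=N$ and cyclic in $\H\otimes\K$, Lemma~\ref{inc} forces $M=N$, which is the assertion.

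The only genuinely delicate point is the modular identification in the second paragraph: one must ensure that the factorisation of $S_H\otimes S_K$ and the uniqueness of the polar decomposition are valid for these (generally unbounded) anti-linear operators, i.e.\ that the domains match. This is guaranteed by the fact, already recorded above, that $S_H\otimes S_K$ is a closed densely defined anti-linear involution; everything else is routine.
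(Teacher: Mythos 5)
Your proof is correct and follows essentially the same route as the paper's: both establish $H\odot K\subset H\otimes K$ via the Tomita operator, identify $\Delta_{H\otimes K}^{it}=\Delta_H^{it}\otimes\Delta_K^{it}$ so that the modular group preserves $\overline{H\odot K}$, verify cyclicity from the density of $(H+iH)\odot(K+iK)$, and conclude with Lemma~\ref{inc}. The only difference is that you make explicit the polar-decomposition argument identifying the modular data of $H\otimes K$, which the paper states without elaboration.
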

\begin{proof}
$\overline{H\odot K}$ is cyclic since $H\odot K+iH\odot K = (H+iH)\odot(K+iK)$, which is dense in $\H\otimes\K$.
Clearly $H\odot K$ is in the domain of $S_H\otimes S_K=S_{H\otimes K}$, thus
$H\odot K\subset H\otimes K$.
Now
$\Delta^{it}_{H\otimes K}=\Delta^{it}_H\otimes \Delta^{it}_K$ leaves globally invariant $H\odot K$, hence $\overline{H\odot K}$.
By Lemma \ref{inc} we conclude that $\overline{H\odot K}$ is equal to $H\otimes K$.
\end{proof}
By Prop.\ \ref{lem:b}, we may equivalently define the tensor product of the closed, real linear subspaces $H$ and $K$ of $\H$ and $\K$ by $H\otimes K \equiv \overline{H\odot K}$.

Given a family of real linear subspaces $H_a$ of $\H$, we shall denote by $\sum_a H_a$ the real linear span of the $H_a$'s.

\begin{lemma}\label{lem:1}
Let $\{H_a\}$ be a family of closed, real linear subspaces of $\H$. Then
\[
\Big(\bigcap_a H_a\Big)' = \overline{\sum_a H'_a }\ .
\]
\end{lemma}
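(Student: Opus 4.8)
The plan is to transport the statement to the standard orthogonal-complement duality in the real Hilbert space underlying $\H$. Recall from the definition that for any closed real linear subspace $H$ one has $H' = (iH)^{\bot_\R}$, the orthogonal complement of $iH$ in $\H$ regarded as a real Hilbert space with inner product $\Re(\cdot,\cdot)$. Multiplication by $i$ is a real-linear isometry of this real Hilbert space, hence a bijection carrying arbitrary intersections to intersections; in particular $i\big(\bigcap_a H_a\big) = \bigcap_a iH_a$. Thus the left-hand side becomes $\big(\bigcap_a iH_a\big)^{\bot_\R}$, and setting $M_a \equiv iH_a$ (so that $M_a^{\bot_\R} = H_a'$) reduces everything to the purely real-Hilbert-space identity
\[
\Big(\bigcap_a M_a\Big)^{\bot_\R} = \overline{\sum_a M_a^{\bot_\R}}
\]
for an arbitrary family of closed real subspaces $M_a$.

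For that identity I would argue by double inclusion using only elementary properties of orthogonal complements in a real Hilbert space: that $K\subset L$ implies $L^{\bot_\R}\subset K^{\bot_\R}$, that $K^{\bot_\R}$ is always closed, and that $K^{\bot_\R\bot_\R}=K$ for $K$ closed. The inclusion $\overline{\sum_a M_a^{\bot_\R}} \subseteq \big(\bigcap_a M_a\big)^{\bot_\R}$ is the easy half: since $\bigcap_b M_b \subset M_a$ for each $a$, we get $M_a^{\bot_\R} \subset \big(\bigcap_b M_b\big)^{\bot_\R}$, so the algebraic sum and its closure lie in the (closed) complement.

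For the reverse inclusion I would compute the complement of the right-hand side directly. Because the orthogonal complement of a set coincides with that of its closed real linear span, one has $\big(\overline{\sum_a M_a^{\bot_\R}}\big)^{\bot_\R} = \big(\sum_a M_a^{\bot_\R}\big)^{\bot_\R}$, and a vector is orthogonal to $\sum_a M_a^{\bot_\R}$ exactly when it is orthogonal to every $M_a^{\bot_\R}$, so this equals $\bigcap_a M_a^{\bot_\R\bot_\R} = \bigcap_a M_a$, using that each $M_a$ is closed. Taking the complement once more, and noting both sides are closed real subspaces, yields $\overline{\sum_a M_a^{\bot_\R}} = \big(\bigcap_a M_a\big)^{\bot_\R}$. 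Re-substituting $M_a = iH_a$ and $M_a^{\bot_\R}=H_a'$ gives the claim.

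The computation is entirely formal once the dictionary $H' = (iH)^{\bot_\R}$ is in place, so there is no deep obstacle; the only point demanding a little care is the infinite-family bookkeeping in the real Hilbert space — in particular that the closure on the right-hand side is genuinely necessary (the algebraic sum need not be closed) and that the identity $\big(\sum_a M_a^{\bot_\R}\big)^{\bot_\R} = \bigcap_a M_a$ relies on each $M_a$ being closed so that $M_a^{\bot_\R\bot_\R}=M_a$.
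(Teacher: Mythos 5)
Your proof is correct and takes essentially the same route as the paper: the paper's proof is precisely the chain $\big(\bigcap_a H_a\big)' = \big(i\bigcap_a H_a\big)^{\bot_\R} = \big(\bigcap_a iH_a\big)^{\bot_\R} = \overline{\sum_a (iH_a)^{\bot_\R}} = \overline{\sum_a H'_a}$, i.e.\ your dictionary $H'=(iH)^{\bot_\R}$ followed by the real-Hilbert-space de~Morgan identity for orthogonal complements. The only difference is that the paper cites that identity as standard, whereas you supply its (correct) double-inclusion/biduality proof.
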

\begin{proof}
We have
\[
\Big(\bigcap_a H_a\Big)'= \Big(i\bigcap_a H_a\Big)^{\bot_\R} = \Big(\bigcap_a iH_a\Big)^{\bot_\R} = \overline{\sum_a (iH_a)^{\bot_\R}} = \overline{\sum_a H'_a }\ .
\]
\end{proof}
\begin{lemma}\label{lem:2}
Let $\{H_a\}$ and $\{K_b\}$ be families of standard subspaces of $\H$ and $\K$ respectively,  and suppose both the intersections $H \equiv \bigcap_a H_a$ and 
$K\equiv \bigcap_b K_b$ to be cyclic. We have:
\[
H\otimes K = \bigcap_{a,b}\big(H_a\otimes K_b\big)\ .
\]
\end{lemma}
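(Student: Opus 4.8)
The plan is to pass to symplectic complements, where the intersection on the right-hand side turns into a closed real linear \emph{sum} and the tensor-product complement formula of Proposition~\ref{com} can be applied termwise. Before doing so I would verify that all the tensor products in sight are of \emph{standard} subspaces: each $H_a$ is separating, so $H=\bigcap_a H_a$ satisfies $H\cap iH\subset H_a\cap iH_a=\{0\}$ and is separating as well; together with the hypothesis that $H$ is cyclic this makes $H$ standard, and likewise $K$. Hence $H\otimes K$ is a genuine standard subspace to which Proposition~\ref{com} applies, exactly as it does to each $H_a\otimes K_b$. This is the only place where the cyclicity assumption is used.

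Write $M\equiv H\otimes K$ and $N\equiv\bigcap_{a,b}(H_a\otimes K_b)$. Both are closed real linear subspaces, so $M=M''$ and $N=N''$, and it therefore suffices to prove $M'=N'$. By Proposition~\ref{com}, $M'=H'\otimes K'$, while Lemma~\ref{lem:1} followed by Proposition~\ref{com} gives
\[
N'=\overline{\textstyle\sum_{a,b}(H_a\otimes K_b)'}=\overline{\textstyle\sum_{a,b}H_a'\otimes K_b'}\ .
\]
The inclusion $N'\subset M'$ is then immediate: from $H\subset H_a$ and $K\subset K_b$ one gets $H_a'\subset H'$ and $K_b'\subset K'$, hence $H_a'\otimes K_b'\subset H'\otimes K'$ for all $a,b$, and $H'\otimes K'$ is closed.

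The reverse inclusion $M'\subset N'$ is the heart of the matter. Here I would invoke Lemma~\ref{lem:1} once more to write $H'=\overline{\sum_a H_a'}$ and $K'=\overline{\sum_b K_b'}$, and recall (the remark after Proposition~\ref{lem:b}) that $M'=H'\otimes K'=\overline{H'\odot K'}$. It is then enough to show $\xi\otimes\eta\in N'$ for every $\xi\in H'$ and $\eta\in K'$. Approximating $\xi$ in norm by finite sums $\xi_n\in\sum_a H_a'$ and $\eta$ by finite sums $\eta_m\in\sum_b K_b'$, bilinearity places each $\xi_n\otimes\eta_m$ in $\sum_{a,b}H_a'\odot K_b'\subset\sum_{a,b}H_a'\otimes K_b'$, and the estimate
\[
\|\xi_n\otimes\eta_m-\xi\otimes\eta\|\le\|\xi_n-\xi\|\,\|\eta_m\|+\|\xi\|\,\|\eta_m-\eta\|
\]
shows $\xi_n\otimes\eta_m\to\xi\otimes\eta$, whence $\xi\otimes\eta\in N'$. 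The main obstacle is precisely this interchange of the two closures — namely that tensoring commutes with forming closed real linear sums — but it is settled by the displayed norm estimate together with the boundedness of the approximating net $\eta_m$. Combining the two inclusions yields $M'=N'$, and passing to complements once more gives $M=N$, as claimed.
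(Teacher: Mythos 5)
Your proof is correct and follows essentially the same route as the paper's: pass to symplectic complements, apply Lemma \ref{lem:1} and Proposition \ref{com}, and reduce everything to the identity $\overline{\sum_a H_a'}\otimes\overline{\sum_b K_b'}=\overline{\sum_{a,b}H_a'\otimes K_b'}$. The only difference is that you spell out, via the elementary-tensor norm estimate, the interchange of tensoring with closed real linear sums (and the standardness of $H$ and $K$), steps which the paper asserts without comment.
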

\begin{proof}
By Lemma \ref{lem:1} we have to show that 
\[
(H\otimes K)' = \overline{\sum_{a,b}\big(H_a\otimes K_b\big)'} \ .
\]
By Prop.\ \ref{com}, we have indeed:
\[
\big(H\otimes K\big)'=H'\otimes K'
= \overline{\sum_a H'_a }\otimes  \overline{\sum_b K'_b } = \overline{\sum_{a,b} H'_a \otimes  K'_b }=
\overline{\sum_{a,b}\big(H_a\otimes K_b\big)'}\ .
\]
\end{proof}
\section{Massless representations of the Poincar\'e group}\label{masslessrep}
For the benefit of the reader, we first deal within the case of the
four-dimensional spacetime, later extending our results to different dimensions.

If $G$ is a locally compact group, $H\subset G$ a closed
 subgroup, and $V$ a unitary representation of $H$, we denote by $\Ind_{H\uparrow G} V$ the unitary representation of $G$ induced by $V$.

The Poincar\'e group $\Poi$ is the semi-direct product
$\RR^4\rtimes \LL_+^\uparrow$ of the proper orthochronous Lorentz group $\LL_+^\uparrow$ 
and the translation group $\RR^4$, where $\LL_+^\uparrow$ acts naturally on $\RR^4$.

The universal cover $\tilde\LL_+^\uparrow$ of $\LL_+^\uparrow$ is a double cover, isomorphic to $\SLC$. Accordingly, the universal cover $\tPoi$ of $\Poi$ is isomorphic to $\RR^4\rtimes\SLC$. 

One can choose the covering map $\sigma: \SLC\to\LL_+^\uparrow$,
so that $\sigma$ maps the one-parameter subgroup $\a$ 
\begin{equation}\label{alpha}
 \alpha(t) = \left(\begin{array}{cc}e^{t/2} &0\\0&e^{-t/2}
\end{array}\right) ,\ \ t\in\RR\ ,
\end{equation}
to the one-parameter group of boosts in the $x_3$-direction,
and $\sigma$ restricts to the usual covering map $\SU(2)\to \SO(3)$. Explicitly, one identifies a vector $x=(x_0,x_1,x_2,x_3)\in\RR^4$ with the matrix $X_x=\bigl( \begin{smallmatrix}
x_0 + x_3 &x_1 - ix_2\\ x_1 + ix_2& x_0 - x_3
\end{smallmatrix} \bigr)$ and defines the Lorentz transformation
$\sigma(A)\in \LL_+^\uparrow$ acting on $x$ through $X_{\sigma(A)x} = AX_xA^*$,
$A\in\SLC$, see \cite{SW}. 
 
The translation group $\RR^4$ is thus also a normal subgroup of
$\tPoi$. According to the Mackey machine (see \cite{Z}),
if $U$ is an irreducible unitary representation of $\tPoi$, then $U$ is induced by an irreducible unitary representation $U_0$ of $\overline{\Stab}_p$:
\begin{equation}\label{UU}
U = \Ind_{\overline{\Stab}_p\uparrow \tPoi}U_0 \ ;
\end{equation}
here the momentum $p\in\RR^4$ is a point in the dual group of the
translations (i.e., a character), $\overline{\Stab}_p$ is the
stabilizer of $p$ for the action of $\tPoi$ on the characters given by
the adjoint action on their arguments, and $U_0|_{\RR^4}$ is the one-dimensional representation $p$.

Notice that $\tilde\LL_+^\uparrow$
acts naturally on $\RR^4$ and $\RR^4$ acts trivially on
itself, so one has
\[
\overline{\Stab}_p =  \RR^4 \rtimes \Stab_p \ ,
\]
where $\Stab_p\subset \tilde\LL_+^\uparrow$ is the stabiliser of
$p$ in $\tilde\LL_+^\uparrow$ acting naturally on $\RR^4$ (the
\emph{little group}). Points $p$ in the same $\LL_+^\uparrow$-orbit
give rise to equivalent representations. 

We are interested in a positive energy, massless representation $U$,
thus $p\in\partial V_+$ the boundary of the forward light cone. We assume $U$ is not the identity, thus $p\neq 0$ and we shall choose and fix $p=q$ with
\[
q\equiv(1,0,0,1)\in \partial V_+
\]  
($\partial V_+\setminus\{0\}$ is a $\LL_+^\uparrow$-orbit).

Then $\Stab_q$, the little group of $(1,0,0,1)$, is isomorphic to $\tilde E(2)$, the double cover of the Euclidean group of the plane $E(2)$:
\begin{equation}\label{E2}
\Stab_q =\left\{ \left(\begin{array}{cc} u&z\\
0& \bar u
\end{array}\right):u, z\in\CC, \, |u| = 1\right\} \ .
\end{equation}

The irreducible representation $U_0$ of $\overline{\Stab}_p$ in \eqref{UU} has the form
\begin{equation}\label{U0}
U_0(g,x) = V(g)q(x)\ \!,\ g\in\Stab_q\ \!,\ x\in \RR^4\, ,
\end{equation}
where $V$ is an irreducible representation of $\Stab_q = \tilde E(2)$ and $q$ is the character of $\RR^4$.

Now $\tilde E(2)$ is the semi-direct product $\RR^2 \rtimes \mathbb T$ and an irreducible representation $V$ of $\tilde E(2)$ fits in one of the following two classes:
\begin{itemize}\itemsep0mm
\item[$(a)$] The restriction of $V$ to $\RR^2$ is trivial;
\item[$(b)$] The restriction of $V$ to $\RR^2$ is non-trivial.
\end{itemize}
Irreducible representations of $\tilde E(2)$ in class $(a)$ are thus
labelled by the integers, the dual of $\mathbb T$, while irreducible
representations in class $(b)$ are labelled by $\kappa >0$, the radius
of a circle in $\RR^2$, the joint spectrum of the $\tilde E(2)$-translations. 

We say in case $(a)$ that $U$ has \emph{finite spin} (or finite helicity); in case $(b)$ that $U$ has \emph{infinite spin}.
Therefore an irreducible, infinite spin representation $U$ of $\tPoi$ has the form
\begin{equation}\label{infspin}
U_{\kappa,\varepsilon} =\Ind_{\overline{\Stab}_q\uparrow\tPoi}\bar V_{\kappa,\varepsilon}
\end{equation}
where $\bar V_{\kappa,\varepsilon}$ is given by \eqref{U0}:
\[
\bar V_{\kappa,\varepsilon}(g,x) = V_{\kappa,\varepsilon}(g)q(x)\ \!,\ g\in \tilde E(2)\ \!,\ x\in \RR^4\, ,
\]
with $V=V_{\kappa,\varepsilon}$ is the representation of $\tilde E(2)$ in which
the spectrum of the translations is the circle of radius $\kappa>0$,
and the rotation by $2\pi$ is represented by $+\eins$ (bosonic case,
$\varepsilon=0$) resp.\ by $-\eins$ (fermionic case, $\varepsilon=\frac12$); so
infinite spin representations are labelled by $\kappa>0$ and $\varepsilon=0,\frac12$.
We shall denote by $\tau(z)$, $z\in\mathbb C$, the element of $\tilde E(2)\subset \SLC$ given by 
\[
\tau(z)=\left(\begin{array}{cc}1&z\\0&1\end{array}\right)\ ;
\]
the two translation one-parameter subgroups of $\tilde E(2)$ are
$\RR\ni x\mapsto\tau(x)$, and $\RR\ni y\mapsto\tau(iy)$ and we have the commutation relations
\begin{equation}\label{resc}
\alpha(t)\tau(z)\alpha(t)^{-1}=\tau(e^{t}z)\ .
\end{equation}
\section{Infinite spin representations are not dilation covariant}
As is known, an irreducible, massless finite helicity unitary representation extends, on the same Hilbert space, to a representation of the group of transformations of the Minkowski spacetime generated by $\tPoi$ and dilations (indeed to a unitary representation of the conformal group). We show here that irreducible infinite spin representations are not
dilation covariant in this sense. We suppress the Bose/Fermi label
$\varepsilon$ which is irrelevant for the issue at hand. 
\begin{lemma}\label{ind}
Let $G$ be a locally compact group, $H\subset G$ a closed subgroup and $\beta$ an automorphism of $G$ such that $\beta(H)=H$. If $V$ is a unitary representation of $H$ and
$U \equiv \Ind_{H\uparrow G} V$, then
\[
U\cdot \beta =  \Ind_{H\uparrow G} V\cdot\beta_0
\]
where $\beta_0 \equiv\beta |_{H}$.
\end{lemma}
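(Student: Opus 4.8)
The plan is to realise both sides on their standard Mackey models and to exhibit an explicit intertwining unitary given by pullback along $\beta$. Concretely, I would realise $U=\Ind_{H\uparrow G}V$ on the Hilbert space $\H_U$ of (measurable, square-integrable) functions $f:G\to\H_V$ obeying the covariance relation $f(xh)=\lambda(h)\,V(h)^{-1}f(x)$ for $h\in H$, where $\lambda(h)=(\Delta_H(h)/\Delta_G(h))^{1/2}$ is the usual scalar modular factor and the norm is $\|f\|^2=\int_{G/H}\|f(x)\|_{\H_V}^2\,d\mu(\dot x)$ for a quasi-invariant measure $\mu$ on $G/H$, with $G$ acting by (suitably normalised) left translation, $(U(g)f)(x)=f(g^{-1}x)$. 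I would set up $U'\equiv\Ind_{H\uparrow G}(V\circ\beta_0)$ on $\H_{U'}$ in exactly the same way, its covariance condition reading $f'(xh)=\lambda(h)\,V(\beta(h))^{-1}f'(x)$.

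Next I would define the pullback map $W:\H_U\to\H_{U'}$ by $(Wf)(x)\equiv f(\beta(x))$ (up to a normalising scalar, see below) and carry out two verifications. First, $W$ respects the covariance conditions: since $\beta$ is a homomorphism with $\beta(H)=H$, one computes $(Wf)(xh)=f(\beta(x)\beta(h))=\lambda(\beta(h))\,V(\beta(h))^{-1}(Wf)(x)$, and because the modular function of a locally compact group is invariant under its automorphisms one has $\lambda\circ\beta_0=\lambda$, so indeed $Wf\in\H_{U'}$. Second, the intertwining property is a direct computation: $(WU(g)f)(x)=f(g^{-1}\beta(x))=f(\beta(\beta^{-1}(g)^{-1}x))=(U'(\beta^{-1}(g))Wf)(x)$, that is $WU(\beta(g'))=U'(g')W$ for all $g'\in G$, which is precisely the assertion $U\cdot\beta\cong U'$ once $W$ is known to be unitary.

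The one genuinely technical step, and the place I expect to spend the most care, is the unitarity of $W$. Here $\|Wf\|^2=\int_{G/H}\|f(\beta(x))\|^2\,d\mu(\dot x)=\int_{G/H}\|f\|^2\,d(\bar\beta_*\mu)$, where $\bar\beta$ is the homeomorphism of $G/H$ induced by $\beta$. Since $\mu$ is $G$-quasi-invariant and $\bar\beta$ conjugates the $G$-action into the $\beta$-twisted one, the pushforward $\bar\beta_*\mu$ again lies in the quasi-invariant class; matching the two measures therefore produces a Radon--Nikodym cocycle whose square root must be folded into the definition of $W$ (and which simultaneously accounts for the normalisation suppressed in the left-translation formula). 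In the situation of interest this is painless: for $G=\tPoi$ and $H=\overline{\Stab}_q=\RR^4\rtimes\tilde E(2)$ both groups are unimodular, so $G/H$ carries a genuinely $G$-invariant measure $\mu$; then $\bar\beta_*\mu$ is again $G$-invariant, hence equals $c\,\mu$ for a constant $c>0$ by uniqueness, and $c^{-1/2}W$ is unitary outright.

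Finally, I would remark that a more conceptual proof is available through Mackey's imprimitivity theorem: $U$ carries a canonical system of imprimitivity based on $G/H$, the automorphism $\beta$ transports it to a system of imprimitivity for $U\cdot\beta$ based on $\bar\beta(G/H)\cong G/H$, and uniqueness forces $U\cdot\beta$ to be induced from the associated representation of $H$, which one identifies as $V\circ\beta_0$. This route avoids the measure bookkeeping but hides the intertwiner, so for the paper the explicit pullback construction above seems preferable.
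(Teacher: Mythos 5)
Your proof is correct and is in essence the paper's own argument written out in full: the paper's one-line proof invokes the unicity of the induced representation, which it derives from the uniqueness of the quasi-invariant measure class on $H\backslash G$ --- exactly the fact you use when you fold the Radon--Nikodym factor of $\bar\beta_*\mu$ into the pullback intertwiner $W$ (and which, as you note, trivializes to a constant in the unimodular situation relevant here). Your closing remark about Mackey's imprimitivity theorem is just a second reading of the same ``unicity'', so nothing in your route genuinely diverges from the paper's.
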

\begin{proof}
The lemma follows by the unicity of the induced representation, a
consequence of the unicity of the measure class of a quasi-invariant
Borel measure on $H\backslash G$.  
\end{proof}
\begin{corollary}\label{cor}
Let $U_{\kappa}=\Ind_{\RR^4\rtimes  \tilde E(2)\uparrow \tPoi}\bar V_\kappa$ be an infinite spin, irreducible unitary representation of $\tPoi$, and $\beta$ an automorphism of $\tPoi$ preserving the element $q$ of (the dual of) the translation subgroup. Then $\beta(\Stab_q) = \Stab_q$ and
\[
U_\kappa\cdot\beta = U_{\kappa_\beta} \left(\equiv \Ind_{\RR^4\rtimes  \tilde E(2)\uparrow \tPoi}\bar V_{\kappa_\beta}\right) \ , 
\]
where $\kappa_\beta$ is given by $V_{\kappa_\beta} = V_\kappa\cdot \beta_0$ with $\beta_0$ the automorphism of $\tilde E(2)$ given by $\beta_0 = \beta|_{\Stab_q}$
\end{corollary}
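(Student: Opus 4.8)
The plan is to obtain both assertions from Lemma~\ref{ind}, applied with $G=\tPoi$ and $H=\overline{\Stab}_q=\RR^4\rtimes\tilde E(2)$, once the relevant subgroups have been shown $\beta$-invariant. First I would observe that every continuous automorphism $\beta$ of $\tPoi$ preserves the translation subgroup $\RR^4$, because $\RR^4$ is the radical of $\tPoi=\RR^4\rtimes\SLC$ (the quotient $\SLC$ being semisimple) and is therefore characteristic. Thus $\beta|_{\RR^4}$ is an automorphism of $\RR^4$ and induces the dual map $\hat\beta(\chi)=\chi\circ\beta|_{\RR^4}$ on characters; the hypothesis that $\beta$ preserve $q$ reads $\hat\beta(q)=q$, i.e.\ $q(\beta(y))=q(y)$ for all $y\in\RR^4$.

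Next I would check $\beta(\overline{\Stab}_q)=\overline{\Stab}_q$. Recall that $\tPoi$ acts on characters by $(g\cdot\chi)(y)=\chi(g^{-1}yg)$, so $\overline{\Stab}_q=\{g:q(g^{-1}yg)=q(y)\ \forall y\}$. For $g\in\overline{\Stab}_q$ and any $x\in\RR^4$, writing $x=\beta(y)$ gives
\[
q\big(\beta(g)^{-1}x\,\beta(g)\big)=q\big(\beta(g^{-1}yg)\big)=q(g^{-1}yg)=q(y)=q(\beta(y))=q(x)\ ,
\]
where I use $\hat\beta(q)=q$ and then $g\in\overline{\Stab}_q$; hence $\beta(g)\in\overline{\Stab}_q$. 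Running the same argument for $\beta^{-1}$ (which also fixes $q$) gives the reverse inclusion. Lemma~\ref{ind} then yields $U_\kappa\cdot\beta=\Ind_{\overline{\Stab}_q\uparrow\tPoi}\big(\bar V_\kappa\cdot\beta|_{\overline{\Stab}_q}\big)$.

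It remains to identify $\bar V_\kappa\cdot\beta|_{\overline{\Stab}_q}$ with $\bar V_{\kappa_\beta}$. Here I would use that any representation of $\overline{\Stab}_q=\RR^4\rtimes\Stab_q$ whose restriction to $\RR^4$ is the character $q$ is automatically of the product form $(g,x)\mapsto V'(g)\,q(x)$ for a unique representation $V'$ of $\Stab_q$ (this uses only that $q$ is $\Stab_q$-invariant). Since $\bar V_\kappa\cdot\beta$ restricts to $\hat\beta(q)=q$ on $\RR^4$, it has this form, and to see that $V'=V_\kappa\cdot\beta_0$ with $\beta_0=\beta|_{\Stab_q}$ I need $\beta(\Stab_q)=\Stab_q$. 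For the automorphisms at issue --- a dilation, composed with the boost $\alpha(t)$ that compensates the scaling of $q$ (cf.\ \eqref{resc}) --- $\beta$ preserves the Lorentz subgroup $\SLC$, since dilations commute with $\SLC$ and $\mathrm{Ad}(\alpha(t))$ is conjugation inside $\SLC$; granting $\beta(\SLC)=\SLC$ one gets $\beta(\Stab_q)=\beta(\SLC\cap\overline{\Stab}_q)=\SLC\cap\overline{\Stab}_q=\Stab_q$. Then $\beta(g,x)=(\beta_0(g),\beta(x))$ and
\[
(\bar V_\kappa\cdot\beta)(g,x)=V_\kappa(\beta_0(g))\,q(\beta(x))=V_\kappa(\beta_0(g))\,q(x)=\bar V_{\kappa_\beta}(g,x)\ ,
\]
so $U_\kappa\cdot\beta=\Ind\bar V_{\kappa_\beta}=U_{\kappa_\beta}$. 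Finally I would note that $V_{\kappa_\beta}:=V_\kappa\cdot\beta_0$ is again an infinite-spin (class $(b)$) representation: $\beta_0$ preserves the translation subgroup $\RR^2\subset\tilde E(2)$, so $V_{\kappa_\beta}$ is non-trivial on $\RR^2$ and, being irreducible, equals $V_{\kappa'}$ for a unique radius $\kappa'=\kappa_\beta>0$; by \eqref{resc} the boost rescales this radius, $\kappa_\beta=e^{\pm t}\kappa$.

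The hard part is precisely the identity $\beta(\Stab_q)=\Stab_q$. The invariance of $\overline{\Stab}_q$ alone does not give it: a general $q$-fixing automorphism need not preserve the Lorentz subgroup --- for instance conjugation by a translation fixes $q$ (it acts trivially on $\RR^4$) yet tilts $\SLC$ and carries $\Stab_q$ off the Lorentz subgroup. One must therefore use that the relevant $\beta$ preserves $\SLC$; everything else is the routine bookkeeping of induced representations through Lemma~\ref{ind}.
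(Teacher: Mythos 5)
Your proof is correct and follows the same route as the paper: the paper's entire proof of Corollary \ref{cor} is the single sentence ``This follows from Lemma \ref{ind}.'' The verifications you supply --- that $\RR^4$ is characteristic in $\tPoi$ (being its radical), that fixing $q$ forces $\beta(\overline{\Stab}_q)=\overline{\Stab}_q$, and the factorization of $\bar V_\kappa\cdot\beta|_{\overline{\Stab}_q}$ into a character times a little-group representation --- are exactly the bookkeeping the paper leaves implicit.

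Your closing observation is a genuine refinement, and it is correct. As literally stated, the first assertion $\beta(\Stab_q)=\Stab_q$ does not follow from the stated hypothesis alone: conjugation by a translation $a$ acts trivially on $\RR^4$, hence fixes every character and in particular $q$, yet it sends $(0,A)$ to $(a-\sigma(A)a,\,A)$, which lies in the Lorentz subgroup for all $A\in\Stab_q$ only if $a$ is fixed by the whole little group, i.e.\ only if $a\in\RR q$. So the corollary should be read with the tacit extra assumption $\beta(\SLC)=\SLC$ (or else modulo inner automorphisms, which do not change the unitary equivalence class of $U_\kappa\cdot\beta$, so the identification $\kappa_\beta$ survives even if the literal formula $\beta_0=\beta|_{\Stab_q}$ does not make sense). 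This costs the paper nothing: the only automorphisms to which the corollary is applied are the $\beta_t=\a_{-t}\cdot\delta_t$ of Proposition \ref{dil-inv}, and these preserve $\SLC$ since $\delta_t$ is trivial on the Lorentz part and $\a_{-t}$ is conjugation inside $\SLC$ --- exactly as you argue.
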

\begin{proof}
This follows from Lemma \ref{ind}.
\end{proof}
We shall say that a unitary representation $U$ of $\tPoi$ on the Hilbert space $\H$ is \emph{dilation covariant} if $U$ extends to a unitary representation on $\H$ of the group generated by $\tPoi$ and dilations. Namely there exists a one-parameter unitary group $D(t)$ on $\H$ such that $D$ commutes with $U|_{\tilde\LL_+}$ and 
\[
D(t)U(x)D(-t) = U(e^t x)\ ,
\]
for $x$ in the translation group $\RR^4$.
\begin{proposition}\label{dil-inv}
Let $U$ be an irreducible, positive energy, unitary representation of $\tPoi$. Then $U$ is dilation covariant iff $U$ is massless with finite spin.
\end{proposition}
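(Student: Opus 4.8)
The plan is to analyse everything through the single representation $U\circ\delta_s$, where $\delta_s$ denotes the dilation automorphism of $\tPoi$ given by $\delta_s(x,A)=(e^sx,A)$, scaling the translations and fixing the Lorentz part. The first step is to recast the statement: $U$ is dilation covariant if and only if $U\circ\delta_s\cong U$ for all $s$. Indeed, if $D(s)$ is the implementing one-parameter group, then $D(s)U(\cdot)D(-s)=U\circ\delta_s(\cdot)$ both on translations (by the defining relation $D(s)U(x)D(-s)=U(e^sx)$) and on the Lorentz subgroup (because $D$ commutes with $U|_{\tilde\LL_+^\uparrow}$ while $\delta_s$ fixes $\tilde\LL_+^\uparrow$), so that $U\circ\delta_s=\mathrm{Ad}(D(s))\circ U\cong U$; conversely, in the finite spin case the group $D$ is supplied by the classical conformal extension of finite helicity representations.

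For the implication ``dilation covariant $\Rightarrow$ massless finite spin'' I would first dispose of the massive case. Since each $D(s)$ is unitary, $U\circ\delta_s\cong U$ forces the joint spectrum of $U|_{\RR^4}$ — a single Lorentz orbit — to be invariant under the momentum rescaling $p\mapsto e^sp$ induced by $\delta_s$. A mass hyperboloid $\{p^2=m^2,\ p_0>0\}$ with $m>0$ is not scale invariant, whereas $\partial V_+\setminus\{0\}$ is; together with positivity of the energy and $U\neq\mathrm{id}$ this leaves only the massless orbit through $q$. Thus $U\cong U_\kappa$ is one of the representations \eqref{infspin}, and it remains to show $\kappa$ cannot be positive.

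The heart of the argument is an explicit computation of $U_\kappa\circ\delta_s$ by induction, in the spirit of Lemma \ref{ind} and Corollary \ref{cor}. Since $\delta_s$ preserves $\overline{\Stab}_q=\RR^4\rtimes\Stab_q$ (it fixes the Lorentz subgroup $\Stab_q$ and preserves $\RR^4$), Lemma \ref{ind} gives $U_\kappa\circ\delta_s=\Ind_{\overline{\Stab}_q\uparrow\tPoi}(V_\kappa\otimes\chi_{e^sq})$, that is, the same little group representation $V_\kappa$ but with the inducing character moved from $q$ to $e^sq$. Using that the boost satisfies $\sigma(\alpha(s))q=e^sq$, I would move the base point back to $q$ by the Mackey conjugation $A=\alpha(s)$; the little group representation is then replaced by $V_\kappa\circ\mathrm{Ad}(\alpha(s))$, and by the rescaling relation \eqref{resc} $\alpha(s)\tau(z)\alpha(s)^{-1}=\tau(e^sz)$ this carries the translation spectrum from the circle of radius $\kappa$ to the circle of radius $e^s\kappa$. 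Hence $U_\kappa\circ\delta_s\cong U_{e^s\kappa}$. Representations with distinct radii are inequivalent (the radius of the $\tilde E(2)$-translation spectrum is a unitary invariant), so for $\kappa>0$ and $s\neq0$ one gets $U_\kappa\circ\delta_s\not\cong U_\kappa$, contradicting dilation covariance; therefore $U$ has finite spin. The converse direction reuses the same computation: in class $(a)$ the representation $V$ is trivial on $\RR^2$ and $\mathrm{Ad}(\alpha(s))$ fixes the rotation subgroup $\mathbb T$, so $V\circ\mathrm{Ad}(\alpha(s))=V$ and $U\circ\delta_s\cong U$, consistent with (and subsumed by) the conformal covariance of finite helicity representations.

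The step I expect to be most delicate is the identification $U_\kappa\circ\delta_s\cong U_{e^s\kappa}$. One must cleanly separate the purely formal shift of the inducing character (which alone does not change the induced representation) from the genuine rescaling of the $\tilde E(2)$-translations produced by conjugating with the boost, verify that $\Stab_{e^sq}=\Stab_q$ so that Mackey's base-point change applies with $A=\alpha(s)$, and keep track of the orientation so that a positive dilation indeed sends $\kappa$ to $e^s\kappa$ rather than its inverse. Once this is established, the inequivalence of the $U_\kappa$ for different $\kappa$ closes the argument immediately.
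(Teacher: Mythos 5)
Your proposal is correct and takes essentially the same route as the paper: both compute $U_\kappa\circ\delta_s$ through Mackey induction (Lemma \ref{ind}), and your base-point change by the boost $\alpha(s)$ is the same manipulation as the paper's trick of pre-composing with the inner automorphism $\a_{-t}$ so that $\beta_t=\a_{-t}\cdot\delta_t$ fixes $q$ (Corollary \ref{cor}), with both arguments closing via the rescaling relation \eqref{resc} and the mutual inequivalence of the $U_\kappa$. The only (immaterial) differences are that you explicitly dispose of the massive case by scale-invariance of the momentum orbit, which the paper leaves implicit, and that you obtain the radius $e^{s}\kappa$ where the paper writes $e^{-t}\kappa$ --- a convention-dependent sign that does not affect the conclusion.
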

\begin{proof}
Let $\delta_t$ the the automorphism of $\tPoi$ given by $\delta_t(g)= g$ if $g\in\tilde\LL_+$ and $\delta_t(p) = e^t p$ if $p\in\RR^4$. We want to show that $U$ is inequivalent to $U\cdot \delta_t$, $t\neq 0$, if $U$ is irreducible with infinite spin.

Let then $U=U_\kappa$ be given by \eqref{UU}, namely $U_\kappa = \Ind_{\RR^4\rtimes  \tilde E(2)\uparrow \tPoi}\tilde V_{\kappa}$. We shall show that
\[
U_\kappa\cdot\delta_t = U_{e^{-t}\kappa} \ .
\]
This will prove the Proposition because $U_\kappa$ and $U_{\kappa'}$ are inequivalent if $\kappa\neq\kappa'$.

Now let $\a_t$ be the lift to $\tPoi$ of the inner one-parameter
automorphism group of $\Poi$ implemented by the boost in
$3$-direction, namely $\a$ is given by eq.\ \eqref{alpha}. Then 
\[
\a_t(q) = \delta_t(q) = (e^t , 0 , 0, e^t)\ ,
\]
where $q = (1,0,0,1)$ as above. Thus the automorphisms
\begin{equation}\label{beta}
\beta_t\equiv \a_{-t}\cdot\delta_t
\end{equation}
fix $q$. Since $\a_{-t}$ is inner, we have $U_\kappa\cdot\a_{-t}= U_\kappa$, thus
\[
U_\kappa\cdot\delta_t = U_\kappa\cdot\a_{-t}\cdot \delta_t  = U_\kappa\cdot\beta_t \ .
\]
We now apply Corollary \ref{cor} and see that
\[
U_\kappa\cdot\delta_t  = U_{\kappa'}
\]
where $\kappa'$ is given by
\[
V_{\kappa'} = V_\kappa\cdot \beta_t |_{\tilde E(2)} =  
V_\kappa\cdot \a_{-t }|_{\tilde E(2)} \ ,
\]
thus $\kappa' = e^{-t}\kappa$ by the commutation relation \eqref{resc} 
\cite[Lemma 4]{Y}.
\end{proof}
\section{Double cone localization implies dilation covariance}
\label{ax}
Let $U$ be a unitary, positive energy representation of the cover $\tPoi$ of the Poincar\'e group on a Hilbert space $\H$. 

A $U$-covariant \emph{net of standard subspaces} $\H$ on the set
$\W$ of wedge regions of the Minkowski spacetime is a map
\[
H: \W\ni W \longmapsto H(W)\subset\H
\]
that associates a closed real linear subspace $H(W)$ with each $W\in\cW$, satisfying:
\begin{enumerate}
\item {\it Isotony}: if $W_1\subset W_2$ then $H(W_1)\subset H(W_2)$;
\item {\it Poincar\'e covariance}:  $U(g)H(W)=H(gW)$,  $g\in \tPoi$;
\item {\it Reeh-Schlieder property}: $H(W)$ is cyclic $\forall \ W\in\cW$;
\item {\it Bisognano-Wichmann property}: 
\[
\Delta^{it}_{H(W)}=U\big(\Lambda_W(-2\pi t)\big),\qquad \forall\
W\in\W \ ;
\]
\item {\it Twisted locality}: 
For every wedge $W\in\cW$ we have
 \[
 Z H(W')\subset H(W)'
 \]
with $Z$ unitary, $\displaystyle{Z=\frac{\eins+i\Gamma}{1+i}}$.
\end{enumerate}
Due to twisted locality, each $H(W)$ is indeed a standard subspace, so the modular operators in Property 4 are defined.

Here $\Gamma \equiv U(2\pi)$, the unitary corresponding to a $2\pi$ spatial rotation in the representation $U$, namely $\Gamma$ is the image under $U$ of the non-trivial element in the centre of $\tilde\LL_+^\uparrow$.
Clearly $\Gamma$, hence $Z$, commutes with $U$.

Notice that if $U$ is bosonic ($\Gamma = \eins$), then $Z=\eins$, and
twisted locality is locality. If $U$ is fermionic ($\Gamma = - \eins$),  then $Z=-i$ and $H(W')\subset iH(W)$.

Lemma \ref{inc} then implies {\it twisted duality} for wedges:
\[
H(W')= ZH(W)'\ .
\]
Starting with a $U$-covariant net $H$ on $\W$ as above, one gets a net
of closed, real linear subspaces on double cones $O$
defined by
\begin{equation}\label{HO2}
 H(O)\equiv\bigcap_{\W\ni W\supset O}H(W) \ .
\end{equation}
Note that $H(O)$ is not necessarily cyclic. If $H(O)$ is cyclic, then
\[
H(W) = \overline{\sum_{O\subset W} H(O)}
\]
by Lemma \ref{inc}.

The following proposition is proved in \cite{BGL}, $(ii)\Rightarrow (i)$, and in \cite{GL}, $(i)\Rightarrow (ii)$, for nets of von Neumann algebras; yet the same argument gives a proof in the standard subspace setting.
\begin{proposition}{\rm \cite{BGL,GL}.}
\label{prop:ciclcone} 
Let $H$ be a $U$-covariant net of standard subspaces of $\H$ as above (properties 1--5). The following are equivalent:
\begin{itemize}\itemsep0mm
\item[$(i)$] $H(C)\equiv\bigcap_{\W\ni W\supset C}H(W)$ is cyclic for all spacelike cones $C$;
\item[$(ii)$] $U$ extends to an (anti-)unitary representation $\hat U$ of $\tilde\cP_+$ on $\H$ and $H$ is the canonical net $H_{\hat U}$ associated with $\hat U$ (eq.\ \eqref{HUO}). 
\end{itemize}
\end{proposition}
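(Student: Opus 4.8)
The plan is to prove the two implications separately, the real content lying in $(i)\Rightarrow(ii)$. For $(ii)\Rightarrow(i)$ I would simply invoke the general standardness result of \cite{BGL} already quoted above: if $U$ extends to $\hat U$ of $\tilde\cP_+$ and $H=H_{\hat U}$ is the canonical net built from \eqref{HUO}, then $H_{\hat U}(C)$ defined by \eqref{HO} is standard, hence cyclic, for every spacelike cone $C$ and every representation. Thus $H(C)=H_{\hat U}(C)$ is cyclic and $(i)$ holds with no further work.

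For $(i)\Rightarrow(ii)$ the goal is to manufacture the PCT operator out of the modular conjugations of the given net and then to identify $H$ with the canonical net. Fix the standard wedge $W$; by property 5 each $H(W)$ is standard, so its modular conjugation $J_W$ is defined, and I would set tentatively $\Theta:=U(R_W)J_W$ with $R_W$ the $\pi$-rotation preserving $W$. The decisive step is to show that $J_W$ acts geometrically as the wedge reflection $j_W$ on the whole net, i.e.\ $J_WH(gW)=Z\,H(j_WgW)$ for all $g\in\tPoi$ (up to the twist $Z$ of property 5); by property 4 and twisted duality this is already known for $gW=W,W'$. To propagate it to all wedges I would combine three ingredients: Borchers' theorem (Theorem \ref{Borch}) applied to a lightlike translation subgroup leaving $H(W)$ invariant, which by positivity of the energy reproduces the boost relations of property 4 and, in its companion $J$-part, gives $J_WU(a)J_W=U(-a)$ along lightlike directions; Poincar\'e covariance (property 2), which reduces a general wedge to a transform of $W$; and, crucially, the cyclicity of spacelike cones, which via Lemma \ref{inc} makes the intersections \eqref{HO} defining $H(C)$ standard, so that the geometric action is rigidly forced rather than obstructed.

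Granting the geometric action, $\Theta=U(R_W)J_W$ implements the total reflection $\theta:x\mapsto -x$, is anti-unitary (with $\Theta^2=\eins$ in the bosonic case), and a covariance computation shows it is independent of $W$ and satisfies $\Theta U(g)\Theta^{-1}=U(\theta g\theta^{-1})$. Setting $\hat U(\theta):=\Theta$ therefore extends $U$ to an (anti-)unitary representation of $\tilde\cP_+=\tPoi\cup\theta\,\tPoi$. Finally, with $\Delta_W$ given by property 4 and $J_W=U(R_W)^{-1}\Theta$ now both expressed through $\hat U$, the subspace $H(W)=\Ker(J_W\Delta_W^{1/2}-1)$ coincides with the canonical $H_{\hat U}(W)$ of \eqref{HUO}, and the intersection formula \eqref{HO} then yields $H=H_{\hat U}$ on all regions, which is $(ii)$.

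I expect the main obstacle to be exactly the geometric, wedge-independent action of the a priori purely spectral modular conjugations $J_W$. This is the point where spacelike-cone cyclicity is indispensable: it supplies the standardness needed to apply the KMS uniqueness of Lemma \ref{inc} and thereby to force the separate modular conjugations to assemble into a single PCT symmetry; without it the construction of $\Theta$ collapses. This is precisely the content carried over from the von Neumann algebra arguments of \cite{BGL,GL}, as the authors indicate.
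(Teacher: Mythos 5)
Your proposal takes essentially the same approach as the paper: the paper offers no self-contained argument but attributes $(ii)\Rightarrow(i)$ to \cite{BGL} (standardness of the canonical $H_{\hat U}(C)$ for spacelike cones) and $(i)\Rightarrow(ii)$ to \cite{GL} (the construction of the geometric action of the modular conjugations and the PCT operator), remarking that the von Neumann algebra arguments carry over verbatim to standard subspaces. Your division of labor and your sketch of the hard direction --- geometric action of $J_W$ via Borchers' relations, covariance, and cone cyclicity, then the identification $H=H_{\hat U}$ --- is exactly that argument, so the proposal is correct and matches the paper's proof.
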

\noindent
Thus (in even spacetime dimension), with the above cone cyclicity assumption, there is an anti-unitary PCT operator.

The following proposition ensures a variant of the Reeh-Schlieder
property. If $O, \tilde O$ are double cones, we write $O\Subset\tilde
O$ if the closure of $O$ is contained in the interior of $\tilde O$.
\begin{proposition}\label{prop:cicl} 
Let $H(O)$ be defined as above in \eqref{HO2}, with $U$ irreducible.
If $H(O)\neq\{0\}$ for some double cone $O$, then $H(\tilde O)$ is
cyclic for every double cone $O\Subset\tilde O$.
\end{proposition}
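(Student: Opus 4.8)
The plan is to prove a Reeh–Schlieder type statement: I will show that if $\eta\in\H$ is orthogonal (in $\H$) to $H(\tilde O)$, then $\eta=0$; since orthogonality to the real subspace $H(\tilde O)$ is the same as orthogonality to $H(\tilde O)+iH(\tilde O)$, this is exactly the assertion that $\overline{H(\tilde O)+iH(\tilde O)}=\H$, i.e. that $H(\tilde O)$ is cyclic. Fix such an $\eta$ and fix a nonzero vector $\xi\in H(O)$ (which exists by hypothesis). Because $U$ is irreducible, the closed complex span of $\{U(g)\xi:g\in\tPoi\}$ is all of $\H$, so it suffices to prove $\langle\eta,U(g)\xi\rangle=0$ for every $g\in\tPoi$; and since $U(g)\xi\in H(gO)$, it is enough to show $\eta\perp H(gO)$ for all $g$. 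The whole problem is thereby reduced to \emph{propagating} the relation ``$\eta\perp H(O'')$'' from cones $O''\subseteq\tilde O$ (where it holds by isotony) to arbitrarily Poincaré-transformed regions.

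The first propagation step uses positivity of the energy. Suppose $\eta\perp H(\hat O)$ and $O'\Subset\hat O$. For translations $a$ near $0$ one has $O'+a\subset\hat O$, hence $U(a)H(O')\subset H(\hat O)$ and $\langle\eta,U(a)\xi'\rangle=0$ for $\xi'\in H(O')$. By the spectrum condition $P\in\overline{V_+}$ the map $a\mapsto U(a)\xi'=e^{iP\cdot a}\xi'$ extends to a bounded continuous function on the tube $\RR^{s+1}+iV_+$, holomorphic in the interior; thus $a\mapsto\langle\eta,U(a)\xi'\rangle$ is the boundary value of a holomorphic function, and vanishing on a real open set forces it to vanish on all of $\RR^{s+1}$. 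Therefore $\eta\perp H(O'+a)$ for every $a\in\RR^{s+1}$: orthogonality propagates along all translations.

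The second, decisive, step handles the Lorentz boosts, where the naive tube argument fails because the boost generator is not semibounded. Here I invoke modular theory. Choose a wedge $W\supset\hat O$; by the Bisognano–Wichmann property (Property 4) the modular group of $H(W)$ is the boost, $\Delta_{H(W)}^{it}=U\big(\Lambda_W(-2\pi t)\big)$. For $O'\Subset\hat O$ and $\xi'\in H(O')\subset H(W)$ one has $S_{H(W)}\xi'=\xi'$, so $\Delta_{H(W)}^{1/2}\xi'=J_{H(W)}\xi'$ is defined and $z\mapsto\Delta_{H(W)}^{iz}\xi'$ is bounded and holomorphic on the strip $-\tfrac12<\Im z<0$, continuous up to its boundary; this boundedness is the KMS property of the modular group, in the circle of ideas of the one–particle Borchers theorem (Theorem \ref{Borch}). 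The boundary function $t\mapsto\langle\eta,\Delta_{H(W)}^{it}\xi'\rangle=\langle\eta,U(\Lambda_W(-2\pi t))\xi'\rangle$ vanishes for small $t$, since $\Lambda_W(-2\pi t)O'\subset\hat O$ there; a function holomorphic on the strip and continuous up to a boundary segment on which it vanishes must vanish identically (Schwarz reflection and the identity theorem). Hence $\eta\perp H\big(\Lambda_W(-2\pi t)O'\big)$ for \emph{all} $t\in\RR$: orthogonality propagates along every boost as well.

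Finally I assemble the two steps. Starting from $\eta\perp H(\tilde O)$ and the strict inclusion $O\Subset\tilde O$, choose a nested chain $O=\hat O_k\Subset\cdots\Subset\hat O_0=\tilde O$ and, writing a given $g\in\tPoi$ as a finite product of translations and boosts (interpolating translations to move each intermediate region into a wedge adapted to the next boost), chain the translation- and boost-propagations to obtain $\eta\perp H(g\hat O_k)=H(gO)$, and in particular $\langle\eta,U(g)\xi\rangle=0$; since $\tPoi$ is connected and generated by translations and boosts this covers every $g$, and irreducibility then forces $\eta=0$. I expect the boost step to be the main obstacle: plain positive-energy analyticity is unavailable for boosts, and it must be replaced by the KMS strip analyticity of the wedge subspace, which is the technical heart. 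A secondary nuisance is the bookkeeping of the successive strict shrinkings of the cones in the chaining, which has to be arranged so that only finitely many steps occur and the final region is exactly $gO$.
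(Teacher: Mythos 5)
Your proof is correct and takes essentially the same route as the paper's: propagate the orthogonality relation along translations via positive-energy tube analyticity, then along wedge boosts via the Bisognano--Wichmann/modular strip analyticity, and conclude by irreducibility of $U$. The only (harmless) variant is that you obtain the strip analyticity from $\xi'\in{\rm Dom}\big(\Delta_{H(W)}^{1/2}\big)$ (a strip of width $1/2$), where the paper invokes the KMS condition (width $1$) -- both feed the same identity-theorem argument, and both proofs are equally brief about the final chaining over group elements.
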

\begin{proof}
Let $O\Subset\tilde O$ be double cones with $H(O)\neq\{0\}$ and 
$\xi$ a vector orthogonal to $H(\tilde O )$. We
can find a $\delta>0 $ s.t.\ $x + O\subset\tilde O$, so
\[
f(x)\equiv \langle\xi,U(x)Z\eta\rangle=0,
\] 
for $|x|<\delta $ and $\eta\in H(O)$, where $U(x)$ is the unitary translation by $x$.
By positivity of the energy, $f$ has an analytic continuation on the tube $\RR^4-iV^+$. Since $f(x)=0 $ on an open subset of the boundary, by the Edge of the Wedge theorem $f$ is identically zero. Thus $\xi$ is orthogonal to all translates $H(O + x)$.

We  consider now a wedge $W\supset\tilde O$ and the corresponding boost
one-parameter group $\Lambda_W$. By the KMS property entailed by the Bisognano-Wichmann property, there exists an analytic extension of the function $h$:
\[
h(s) \equiv \langle\xi,U\big(\Lambda_W(2\pi s)\big)Z\eta\rangle\ ,
\]
on the strip $\{z\in\CC: 0<\Im\,z< 1\}$. 
Because $O\Subset \tilde O$, $h(s)$ is zero for small real values of
$s$. Thus the whole extension of $h$ has to be zero. It follows that 
\[
\xi\ \bot\   H(g O)\ , \quad  \forall g\in\cP_+^\uparrow\ .
\]
Now the closed, complex linear span generated by $\big\{H(gO): g\in\Poi\big\}$ 
is a $U$-invariant, non-zero, closed linear subspace of $\H$, that must be equal to $\H$ since $U$ is irreducible. 
Thus $\xi =0$ and $H(\tilde O)$ is cyclic.
\end{proof}
\begin{lemma}\label{lem:timecom} 
Assume that $U$ is a massless, unitary representation of $\tilde\cP_+^\uparrow$ acting covariantly on a twisted-local net of closed, real linear subspaces on double cones. 
Let $O_1, O_2$ be double cones with $O_2$ in the timelike complement of $O_1$, then 
\[
H(O_2)\subset ZH(O_1)'\ .
\]
\end{lemma}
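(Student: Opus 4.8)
The plan is to reduce the assertion to the vanishing of a single real-valued ``commutator function'' and then to propagate that vanishing from the spacelike into the timelike region by means of the strong Huygens principle; this last point is exactly where masslessness and the odd spatial dimension $s=3$ are used.

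First I would fix $\eta\in H(O_2)$ and $\zeta\in H(O_1)$ and note that $H(O_2)\subset ZH(O_1)'$ is equivalent to $\Im\langle\eta,Z\zeta\rangle=0$ for all such $\eta,\zeta$, since $Z$ is unitary and $H(O_1)'=(iH(O_1))^{\bot_\R}$. To exploit covariance I introduce the translated function
\[
c(x)\equiv\Im\langle\eta,ZU(x)\zeta\rangle\ ,\qquad x\in\RR^4\ ,
\]
so that $c(0)=\Im\langle\eta,Z\zeta\rangle$ is the quantity to be shown to vanish. By covariance $U(x)\zeta\in H(O_1+x)$, and $Z$ commutes with $U$. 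Whenever $O_1+x$ is spacelike separated from $O_2$ there is a wedge $W$ with $O_2\subset W$ and $O_1+x\subset W'$; then $\eta\in H(W)$, $U(x)\zeta\in H(W')$, and twisted locality $ZH(W')\subset H(W)'$ gives $ZU(x)\zeta\in H(W)'$, hence $c(x)=0$. Thus $c$ vanishes on the open set $R=\{x:O_1+x\text{ spacelike to }O_2\}$; equivalently, $\mathrm{supp}\,c$ is contained in the closed set of $x$ for which $O_1+x$ is causally related to $O_2$.

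The decisive step uses masslessness. The spectral measure of $U$ is carried by $\partial V_+$, so $x\mapsto\langle\eta,ZU(x)\zeta\rangle$ is the Fourier transform of a measure supported on the light cone and is therefore annihilated by the wave operator $\Box$; hence $\Box c=0$. Now I invoke the strong Huygens principle: in spacetime dimension $3+1$ the retarded and advanced fundamental solutions $E_\pm$ of $\Box$ are supported on $\partial V_\pm$, not on the solid cones, so propagation is sharp and the influence of $O_1$ only reaches the light-cone shell of $O_2$. Consequently $\mathrm{supp}\,c$ is pushed from the full causal region down to those $x$ for which $O_1+x$ and $O_2$ are lightlike connected, the strictly timelike configurations forming a lacuna on which $c$ vanishes. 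Since $O_2$ lies in the open timelike complement of $O_1$ (the interior forward cone issuing from the upper tip of $O_1$, resp.\ the backward cone from the lower tip), the configuration at $x=0$ is strictly timelike, hence in the lacuna, so $c(0)=0$. This gives $\Im\langle Z^*\eta,\zeta\rangle=0$ for all $\zeta\in H(O_1)$, i.e.\ $Z^*\eta\in H(O_1)'$, i.e.\ $\eta\in ZH(O_1)'$, as claimed.

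The hard part will be the rigorous justification of this last propagation. A single matrix element $c$ is merely a wave solution with light-cone spectrum and, taken by itself, is neither light-cone supported nor even spacelike-vanishing; the vanishing on the whole open region $R$ is a genuine consequence of the localization of $\eta$ and $\zeta$, and it is this collective input that must be combined with the sharp propagation to eliminate the timelike interior. I expect the cleanest route to be a convolution/support argument: expressing $c$ through the light-cone-supported kernels $E_\pm$, the vanishing of $c$ on $R$ together with $\mathrm{supp}\,E_\pm\subset\partial V_\pm$ forces $c$ to vanish on the timelike lacuna. This is precisely where the odd spatial dimension enters; in even spatial dimension the $E_\pm$ have solid-cone support, Huygens fails, and the substitute announced in the Introduction is needed.
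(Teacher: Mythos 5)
Your proposal is correct and is, in essence, the paper's own argument: a matrix element of $U$, masslessness giving $\Box c=0$, twisted locality giving vanishing on a spacelike region, and the strong Huygens principle propagating that vanishing into the timelike region. The only real divergence is precisely the step you flag as ``the hard part''. The paper sidesteps your general configuration by a normalization trick: it takes \emph{both} vectors in the same centred double cone, $\xi,\eta\in H(O_r)$ with $O_r$ the causal envelope of the ball of radius $r$ at the origin, and translates one of them; then $\Im f$, with $f(x)=\langle\xi,U(x)Z\eta\rangle$, vanishes exactly on the spacelike complement of $O_{2r}$, so its time-zero Cauchy data sits in a ball and the textbook Kirchhoff/Huygens statement applies verbatim, giving vanishing on the timelike complement of $O_{2r}$; the lemma then follows from covariance, isotony and the arbitrariness of $r$. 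In your setup the vanishing region is the spacelike complement of $O_2-O_1$, so you need the propagation statement for the compact convex set $K=\overline{O_2-O_1}$ rather than for a ball. That statement is true (represent the mollified $c$ by Cauchy data on a time slice lying entirely to the past, resp.\ future, of $K$, convolved with the light-cone supported commutator function), but it is an extra argument; moreover, when $\overline{O_1}$ and $\overline{O_2}$ touch along a light ray, $x=0$ lies only on the boundary of the timelike lacuna, so you also need continuity of $c$, approaching $0$ by points $-\varepsilon e_0$ (with $e_0$ the unit time direction), which are strictly timelike to all of $K$. A final minor point: the lemma assumes twisted locality of the double-cone net itself, so you may use it directly rather than routing through wedge duality, since wedge subspaces are not among the lemma's hypotheses.
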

\begin{proof}
Let $O_r$ be the double cone of radius $r>0$ centred at the origin, namely
$O_r$ is the causal envelope of the ball of radius $r$ centred at the origin in the time zero hyperplane.
Consider the two point function 
\[
f(x)=\langle\xi,U(x)Z\eta\rangle,\qquad \xi,\eta\in H(O_r)\ .
\]
Then $\Box f=0$, namely $f$ is a solution of the wave equation, since the Fourier transform of $f$  (w.r.t.\ the Minkowski metric)
is a measure with support in $\partial V_+$. In particular $\Box\Im
f=0$. Now $\Im f(x)=0$ if $x\in O'_{2r}$, because $O_r + x\subset
O'_r$. Thus, by the Huygens principle for solutions of the wave equations,  also $\Im f(x)=0$ if $x$ belongs to the timelike complement of $O_{2r}$. 
Thus $H(O_r + x)\subset ZH(O_r)'$ for such $x$, namely for $x$ such that $O_r+x$ is contained in the timelike complement of $O_r$.
This entails the thesis as $r>0$ is arbitrary.
\end{proof}
%`
\begin{proposition}\label{prop:dil}
Let $U$ be a massless representation of $\tilde\cP_+^\uparrow$, acting covariantly on a net $H$ of standard subspaces on wedges satisfying properties 1--5.
If $H(O)$ is cyclic for some double cone $O$, then $U$ is dilation covariant.

If $U$ is irreducible, the same conclusion holds by assuming that $H(O)\neq \{0\}$ for some double cone $O$.
\end{proposition}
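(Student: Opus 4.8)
The plan is to attach a standard subspace to the forward light cone $V_+$ and to extract dilation covariance from its modular group by means of the one-particle Borchers theorem, Theorem \ref{Borch}. I would set
\[
H(V_+)\equiv \overline{\sum_{O\Subset V_+}H(O)}\ ,\qquad
H(V_-)\equiv \overline{\sum_{O\Subset V_-}H(O)}\ ,
\]
the closed real linear spans over double cones whose closure lies in the open forward, resp.\ backward, light cone. Because $V_+$ is preserved by every Lorentz transformation and by every translation along a future-pointing causal vector, covariance and isotony give $U(\Lambda)H(V_+)=H(V_+)$ for $\Lambda\in\tilde\LL_+^\uparrow$ and $U(a)H(V_+)\subset H(V_+)$ for $a\in\overline{V_+}$. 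Granting that $H(V_+)$ is standard, Theorem \ref{Borch} will convert the second inclusion into a scaling relation for $\Delta_{H(V_+)}$, while the first identity together with Lemma \ref{inc2} will force $\Delta_{H(V_+)}$ to commute with the Lorentz subgroup; the dilation unitary would then be $D(t)\equiv\Delta_{H(V_+)}^{-it/2\pi}$.

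First I would prove that $H(V_+)$ is standard. Under the first hypothesis $H(O)$ is cyclic for some double cone $O$; in the irreducible case I would instead start from $H(O)\neq\{0\}$ and apply Proposition \ref{prop:cicl} to obtain a cyclic double cone. Translating that cyclic double cone far into $V_+$ (and into $V_-$) by unitary translations, which preserve cyclicity, shows that $H(V_+)$ and $H(V_-)$ each contain a cyclic subspace, hence are themselves cyclic. The separating property is the key point and is where masslessness enters, through Lemma \ref{lem:timecom}: every point of $V_+$ is timelike to every point of $V_-$, since for $p\in V_+$ and $q\in V_-$ one has $(p-q)_0>|\vec p|+|\vec q|\geq|\vec p-\vec q|$, so each $O_2\Subset V_-$ lies in the timelike complement of each $O_1\Subset V_+$. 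Lemma \ref{lem:timecom} then gives $H(O_2)\subset ZH(O_1)'$ for all such pairs; intersecting over $O_1$ and using Lemma \ref{lem:1} yields $H(O_2)\subset ZH(V_+)'$, and the span over $O_2$ gives $H(V_-)\subset ZH(V_+)'$. Since $Z$ is unitary and $H(V_-)$ is cyclic, $H(V_+)'$ is cyclic, i.e.\ $H(V_+)$ is separating; together with cyclicity this makes $H(V_+)$ standard.

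Next I would run Borchers' theorem. Fixing a future lightlike vector $a\in\partial V_+$, the generator of $t\mapsto U(ta)$ is positive by positivity of the energy, and $U(ta)H(V_+)\subset H(V_+)$ for $t\geq0$ because $\overline O+ta\subset V_+$ whenever $O\Subset V_+$. Theorem \ref{Borch} then gives
\[
\Delta_{H(V_+)}^{is}\,U(ta)\,\Delta_{H(V_+)}^{-is}=U\big(e^{-2\pi s}ta\big)\ ,\qquad t\in\RR\ .
\]
Since future lightlike vectors span $\RR^4$ over $\RR$ and conjugation by $\Delta_{H(V_+)}^{is}$ is multiplicative on the mutually commuting translations, this extends to $\Delta_{H(V_+)}^{is}U(x)\Delta_{H(V_+)}^{-is}=U(e^{-2\pi s}x)$ for every $x\in\RR^4$. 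Putting $D(t)\equiv\Delta_{H(V_+)}^{-it/2\pi}$ gives $D(t)U(x)D(-t)=U(e^{t}x)$, and since each Lorentz transformation preserves $V_+$ we have $U(\Lambda)H(V_+)=H(V_+)$, so by Lemma \ref{inc2} $U(\Lambda)$ commutes with $\Delta_{H(V_+)}$ and hence with $D$. Thus $D$ implements the dilations and commutes with the Lorentz subgroup, which is precisely dilation covariance.

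The hard part will be the separating property of $H(V_+)$: it rests entirely on the timelike commutativity of Lemma \ref{lem:timecom}, a manifestation of the strong Huygens principle, and hence on both the masslessness of $U$ and the parity of the spacetime dimension. Everything else is formal modular theory, namely Borchers' theorem together with Lemmas \ref{inc2} and \ref{lem:1}.
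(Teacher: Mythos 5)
Your proof is correct and follows essentially the same route as the paper's: define $H(V_\pm)$ as spans over double cones, get cyclicity from translated cyclic double cones, get the separating property from the timelike commutativity of Lemma \ref{lem:timecom}, and then obtain the dilations as the rescaled modular group $\Delta_{H(V_+)}^{-it/2\pi}$ via Theorem \ref{Borch} and Lemma \ref{inc2}, with the irreducible case reduced to the cyclic case by Proposition \ref{prop:cicl}. The only (immaterial) deviations are that you run Borchers' theorem on lightlike rather than timelike translation subgroups and spell out the complementation bookkeeping (via Lemma \ref{lem:1}) that the paper leaves implicit.
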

\begin{proof}
Let $H(V_+)$ be the closed, real linear subspace generated by $H(O)$ as $O$ runs in the double cones contained in $V_+$, and similarly for $H(V_-)$.  $H(V_+)$ (and $H(V_-)$) is cyclic as it contains a cyclic real linear subspace $H(O)$ by assumptions (if $H(O)$ is cyclic, all its translated are cyclic).
Since $H(V_+)\subset ZH(V_-)'$ by Proposition \ref{prop:dil}, $H(V_+)$ and $H(V_-)$ are also separating,  hence standard subspaces. Set
\[
D(2\pi t)=\Delta_{H(V_+)}^{-it}\ ,\quad t\in\RR\ .
\]
Then, by Lemma \ref{inc2}, $D(t)$ commutes with $U(g)$ if $g$ is in the Lorentz group, because $gV_+ = V_+$, so $U(g)H(V_+)= H(V_+)$. 
 
Thanks to positivity of the energy, the one-particle version of Borchers'
theorem (Thm.\ \ref{Borch}) applies to all one-parameter groups of
timelike translations. Since the latter generate all translations,
we conclude that $D(s)$ scales the translations:
\[
D(s)U(x)D(-s) = U(e^s x)\ ,\quad s\in\RR\ ,
\]
if $x$ is in the translation group. Thus $U$ is dilation covariant, with dilation unitaries $D(t)$.

The statement for $U$ irreducible then follows immediately by Prop.\
\ref{prop:cicl}. 
\end{proof}
\section{Infinite spin states are not localized in bounded regions}
We give here our main result.
\begin{theorem}\label{teo:noloc} 
Let $U$ be an irreducible unitary, positive energy, massless, infinite spin representation of $\tPoi$ on a Hilbert space $\H$, and  
$H: \W\ni W\longmapsto H(W)\subset \H$ a $U$-covariant net of standard subspaces satisfying properties 1--5.  
Then
\begin{equation}H(O) \equiv \bigcap_{\W\ni W\supset O}H(W)=\{0\}\ ,
\end{equation}
for every double cone $O\in\O$.
\end{theorem}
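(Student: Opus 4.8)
The plan is to argue by contraposition and reduce the statement to the two structural results already in hand, since the genuinely analytic work has been packaged upstream. The logical skeleton is: a single nonzero localized subspace in a double cone forces $U$ to be dilation covariant, and dilation covariance forces finite spin --- contradicting the infinite spin hypothesis.

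First I would suppose, toward a contradiction, that $H(O) \neq \{0\}$ for some double cone $O \in \O$. Because $U$ is irreducible and $H$ satisfies properties 1--5, the irreducible half of Proposition \ref{prop:dil} applies verbatim: the nonvanishing of a single $H(O)$ already upgrades, via the Reeh-Schlieder-type argument of Proposition \ref{prop:cicl}, to cyclicity of $H(\tilde O)$ for every $\tilde O$ with $O \Subset \tilde O$, and hence to the construction of a one-parameter dilation group $D$ commuting correctly with $U|_{\tilde\LL_+}$. Thus $U$ is dilation covariant in the sense defined just before Proposition \ref{dil-inv}.

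Next I would invoke Proposition \ref{dil-inv}: for an irreducible, positive-energy representation of $\tPoi$, dilation covariance holds if and only if the representation is massless with finite spin. Since $U$ is massless by hypothesis, dilation covariance would force finite spin, contradicting the assumption that $U$ is an infinite spin representation. Therefore no double cone $O$ with $H(O) \neq \{0\}$ can exist, which is exactly the assertion $H(O) = \{0\}$ for every $O \in \O$.

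Because the final assembly is immediate, the main obstacle is not in this step but upstream, in establishing dilation covariance. If I were building that part from scratch, the delicate point I would expect to fight with is showing that $H(V_+)$ is \emph{standard}: cyclicity is easy, since $H(V_+)$ contains the cyclic subspace $H(O)$ (all translates of a cyclic $H(O)$ being cyclic), but the separating property requires the timelike commutativity $H(V_+) \subset Z H(V_-)'$. This in turn rests on the Huygens principle for the wave equation, which forces the imaginary part of the two-point function to vanish throughout the timelike complement (Lemma \ref{lem:timecom}). Once $H(V_+)$ is known to be standard, the one-particle Borchers theorem (Theorem \ref{Borch}), applied to the timelike translation subgroups that generate all of the translations, delivers the scaling relation $D(s)U(x)D(-s) = U(e^s x)$ and hence the dilations.
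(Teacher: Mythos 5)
Your proof is correct and takes essentially the same route as the paper's own two-line argument: assume $H(O)\neq\{0\}$, apply the irreducible case of Proposition \ref{prop:dil} to obtain dilation covariance, and contradict Proposition \ref{dil-inv}. Your account of the upstream machinery (cyclicity via Proposition \ref{prop:cicl}, standardness of $H(V_+)$ through the Huygens-principle Lemma \ref{lem:timecom}, and the one-particle Borchers theorem) also matches the paper's reasoning faithfully.
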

\begin{proof}
If $H(O)\neq \{0\}$ for some double cone $O$, then by Proposition \ref{prop:dil} $U$ must be dilation covariant, which is not possible by Proposition \ref{dil-inv}.
\end{proof}
The consequences of this theorem in Quantum Field Theory will be discussed in Section \ref{sec:QFT}.
\section{A counter-example}
\label{ce}
In this section, 
we are going to see how dilation covariance and the
double cone Reeh-Schlieder property for infinite spin (reducible) representations may both hold if the Bisognano-Wich\-mann property fails. We shall indeed show that a multiple of the direct integral
\[
\int_{\RR_+}^\oplus U_\kappa d\kappa
\]
over all irreducible representations $U_\kappa$ of $\Poi$ of infinite
spin $\kappa$ is dilation covariant and admits a local covariant net
of standard subspaces, cyclic on double cones. Similar examples were
put forward in \cite{OT,Y69}. 

For the sake of the example, it is sufficient to consider
representations $V$ of $\SLC$ that factor through $\LL_+^\uparrow$,
i.e., $V(1)=V(-1)$. Namely, $V$ is a true representation of
$\LL_+^\uparrow$. Since the choice of the pre-image of the covering map
$\sigma$ does not matter in true representations, we shall identify
$A\in\SLC$ with $\sigma(A)\in\LL_+^\uparrow$ in this section, and
again suppress the corresponding label $\varepsilon=0$.

The subgroup $\tilde E(2)\subset\SLC$, the pre-image of $E(2)$ through $\sigma$, is given by \eqref{E2}.

Let $U_0$ be the unitary, massless, zero helicity, representation of the Poincar\'e group  and $V$ a real unitary representation  $\LL_+^\uparrow$ on the Hilbert spaces $\H$ and $\K$ respectively. 
With $J$ an anti-unitary involution on $\K$ commuting with $V,$ the vectors fixed by $J$ form a standard subspace $K$ of $\K$ and $V(\LL_+^\uparrow)K=K$, $J_{K} = J$, $\Delta_{K} = \eins$.
In particular the constant net of standard subspaces $K(W) \equiv K$ is $V$-covariant. 

We consider $V$ as a representations of $\Poi$ where the translation group acting identically. 

Consider the following net of standard subspaces of $\K\otimes\H$
\[
H_I :\W\ni W\longmapsto H_I(W)\equiv K\otimes H(W)\subset \K\otimes\H
\]
 where $H\equiv H_{U_0}$ is the canonical net associated with $U_0$. 
 There are two unitary representations of the $\cP_+^\uparrow$ on $\K\otimes\H$:
\[
U_V \equiv V\otimes U_0 
\] 
and 
\[
U_I \equiv I\otimes U_0 \ ,
\] 
where $I$ is the identity representation of $\Poi$ on $\K$. Clearly $U_V$ and $U_I$ are massless representations, as the energy-momentum spectrum is that of $U_0$.

$H_I$ is the canonical net associated with $U_I$. The net $H_I$ is  both $U_V$-covariant and $U_I$-covariant.
Only $U_I$ satisfies the Bisognano-Wichmann property as, by Lemma \ref{lem:b}, the modular operator of $K\otimes H(W)$ is $\eins\otimes\Delta_{H(W)}$. Then by Lemma \ref{lem:2}
\[
H_I(O)=\bigcap_{W\supset O} H_I(W)=K\otimes\bigcap_{W\supset O}H(W)
\]
is cyclic, since $\bigcap_{W\supset O}H(W)$ is cyclic in $\H$. 

So we have shown the following.
\begin{proposition}
The net $H_I$ of standard subspaces is local, $U_V$-covariant, and cyclic on double cones. $U_V$ decomposes into a direct integral of infinite spin representation. $U_V$ does not satisfy the Bisognano-Wichmann property.
\end{proposition}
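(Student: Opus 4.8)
The plan is to read off the first, third and fifth assertions directly from the tensor--product structure $H_I(W)=K\otimes H(W)$, and to extract the decomposition of $U_V$ from the behaviour of $V$ on the little group. The structural fact I would isolate first is that $K$ is its own symplectic complement: since $\Delta_K=\eins$, the Tomita operator is $S_K=J_K\Delta_K^{1/2}=J_K$, so $K=\Ker(S_K-\eins)$ is exactly the fixed-point space of $J=J_K$; hence $JK=K$ and $K'=J_KK=JK=K$.

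For locality I would compute, using Proposition \ref{com}, $H_I(W)'=(K\otimes H(W))'=K'\otimes H(W)'=K\otimes H(W)'$. As $H=H_{U_0}$ is the canonical net of the scalar representation it is wedge dual, $H(W)'=H(W')$, so $H_I(W')=K\otimes H(W')=H_I(W)'$; thus $H_I$ is wedge dual and (as $U_V$ is bosonic) local. For $U_V$-covariance I would write $U_V(g)=V(g)\otimes U_0(g)$ and note that $V$ is trivial on translations with $V(\LL_+^\uparrow)K=K$, so $V(g)K=K$, while $U_0(g)H(W)=H(gW)$; hence $U_V(g)H_I(W)=K\otimes H(gW)=H_I(gW)$ for all $g\in\Poi$.

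Cyclicity on double cones and the failure of Bisognano--Wichmann are then short. Applying Lemma \ref{lem:2} to the one-element family $\{K\}$, whose intersection $K$ is cyclic, and to $\{H(W):W\supset O\}$, whose intersection $H(O)$ is cyclic because the scalar representation has finite helicity, gives $H_I(O)=\bigcap_{W\supset O}(K\otimes H(W))=K\otimes H(O)$; this is cyclic since $K\odot H(O)+i(K\odot H(O))=(K+iK)\odot(H(O)+iH(O))$ is dense in $\K\otimes\H$. By Proposition \ref{lem:b} the modular operator of $H_I(W)$ is $\Delta_K\otimes\Delta_{H(W)}=\eins\otimes\Delta_{H(W)}$, so $\Delta_{H_I(W)}^{it}=\eins\otimes U_0(\Lambda_W(-2\pi t))=U_I(\Lambda_W(-2\pi t))$; this agrees with $U_V(\Lambda_W(-2\pi t))=V(\Lambda_W(-2\pi t))\otimes U_0(\Lambda_W(-2\pi t))$ only when $V$ is trivial on boosts, which it is not, so $H_I$ violates Bisognano--Wichmann relative to $U_V$.

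The hard part is the decomposition of $U_V$ into infinite spin representations. Here I would use the tensor (projection) identity for induced representations: writing $U_0=\Ind_{\overline{\Stab}_q\uparrow\tPoi}\bar V_0$ with $V_0$ the trivial representation of $\tilde E(2)$ (zero helicity), and $V$ trivial on translations, one gets $U_V=V\otimes U_0\cong\Ind_{\overline{\Stab}_q\uparrow\tPoi}\big((V|_{\overline{\Stab}_q})\otimes\bar V_0\big)$, whose inducing representation is $V|_{\tilde E(2)}$ tensored with the character $q$ on $\RR^4$. Decomposing $V|_{\tilde E(2)}$ into irreducibles and using that induction commutes with direct integrals, $U_V$ becomes the corresponding direct integral of the representations $U_\kappa$ together with the finite-helicity ones; by \eqref{E2} and the class $(a)$/$(b)$ dichotomy, a finite-helicity summand appears exactly where the null-translation subgroup $\RR^2\subset\tilde E(2)$ has invariant vectors. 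The obstacle, and the point where the choice of $V$ enters, is to rule these out, i.e.\ to check that $V|_{\RR^2}$ has spectrum avoiding $0$; for $V$ with this property (and admitting the conjugation $J$, so that $K$ exists) one has $V|_{\tilde E(2)}=\int_{\RR_+}^\oplus V_\kappa\,d\mu(\kappa)$ purely of class $(b)$, whence $U_V\cong\int_{\RR_+}^\oplus U_\kappa\,d\mu(\kappa)$ is a direct integral of infinite spin representations.
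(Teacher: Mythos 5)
Your treatment of the net-theoretic assertions is correct and essentially identical to the paper's: locality and $U_V$-covariance from the tensor-product structure together with $K'=K$ and Proposition \ref{com}, cyclicity of $H_I(O)=K\otimes H(O)$ via Lemma \ref{lem:2}, and failure of the Bisognano-Wichmann property because $\Delta^{it}_{H_I(W)}=\eins\otimes\Delta^{it}_{H(W)}=U_I\big(\Lambda_W(-2\pi t)\big)\neq U_V\big(\Lambda_W(-2\pi t)\big)$. Also your route to the disintegration of $U_V$ --- the projection formula $V\otimes\Ind_{\overline{\Stab}_q\uparrow\tPoi}\bar V_0\cong\Ind_{\overline{\Stab}_q\uparrow\tPoi}\big(V|_{\overline{\Stab}_q}\otimes\bar V_0\big)$ together with commutation of induction with direct integrals --- is a legitimate abstract formulation of what the paper does concretely with the unitary $\phi(p)\mapsto V(B_p^{-1})\phi(p)$ and eq.\ \eqref{U'}; up to this point the two arguments agree in substance.

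The genuine gap is the final step, which you leave as an unproved hypothesis on $V$: ruling out the class $(a)$ (finite helicity) components. First, the condition you propose --- that the joint spectrum of $V|_{\RR^2}$ avoids $0$ --- can never hold: by \eqref{resc}, conjugation by $\alpha(t)$ rescales the $\tilde E(2)$-translations, so that spectrum is dilation invariant; being closed and non-empty it always contains $0$. The correct condition is that the spectral measure of the translations has no atom at $0$, i.e., that $V$ admits no non-zero vector fixed by the $\tilde E(2)$-translations, and this must be \emph{derived} from the paper's actual hypothesis, namely that $V$ does not contain the trivial representation of $\LL_+^\uparrow$. This is precisely what the paper's Lemma \ref{decomp} supplies, and the key input you are missing is the Howe--Moore vanishing-of-matrix-coefficients theorem: since the Lorentz group is (covered by) a simple, connected, non-compact Lie group with finite centre, absence of the trivial subrepresentation forces $\langle\xi,V(g)\eta\rangle\to 0$ as $g\to\infty$, so no non-zero vector can be invariant under the non-compact translation subgroup of $\tilde E(2)$. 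Without this ingredient your conclusion is conditional on a property that, as you state it, is vacuous, so the second assertion of the Proposition is not established. (The further refinement in Lemma \ref{decomp} --- that the measure is equivalent to Lebesgue measure with constant multiplicity, again by the dilation argument --- is needed only for the subsequent statement that $U_V$ is a \emph{multiple} of $\int_{\RR_+}^\oplus U_\kappa\,d\kappa$, not for the present one.)
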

\noindent
We notice that the canonical net $H_V$ associated with $U_V$ is not
covariant under the representation $U_I$.

We will now show that $U_{V}$ decomposes in a direct integral of infinite spin representations if $V$ does not contain the trivial representation. 

Let $V_+\backslash\{0\}\ni p\mapsto B_p\in \LL_+^\uparrow$ be a continuous map, with $B_p$ a Lorentz transformation mapping $q=(1,0,0,1)$ to $p$.

We can identify as usual the elements of $\H$ with $L^2$-functions on $\partial V_+\setminus\{0\}$ w.r.t.\ the Lorentz invariant measure, thus elements of $\K\otimes H$ with $\K_0$-valued $L^2$-functions.

The following unitary operator
\[
\K\otimes\H\ni\big(p\mapsto \phi(p)\big)\longmapsto \big(p\mapsto V(B_p^{-1})\phi(p)\big)\in\K\otimes\H\]
intertwines $U_{V}$ with the representation $U'_{V}$ given by
\begin{equation}\label{U'}
\big(U'_{V}(a,A)\phi\big)(p)=e^{ia\cdot p}V(B_p^{-1}AB_{A^{-1}p}) \phi(A^{-1}p),\qquad \phi\in\H .
\end{equation}
Since $B_p^{-1}AB_{A^{-1}p}\in\Stab_q=E(2)$ we may consider the irreducible disintegration of $V|_{E(2)}$, then $U'_V$, thus $U_V$, will accordingly disintegrate.

Since $\SLC$ is a simple, connected, non-compact Lie group with
finite centre, the vanishing of the matrix coefficients theorem by
Howe-Moore \cite{Z} ensures that
$
\lim_{g\to \infty}\langle\xi,V(g)\eta\rangle=0$, for all  $\xi,\eta\in\K$,
if $V$ does not contain the identity representation. 
\begin{lemma}\label{decomp}
Let $V$ be a unitary representation of $\LL_+^\uparrow$ not containing the identity representation. Then $V|_{E(2)}$ is a multiple of $\int_{\RR_+}^\oplus V_{\kappa} d\kappa$, where $V_{\kappa}$ is the unitary irreducible representation of ${E}(2)$ with radius $\kappa$.

\end{lemma}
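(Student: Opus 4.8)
The plan is to restrict $V$ to the abelian normal subgroup $\RR^2\subset E(2)$, diagonalise it by the SNAG theorem, and read off the decomposition from the geometry of the rotation action on the joint spectrum. First I would write $V(\tau(z))=\int_{\widehat{\RR^2}} e^{i\langle p,z\rangle}\,dE(p)$ for the projection-valued measure $E$ of $V|_{\RR^2}$ on $\widehat{\RR^2}\cong\RR^2$. The rotation subgroup $\SO(2)\subset E(2)$ acts on $\RR^2$ with orbits the circles $\{|p|=\kappa\}$, $\kappa>0$, together with the fixed point $\{0\}$; spectral weight on $\{0\}$ corresponds to class-$(a)$ (finite spin) components, while weight on a circle of radius $\kappa$ produces the infinite spin irreducible $V_\kappa$.

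The first substantive step is to show $E(\{0\})=0$, i.e.\ that $V|_{\RR^2}$ has no nonzero invariant vector, and this is where the hypothesis enters. If $\eta\neq0$ satisfied $V(\tau(z))\eta=\eta$ for all $z$, then $\langle\eta,V(\tau(z))\eta\rangle=\|\eta\|^2$ would be constant and nonzero along the families $\tau(z)$, which leave every compact subset of $\LL_+^\uparrow$ as $|z|\to\infty$; this contradicts the Howe--Moore vanishing of matrix coefficients recalled above, valid because $V$ contains no trivial subrepresentation. Hence $E$ is concentrated on $\RR^2\setminus\{0\}$, so no class-$(a)$ component survives, and the Mackey machine for $E(2)=\RR^2\rtimes\SO(2)$ — each circle being an orbit with trivial little group, hence carrying a single irreducible $V_\kappa$ — yields a decomposition $V|_{E(2)}\cong\int_{\RR_+}^\oplus V_\kappa\,d\mu(\kappa)$ over the orbit space $\RR_+$, for some measurable multiplicity function $\kappa\mapsto m(\kappa)$.

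It remains to identify the measure class and the multiplicity, and here the full Lorentz structure is indispensable. The relation \eqref{resc} gives $V(\alpha(t))V(\tau(z))V(\alpha(t))^{-1}=V(\tau(e^t z))$, so conjugation by the boost $V(\alpha(t))$ carries $E$ to its image under the dilation $p\mapsto e^t p$; in particular it intertwines the spectral fibres over the circles of radius $\kappa$ and of radius $e^t\kappa$. Passing to the orbit space via $\log:\RR_+\to\RR$, the scaling $\kappa\mapsto e^t\kappa$ becomes translation, so $\mu$ is quasi-invariant under the full translation group of $\RR$ and $m(\kappa)$ is invariant under this (ergodic) flow. A $\sigma$-finite measure quasi-invariant under all translations is equivalent to Haar measure, whence $\mu\sim d\kappa$, and an invariant measurable multiplicity is almost everywhere a constant $m$. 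Since equivalent measures give unitarily equivalent direct integrals, I conclude $V|_{E(2)}\cong m\cdot\int_{\RR_+}^\oplus V_\kappa\,d\kappa$. I expect this last homogenisation to be the main obstacle: the decomposition into infinite spin pieces follows quickly from Howe--Moore and Mackey, but pinning down the Lebesgue measure class and the constancy of the multiplicity genuinely requires the boost covariance together with the quasi-invariance and ergodicity facts, plus the routine measurable-field bookkeeping needed to make the direct integral identifications precise.
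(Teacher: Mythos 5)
Your proposal is correct and follows essentially the same route as the paper: Howe--Moore rules out the radius-zero (translation-fixed) part of $V|_{E(2)}$, and the boost--dilation relation \eqref{resc} forces the measure in the direct-integral decomposition to be quasi-invariant under scaling (hence Lebesgue) and the multiplicity to be scaling-invariant (hence a.e.\ constant). The paper's proof is just a condensed version of this, stating the covariance identity \eqref{intV} directly rather than spelling out the SNAG/Mackey and ergodicity bookkeeping that you make explicit.
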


\begin{proof}
By the vanishing of the matrix coefficients theorem, there is no non-zero vector fixed by $V
\cdot\tau$, thus no radius zero representation appears in the irreducible direct integral decomposition of $V|_{E(2)}$, namely $V|_{E(2)}=\int_\RR m(\kappa)V_{\kappa} d\mu(\kappa)$, where $m(\kappa)$ is the multiplicity function and $\mu$ is a Borel measure on $\RR_+$.

The one-parameter subgroup $\a$ of $\SLC$ given in \eqref{E2} acts as dilation on the translations $\tau$, eq.\ \eqref{resc},
thus 
\begin{equation}\label{intV}
V|_{E(2)}=
\int^\oplus_\RR m(\kappa)V_{\kappa} d\mu(\kappa) 
=\int^\oplus_\RR m(\kappa)V_{e^t\kappa} d\mu(\kappa)
=\int^\oplus_\RR m(e^{-t}\kappa)V_{\kappa} d\mu_t(\kappa)
\end{equation}
where $\mu_t(\kappa) \equiv \mu(e^{-t}\kappa)$, and this implies that $\mu_t$ is equivalent to $\mu$ (thus $\mu$ is equivalent to the Lebesgue measure) and $m$ constant $\mu$-almost everywhere.
\end{proof}
The following Proposition is a consequence of the above Lemma.
\begin{proposition}
$
U_{V}$ is a multiple of $\int_{\RR_+}^\oplus U_{\kappa}d\kappa
$,
where $U_{\kappa}$ is the infinite spin, radius $\kappa$ representation of $\Poi$.
\end{proposition}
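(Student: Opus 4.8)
The plan is to exploit the explicit Mackey-type realization $U'_V$ of $U_V$ in \eqref{U'} together with the decomposition of $V|_{E(2)}$ supplied by Lemma \ref{decomp}. The first observation I would make is that the realization \eqref{U'} involves the representation $V$ \emph{only through its restriction to the little group}: for every $p\in\partial V_+\setminus\{0\}$ and $A\in\SLC$ the cocycle element $B_p^{-1}AB_{A^{-1}p}$ fixes $q$ (indeed $B_{A^{-1}p}$ sends $q$ to $A^{-1}p$, then $A$ sends this to $p$, and $B_p^{-1}$ sends $p$ back to $q$), hence lies in $\Stab_q = E(2)$. Consequently $U_V$ is, up to unitary equivalence, precisely the representation of $\Poi$ induced from $g\mapsto V|_{E(2)}(g)\,q(\cdot)$ on $\overline{\Stab}_q = \RR^4\rtimes E(2)$, and it depends on $V$ solely via $V|_{E(2)}$.

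Next I would feed in Lemma \ref{decomp}: since $V$ does not contain the identity representation, $V|_{E(2)}$ is a multiple of $\int_{\RR_+}^\oplus V_\kappa\,d\kappa$. Writing $\K\cong\int_{\RR_+}^\oplus \K_\kappa\,d\kappa$ for the corresponding disintegration of the representation space (with constant multiplicity, by the lemma), I would observe that every operator $U'_V(a,A)$ acts on the $\K$-valued $L^2$-functions of \eqref{U'} \emph{fiberwise} in $\kappa$: it multiplies $\phi(A^{-1}p)$ by $V|_{E(2)}$ evaluated at the cocycle, and $V|_{E(2)}$ is diagonalized by the disintegration $\int^\oplus V_\kappa\,d\kappa$. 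Hence the whole family $\{U'_V(a,A)\}$ respects this direct-integral decomposition, and $U_V$ disintegrates accordingly as $\int_{\RR_+}^\oplus W_\kappa\,d\kappa$ with constant multiplicity.

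Finally I would identify each fiber $W_\kappa$. On the fiber labelled by $\kappa$ the operators act by \eqref{U'} with $V$ replaced by $V_\kappa$; but this is exactly the standard Wigner--Mackey realization of $\Ind_{\overline{\Stab}_q\uparrow\Poi}\bar V_\kappa = U_\kappa$, the infinite-spin radius-$\kappa$ representation of \eqref{infspin}. Thus $W_\kappa\cong U_\kappa$, and $U_V$ is a multiple of $\int_{\RR_+}^\oplus U_\kappa\,d\kappa$, as claimed.

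The step requiring the most care is the second one: one must verify that the measurable-field structure underlying the disintegration of $V|_{E(2)}$ lifts to a measurable field of Hilbert spaces over $(\partial V_+\setminus\{0\})\times\RR_+$ on which \eqref{U'} acts, so that the formal fiberwise action is genuinely a direct-integral decomposition of $U_V$. Equivalently, this is the standard fact that induction commutes with direct-integral decompositions of the inducing representation, a consequence of the Mackey machine; once this compatibility is granted, the remaining identifications are immediate from the explicit formula \eqref{U'}.
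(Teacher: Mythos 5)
Your proof is correct and follows essentially the same route as the paper: the paper's (very terse) proof likewise combines the realization \eqref{U'} -- where $V$ enters only through the cocycle $B_p^{-1}AB_{A^{-1}p}\in\Stab_q=E(2)$ -- with the disintegration of $V|_{E(2)}$ from Lemma \ref{decomp} to conclude that $U_V$ disintegrates fiberwise into the $U_\kappa$. Your write-up merely makes explicit the fiberwise-action and measurability points that the paper leaves implicit.
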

\begin{proof} One considers the disintegration of $V|_{E(2)}$ obtained in Lemma \ref{decomp} and  concludes the thesis by formula \eqref{U'}. 
\end{proof}
\section{Extensions to spacetime dimension $s\geq 2$}
In this section we are going to extend 
Propositions \ref{dil-inv} and
\ref{prop:dil}, and hence also Theorem \ref{teo:noloc},
in any spacetime dimensions $s \geq 2$.
\subsection{Dilation covariance}
\label{dilcov}
We begin by discussing the dilation covariance property.

The proper Lorentz group is $\LL_+\equiv\LL_+(s)=\SO(1,s)$, i.e., the group of $d\times d$ real matrices $A$ preserving  the Minkowski metric $\langle1,-1,\ldots,-1\rangle$. $\LL_+$ has two connected components and we denote by $\LL_+^\uparrow$ the connected component of the identity.

$\LL_+^\uparrow$ is not simply connected when $s>1$. Any element in $\LL_+^\uparrow$ is the product of a rotation and a boost, so $\LL_+^\uparrow$ is homotopy equivalent to $\SO(s)$, whose first  homotopy group is $\Z_2$ if $s>2$ and $\Z$ if $s=2$ (see \cite{LM}). Therefore the universal covering 
$\widetilde\LL_+^\uparrow$ of $\LL_+^\uparrow$ is a double covering for $s>2$, whereas it is an infinite sheet covering if $s=2$.  We shall thus treat the case $s=2$ separately.

The proper orthochronous Poincar\'e group $\cP_+^\uparrow\equiv\cP_+^\uparrow(s)$ is the semi-direct product of $\cP_+^\uparrow\equiv\RR^{s+1}\rtimes \LL_+^\uparrow$, with the natural action of $\LL_+^\uparrow$ on $\RR^{s+1}$.

We shall consider unitary representations of the universal covering group $\tPoi =\RR^{s+1} \rtimes \widetilde\LL_+^\uparrow$, as they correspond to the projective unitary, positive energy representations of $\cP_+^\uparrow$. 

We are interested here in an irreducible, positive energy, massless representation $U$ of $\tPoi$. 
We choose and fix the point $q\equiv q_s=(1,0,\ldots,0,1)$ in the Lorentz orbit $\partial V_+\backslash\{0\}$.
If $U$ is non-trivial, then $U$  is associated with a unitary, irreducible representation of the little group  of $q$, by inducing representations as in Sect. \ref{masslessrep}. 

The little group of $q$, namely the stabiliser subgroup of $\widetilde\LL_+^\uparrow$ for the action of $\LL_+^\uparrow$ on $\RR^{s+1}$, 
is isomorphic to $\tilde E(s-1)$, the double cover of the Euclidean group $E(s-1)$ on $\RR^{s-1}$, $s>2$, i.e., $E(s-1)$ is the semi-direct product $\RR^{s-1}\rtimes \SO(s-1)$. If $s=2$, the little group is the abelian group $\RR$. We now assume $s>2$, afterwords we shall indicate the modifications in the $s=2$ case.

Every unitary representation $V$ of $\tilde E(s-1)=\RR^{s-1} \rtimes
\widetilde\SO(s-1)$ is now induced by a unitary representation of the
stabiliser of a point in $\RR^{s-1}$  (for the adjoint action of $\tilde E(s-1)$). Points in the same orbit give equivalent representations.
The orbits in $\RR^{s-1}$ under the natural $\SO(s-1)$
action are spheres of radius $\kappa\geq 0$. Such radii define
inequivalent classes of unitary representations. As in the $3+1$-dimensional case, there are two cases:
\begin{itemize}\itemsep0mm
\item the restriction $V|_{\RR^{s-1}}$ is trivial ($\kappa=0$); 
\item the restriction $V|_{\RR^{s-1}}$ is non-trivial ($\kappa>0$).
\end{itemize}
If $U$ is associated, by induction, with a representation $V$ of the little group $\tilde E(s-1)$ with $\kappa = 0$ we say that $U$ has \emph{finite helicity}, in the case $\kappa > 0$ we say that $U$ has {\emph{infinite spin}.

With $V$ an irreducible representation of $\tilde E(s-1)$ of radius $\kappa >0$ as above, $s>2$, the joint spectrum of the $\tilde E(s-1)$-translation generators is the sphere in $\RR^{s-1}$ of radius $\kappa$. Therefore 
\begin{equation}\label{spectrum}
{\rm spec}\,(iV(X)) = [-\kappa,\kappa]
\end{equation}
where $X$ is any generator of the $E(s-1)$-translations and $V(X)$ the corresponding translation generator in the representation $V$.
\smallskip

\noindent
We show now that infinite spin representations are not dilation covariant:
\begin{proposition}\label{prop:s-nodil}
Let $U$ be an irreducible, positive energy, unitary representation of $\tPoi(s)$, $s\geq 2$. Then $U$ is dilation covariant iff $U$ is massless with finite spin.
\end{proposition}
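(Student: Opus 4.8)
The plan is to reduce everything to the $3+1$-dimensional argument of Proposition \ref{dil-inv}, the only genuinely new ingredient being the behaviour of the $E(s-1)$-translations under the distinguished boost. I would argue the contrapositive: a dilation covariant $U$ must be massless of finite spin, i.e.\ I must show that if $U$ is massive, or massless of infinite spin, then $U$ is \emph{not} dilation covariant, meaning $U$ is inequivalent to $U\cdot\delta_t$ for some $t\neq0$, where $\delta_t$ is the automorphism of $\tPoi(s)$ scaling the translations by $e^t$ and fixing $\widetilde\LL_+^\uparrow$. The converse implication (finite helicity $\Rightarrow$ dilation covariant) is the classical fact that massless finite helicity representations extend to the conformal group, which I would only recall.

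For the massive case the obstruction is immediate: the mass $m=\sqrt{p\cdot p}$ is a Casimir invariant, constant on the momentum support, and $\delta_t$ rescales $p\mapsto e^tp$, so the mass of $U\cdot\delta_t$ is $e^tm\neq m$ for $t\neq0$; positivity of the energy makes $m$ well defined and renders the two representations inequivalent. For the massless infinite spin case ($\kappa>0$, $s>2$) I would follow verbatim the strategy of Proposition \ref{dil-inv}. Let $\alpha(t)$ be the boost in the $x_s$-direction of eq.\ \eqref{alpha}, so that $\alpha_t(q)=e^tq=\delta_t(q)$; then the automorphism $\beta_t\equiv\alpha_{-t}\cdot\delta_t$ of eq.\ \eqref{beta} fixes $q$, and since $\alpha_{-t}$ is inner, $U_\kappa\cdot\delta_t=U_\kappa\cdot\beta_t$. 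As $\beta_t$ fixes $q$ it preserves the little group $\tilde E(s-1)$, so the analogue of Corollary \ref{cor} (which is purely group-theoretic and dimension-independent, resting only on Lemma \ref{ind}) gives $U_\kappa\cdot\delta_t=U_{\kappa'}$ with $V_{\kappa'}=V_\kappa\cdot\beta_t|_{\tilde E(s-1)}=V_\kappa\cdot\alpha_{-t}|_{\tilde E(s-1)}$.

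It remains to identify $\kappa'$, and this is where the only real work lies, and where I expect the main obstacle. The key point, the higher-dimensional analogue of the commutation relation \eqref{resc}, is that conjugation by $\alpha(t)$ scales the translation subgroup $\RR^{s-1}\subset\tilde E(s-1)$ by $e^t$ while commuting with the transverse rotation subgroup $\widetilde\SO(s-1)$. I would verify this from the explicit action of $\alpha(t)$ on $\RR^{s+1}$: it fixes the transverse coordinates $x_1,\dots,x_{s-1}$ and dilates by $e^t$ in the null direction spanned by $q$, so the null rotations generating the abelian normal subgroup of the little group are rescaled accordingly, whereas the transverse rotations, acting only on $x_1,\dots,x_{s-1}$, commute with $\alpha(t)$. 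Consequently $\alpha_{-t}|_{\tilde E(s-1)}$ multiplies the translation generators by $e^{-t}$, so by \eqref{spectrum} the joint spectrum of the translation generators of $V_{\kappa'}$ is the sphere of radius $e^{-t}\kappa$; that is, $\kappa'=e^{-t}\kappa$. Hence $U_\kappa\cdot\delta_t=U_{e^{-t}\kappa}$, inequivalent to $U_\kappa$ for $t\neq0$, and no dilation unitary can exist.

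Finally, the case $s=2$ is treated by the same computation with the degenerate little group $\RR$: an infinite spin representation is here a non-trivial character of the null-translation subgroup with parameter $\kappa\neq0$, again rescaled to $e^{-t}\kappa$ by $\alpha_{-t}$, so $U_\kappa\cdot\delta_t=U_{e^{-t}\kappa}$ is inequivalent to $U_\kappa$. In all cases, once the higher-dimensional rescaling relation for the null rotations is in hand, the inducing machinery of Lemma \ref{ind} applies without change and the conclusion follows exactly as in dimension $3+1$.
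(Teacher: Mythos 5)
Your proposal is correct and follows essentially the same route as the paper: pass to $\beta_t=\alpha_{-t}\cdot\delta_t$ fixing $q$, invoke the uniqueness of induced representations (Lemma \ref{ind}/Corollary \ref{cor}), identify the rescaled radius $\kappa'=e^{-t}\kappa$ from the spectrum \eqref{spectrum} of the little-group translation generators, and treat $s=2$ separately via the one-dimensional little group. The only cosmetic differences are that you verify the boost rescaling of the null rotations directly from the geometric action of $\alpha(t)$ in $s+1$ dimensions, whereas the paper imports the relation $\beta_t(X)=e^{-t}X$ from the $3+1$ computation via the inclusion $\mathfrak{lie}(E(2))\subset\mathfrak{lie}(E(s-1))$, and that you spell out the (standard) massive case, which the paper leaves implicit.
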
 
\begin{proof}
We have seen in Proposition \ref{dil-inv} in the case $s=3$ that 
\begin{equation}\label{dilat}
\beta_t(X)=e^{-t}X
\end{equation}
where  $\beta_t$, the automorphisms of $\Poi$ defined in \eqref{beta}, here acting on the Lie algebra of $\Poi$, and $X$ is a translation generator on the Lie algebra $\mathfrak{lie}(E(2))$.

Now assume $s\geq 3$. 
The inclusion ${\Poi}(3)\subset{\Poi}(s)$ restricts to an inclusion $E(2)\subset E(s-1)$ hence we have  inclusions of Lie algebras
$\mathfrak{lie}(E(2))\subset\mathfrak{lie}(E(s-1))\subset\mathfrak{lie}(\Poi)$.
 
We consider the automorphisms $\beta_t$ of $\tilde\cP_+^\uparrow(s)$ analogously defined w.r.t.\ the zero and  $s$ coordinates (the natural extension of $\beta_t$ from ${\tPoi}(3)$ to ${\tPoi}(s)$, we keep the same notation). 

Let now $U$ be an irreducible, positive energy, massless, unitary representation $U$ of $\tPoi(s)$ with infinite spin $\kappa>0$. Then $U$ is associated as above by induction with an irreducible representation $V$ of the little group $\tilde E(s-1)$ of radius $\kappa$. As in Proposition \ref{dil-inv} we have to show that $V\cdot\beta_t |_{\tilde E(s-1)}$ is a representation of radius $e^{-t}\kappa$.

Indeed, due to the relation \eqref{dilat}, with $X\in\mathfrak{lie}(E(2))\subset\mathfrak{lie}(E(s-1))$ we have
\begin{equation}
{\rm spec}\, (iV(X)) = [-e^{-t}\kappa,e^{-t}\kappa]
\end{equation}
so $U$ is not dilation covariant by the above comment.

An analogous discussion shows that finite helicity representations are dilation covariant.

The case $s=2$ is discussed here below.
\end{proof}
{\textit{Case $s=2$}}. In $2+1$ spacetime dimensions, the Lorentz group $\LL_+(2)$ is isomorphic to $\SL2/\{1,-1\}$. The little group of the point $q= (1,0,1)$ is $\RR$, which is simply connected, and lifts uniquely to a one-parameter subgroup of the universal (infinite sheet) cover ${\tilde\LL_+(2)}$. The pre-image of the little group in ${\tilde\LL_+(2)}$ is thus isomorphic to $\RR\times \mathbb Z$, with $\mathbb Z$ the centre of ${\tilde\LL_+(2)}$.

The irreducible representations of the little group $\RR\times \mathbb
Z$ are thus one-dimensional, given by a pair $(\kappa, z)$ where
$\kappa$ belongs to $\RR$ (the dual of $\RR$) and $z\in \mathbb T$ (the dual of $\mathbb Z$).

Denote by $U_{\kappa,z}$ the representation of ${\tPoi}(2)$ associated with the representation $(\kappa, z)$ of the little group.
In analogy with the higher-dimensional case, we say that a unitary representation $U_{\kappa,z}$ of ${\tPoi}(2)$ has ``\emph{infinite spin}'' if $\kappa\neq 0$. Yet, in this case, the name ``infinite spin'' does not refer to any infinite-dimensional representation.

Again, equation (\ref{dilat}) holds, thus the representation $(\kappa,z)$ composed with the restriction of $\beta_t$ to the little group is equal to $(e^{-t}\kappa,z)$. It follows that $U_{\kappa,z}$ is dilation covariant iff $\kappa=0$.
\bigskip

\noindent
We also notice that the conjugate representation of $(\kappa,z)$ is $(-\kappa, \bar z)$, thus $U_{\kappa,z}$ extends to a (anti-)unitary representation of $\cP_+(2)$, iff $\kappa = 0$ and $z=\pm 1$. The other irreducible massless representations of $\cP_+(2)$ are given by $U_{\kappa,z}\oplus U_{-\kappa,\bar z}$, with $\kappa\neq 0$ or $z\neq \pm 1$.

\subsection{Twisted timelike locality}
The second step consists of showing an analogous of Proposition \ref{prop:dil} in any spacetime dimension $s\geq 2$.

We start with a unitary massless representation $U$ acting covariantly on a net on wedges $\W\ni W\longmapsto H(W)\subset\H$ s.t.\ assumptions 1--5 hold. Furthermore, suppose that for some double cone, the subspace $H(O)$, defined as in \eqref{HO2}, is not trivial.
In this setting the proof of Proposition \ref{prop:cicl} straightforwardly extends to every spacetime dimension.

\subsubsection*{Case $s$ odd} When the space dimension $s$ is odd, the
Huygens principle holds and the proof of Proposition \ref{prop:dil} easily extends in this case. 
\subsubsection*{Case $s$ even, $s\geq2$} In this case, timelike commutativity does not hold. Our results hold true, but Lemma \ref{lem:timecom}, necessary to show that $H(V_+)$ is separating, needs a variation.

As is well known, the Huygens principle is not satisfied in odd space dimensions, due to reverberations, yet we show here a version of this principle that holds if $s$ is even.

Let $f$ be a tempered distribution on $\RR^{s+1}$; we define ${\mathfrak h}(f)$
by its Fourier transform
\[
\widehat {{\mathfrak h}(f)}(p) = -i\, {\rm sign}(p_0)\, \hat f(p),
\]
provided this expression is well defined. $\mathfrak h$ is the Hilbert transform with respect to the time variable, thus
\[
\mathfrak h(f)(x) = \frac1\pi \int_{-\infty}^\infty\frac{f(t, x_1,\dots x_s)}{x_0-t}dt
\]
(integral in the principal value sense for a continuous function).
Clearly, if $f_1\in S(\RR^{s+1})$, $f_2\in S'(\RR^{s+1})$, the convolution product satisfies
\[
\mathfrak h(f_1 * f_2) = \mathfrak h(f_1) * f_2 = f_1 * \mathfrak h(f_2) 
\]
If $f$ is a function which is the boundary value of an analytic function on the tube $\RR^{s+1}-iV^+$, $f = \Re f + i \Im f$, then ${\mathfrak h}(\Re f ) = \Im f$; we assume here that $\hat f(0)$ is defined and equal to zero (to rule out the non-zero constants), namely $f$ has zero mean.

We are interested in the case $f$ is a solution of the wave equation $\Box f = 0$, then also $\Box\mathfrak h(f)=0$.
Let $\Delta_+$ be the massless, scalar two-point function, namely the Fourier anti-transform of the Lorentz invariant measure on $\partial V_+\backslash\{0\}$. We have (up to a real proportionality constant),   see e.g.\ \cite{HL}, 
\[
\Delta_+(x) = 1/|x|^{s-1}\quad {\rm if}\quad  x^2 \equiv x_0^2 - x_1^2 -\cdots x^2_s \neq 0\ ,
\] 
where $|x| = \sqrt{-x^2}$ (with opposite square root determination in $V_\pm$)
and
\[
\aligned
&\Delta_+(x) \  \text{real},\quad \Delta_+(x)=\Delta_+(-x)\  , & \quad  x \ {\rm spacelike}\ (x^2<0) \\
&\Delta_+(x) \  \text{imaginary},\quad \Delta_+(x)=-\Delta_+(-x)\  , &  x \ {\rm timelike}\ (x^2>0) .
\endaligned 
\]
The commutator function 
\[
\Delta_0(x) = \Delta_+(x) - \Delta_+(-x)
\]
vanishes for $x$ spacelike, while the function
\[
\Delta'_0(x) = -i\big(\Delta_+(x) + \Delta_+(-x)\big)
\]
vanishes for $x$ timelike. Notice that we have
\[
\Delta'_0 = \mathfrak h(\Delta_0)\ .
\]
\begin{proposition}\label{Huy}
Let $f$ be a bounded continuous function on $\RR^{s+1}$ with $\Box f = 0$, and $O$ a double cone. If $f(x)=0$ for $x$ in the \emph{spacelike} complement of $O$, then $\mathfrak h(f)(x)=0$ for $x$ in the \emph{timelike} complement of $O$.
\end{proposition}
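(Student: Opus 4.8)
The plan is to represent $f$ through the commutator function $\Delta_0$ and then transport the support information across the light cone by applying $\mathfrak h$, which by hypothesis turns $\Delta_0$ into $\Delta'_0$. Since $\mathfrak h$, convolution and the hypotheses are all translation invariant, and the double cones entering the sequel (Lemma~\ref{lem:timecom}) are translates $x+O_r$ of the standard one, I may assume $O=O_r$, whose spacelike complement is $\{|\mathbf{x}|-|x_0|>r\}$ and whose timelike complement is $\{|x_0|-|\mathbf{x}|>r\}$. I would also mollify at the outset: replacing $f$ by $f_\varepsilon\equiv f*\phi_\varepsilon$, with $\phi_\varepsilon\in C^\infty_c$ supported in a ball of radius $\varepsilon$, yields a smooth solution of the wave equation which still has vanishing mean and still vanishes in the spacelike complement of $O_{r+2\varepsilon}$. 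Since $\mathfrak h(f_\varepsilon)=\mathfrak h(f)*\phi_\varepsilon\to \mathfrak h(f)$ as $\varepsilon\to0$, and the timelike complement of $O_{r+2\varepsilon}$ exhausts that of $O_r$, it suffices to treat smooth $f$.

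For smooth $f$ the decisive step is a propagator (Kirchhoff) representation localising $f$ onto the initial slice. Because $f$ vanishes in a neighbourhood of every $(0,\mathbf{x})$ with $|\mathbf{x}|>r$, its Cauchy data $f_0\equiv f|_{x_0=0}$ and $f_1\equiv\partial_0 f|_{x_0=0}$ are smooth and supported in $\{|\mathbf{x}|\le r\}$. Using $\Box\Delta_0=0$ together with the initial values $\Delta_0(0,\cdot)=0$ and $\partial_0\Delta_0(0,\cdot)=\delta$ (normalising $\Delta_0$ accordingly), one checks that $\Delta_0*T$ has exactly the Cauchy data of $f$, where
\[
T\equiv \delta'(x_0)\otimes f_0+\delta(x_0)\otimes f_1
\]
is a compactly supported distribution with $\operatorname{supp}T\subseteq\{0\}\times\{|\mathbf{x}|\le r\}\subseteq\overline{O_r}$; the verification of the second Cauchy datum uses $\partial_0^2\Delta_0(0,\cdot)=0$, which follows from $\Box\Delta_0=0$ and $\Delta_0(0,\cdot)=0$. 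Both $f$ and $\Delta_0*T$ solve the wave equation, so by uniqueness of the Cauchy problem $f=\Delta_0*T$.

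Now I apply $\mathfrak h$. As $T$ is compactly supported and $\mathfrak h$ commutes with convolution, the stated identities give
\[
\mathfrak h(f)=\mathfrak h(\Delta_0)*T=\Delta'_0*T .
\]
Finally, if $x$ lies in the timelike complement of $O_r$, then for every $y=(0,\mathbf{y})$ with $|\mathbf{y}|\le r$ one has $(x-y)^2=x_0^2-|\mathbf{x}-\mathbf{y}|^2\ge x_0^2-(|\mathbf{x}|+r)^2>0$, so $x-y$ is strictly timelike. Hence $\Delta'_0(x-\cdot)$ vanishes, together with all its derivatives, on a neighbourhood of $\operatorname{supp}T$, and therefore $(\Delta'_0*T)(x)=\langle T_y,\Delta'_0(x-y)\rangle=0$. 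This yields $\mathfrak h(f)(x)=0$ throughout the timelike complement, as claimed.

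The only genuinely delicate points are the representation $f=\Delta_0*T$ — which rests on finite propagation speed and uniqueness for the Cauchy problem and crucially confines all of $f$ to the slice $\{0\}\times\{|\mathbf{x}|\le r\}\subseteq\overline{O_r}$ — and the bookkeeping around $\mathfrak h$: one must invoke the standing zero-mean assumption $\hat f(0)=0$ so that the multiplier $-i\operatorname{sign}(p_0)$ acts unambiguously (the light cone meets $p_0=0$ only at the origin), and the identity $\mathfrak h(g_1*g_2)=g_1*\mathfrak h(g_2)$ to move $\mathfrak h$ onto $\Delta_0$. I expect the support/uniqueness step to be the main obstacle to write cleanly; once $f=\Delta_0*T$ is in hand, the conclusion is an immediate consequence of $\Delta'_0$ vanishing on timelike vectors.
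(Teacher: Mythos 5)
Your argument is correct and runs parallel to the paper's proof: both represent $f$ as a convolution of the commutator function $\Delta_0$ with an object localized in the double cone, move $\mathfrak h$ onto $\Delta_0$ via the identity $\mathfrak h(\Delta_0)=\Delta'_0$, conclude from the vanishing of $\Delta'_0$ at timelike points, and use mollification to reduce to smooth $f$. The genuine difference is in the representation step, which is the only nontrivial one. The paper takes $f=h*\Delta_0$ with $h$ a \emph{smooth} source supported in $O$, and the claim that every smooth solution vanishing in the spacelike complement of $O$ is of this form is asserted without proof; you instead prove your representation $f=\Delta_0*T$, with $T=\delta'(x_0)\otimes f_0+\delta(x_0)\otimes f_1$ the Cauchy-data distribution supported in $\{0\}\times\{|\mathbf{x}|\le r\}\subset\overline{O_r}$, via uniqueness of the Cauchy problem. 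Your variant is therefore self-contained where the paper's is not (manufacturing the paper's $h\in C^\infty_c(O)$ requires an extra cutoff argument such as $h=\Box(\chi f)$ with $\chi$ interpolating across $O$ inside the causal shadow). The small price is that $T$ is a distribution rather than a Schwartz function, so the commutation rule $\mathfrak h(\Delta_0*T)=\mathfrak h(\Delta_0)*T$ is needed slightly beyond the case $f_1\in S(\RR^{s+1})$, $f_2\in S'(\RR^{s+1})$ stated in the paper; this is harmless, since $\hat T$ is smooth of polynomial growth and $\hat\Delta_0$ is a measure on the light cone carrying no mass at the origin, but it deserves a sentence. Both proofs share the same residual informality in interpreting the vanishing of $\mathfrak h(f)$ pointwise for merely bounded continuous $f$; your distributional limit $\mathfrak h(f)*\phi_\varepsilon\to\mathfrak h(f)$ handles this at the same level of rigour as the paper's closing approximation step.
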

\begin{proof}
Let $h$ be a smooth function with supp$(h)\subset O$. Then $f\equiv h * \Delta_0$ satisfies $\Box f = 0$
and $f(x) = 0$ if $x\in O'$. Moreover
\[
\mathfrak h(f) = \mathfrak h (h * \Delta_0) = h * \mathfrak h(\Delta_0) = h * \Delta'_0
\]
vanishes on the timelike complement of $O$.

Now any smooth function $f$ with $\Box f=0$ and supp$(f)\subset O$ can be written $f= h * \Delta_0$ as above, hence the proposition holds true for every smooth solution of the wave equation $f$. For a general continuous $f$, one can approximate as usual $f$ by $f * j_\varepsilon$ by a smooth approximate identity $j_\varepsilon$, and get the thesis because $\mathfrak h(f * j_\varepsilon) = \mathfrak h(f )* j_\varepsilon$.
\end{proof}
We are now ready to prove the version of Lemma \ref{lem:timecom} in odd spacetime dimensions.
\begin{lemma}\label{even}
Let $U$ be a massless, unitary representation of the double cover of
the Poincar\'e group ${\Poi}(s)$, $s$ even, $s\geq 2$, on a Hilbert space $\H$. Assume that $H$ is a twisted-local, $U$-covariant net of standard subspaces of $\H$ on wedges. 
Let $O_1 ,O_2\in \O$ with $O_2$ in the timelike complement of $O_1$, then 
\begin{equation}\label{td}
H(O_2)\subset iZH(O_1)'\ .
\end{equation}
\end{lemma}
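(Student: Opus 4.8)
The plan is to mimic the proof of Lemma \ref{lem:timecom}, but to replace the direct use of the Huygens principle (which fails in odd spacetime dimension) by the Hilbert-transform version established in Proposition \ref{Huy}. First I would fix the double cone $O_r$ of radius $r>0$ centred at the origin, and for $\xi,\eta\in H(O_r)$ form the two-point function
\[
f(x)=\langle\xi,U(x)Z\eta\rangle\ ,
\]
where $U(x)$ denotes the unitary translation by $x$. As in Lemma \ref{lem:timecom}, positivity of the energy and the massless character of $U$ force the Fourier transform of $f$ (with respect to the Minkowski metric) to be a measure supported on $\partial V_+\setminus\{0\}$, so that $\Box f=0$; moreover $f$ is bounded and continuous, being a matrix coefficient of a unitary group, and it is the boundary value of a function analytic on the tube $\RR^{s+1}-iV^+$. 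Thus $f$ falls exactly within the hypotheses of Proposition \ref{Huy}.

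Next I would localize the vanishing of $\Im f$. By twisted locality for the net $H$, if $O_r+x\subset O_r'$, i.e.\ if $x\in O_{2r}'$ is spacelike to $O_r$, then $H(O_r+x)\subset ZH(O_r)'$, which gives $\Im\langle\xi,U(x)Z\eta\rangle=\Im f(x)=0$ for $x$ in the spacelike complement of $O_{2r}$. Since $\Im f=\mathfrak h(\Re f)$ for a boundary value of an analytic function on the forward tube with zero mean (as recorded just before Proposition \ref{Huy}), and since $\Box f=0$ implies $\Box\,\mathfrak h(f)=0$, I may apply Proposition \ref{Huy} with $O$ replaced by $O_{2r}$ to the real solution $\Re f$: its Hilbert transform $\mathfrak h(\Re f)=\Im f$ then vanishes on the \emph{timelike} complement of $O_{2r}$.

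From the vanishing of $\Im f$ in the timelike complement I would read off the subspace inclusion. Vanishing of $\Im\langle\xi,U(x)Z\eta\rangle$ for all $\xi,\eta\in H(O_r)$ and all $x$ with $O_r+x$ in the timelike complement of $O_r$ says precisely that $U(x)ZH(O_r)\subset H(O_r)'$ up to the symplectic pairing, and translating through the definition of the twist $Z$ (and using $\Gamma$, hence $Z$, commutes with $U$) yields $H(O_r+x)\subset iZH(O_r)'$ — the extra factor $i$ compared with Lemma \ref{lem:timecom} arising because here the relevant function that vanishes is $\Im f$, the Hilbert-transformed part, rather than $f$ itself. Since $r>0$ is arbitrary and every pair $O_1,O_2$ of double cones with $O_2$ in the timelike complement of $O_1$ can be brought into this translated configuration by enlarging $r$ and translating, this gives the claimed inclusion \eqref{td} in full generality.

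The main obstacle I anticipate is the careful bookkeeping of the imaginary unit and the twist operator: I must check that the object whose vanishing is supplied by Proposition \ref{Huy} is genuinely the symplectic form $\Im\langle\,\cdot\,,\,\cdot\,\rangle$ defining the symplectic complement, and that the Hilbert transform interchanges the roles of the spacelike-vanishing commutator function $\Delta_0$ and the timelike-vanishing function $\Delta_0'=\mathfrak h(\Delta_0)$ in exactly the way needed to convert spacelike locality into the factor-$i$ twisted timelike inclusion. A secondary technical point is verifying the zero-mean hypothesis $\hat f(0)=0$ of Proposition \ref{Huy}, which holds here because the spectral measure of $f$ is supported away from the origin on $\partial V_+\setminus\{0\}$, so no non-zero constant contribution appears.
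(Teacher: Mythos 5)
Your overall strategy---reduce to the two-point function $f(x)=\langle\xi,U(x)Z\eta\rangle$ on $H(O_r)$ and invoke Proposition \ref{Huy} in place of the Huygens principle---is the same as the paper's, but your key application of Proposition \ref{Huy} is made backwards, and this breaks the proof. Twisted locality gives the \emph{spacelike} vanishing of $\Im f$, not of $\Re f$; Proposition \ref{Huy} must therefore be applied to the bounded solution $\Im f$, yielding that $\mathfrak h(\Im f)=-\Re f$ vanishes on the timelike complement of $O_{2r}$ (using $\mathfrak h(\Re f)=\Im f$ together with $\mathfrak h^2=-\mathrm{id}$ on zero-mean boundary values of functions analytic in the tube). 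You instead apply the proposition ``to $\Re f$'', whose spacelike vanishing is not among your hypotheses---and is in fact false in general, since the massless two-point function is real and nonzero at spacelike separation---and you conclude that $\Im f$ vanishes on the timelike complement. That conclusion is also wrong: for $s$ even the two-point function is purely imaginary and odd at timelike separation, so the commutator $\Im f$ does \emph{not} vanish there; this is precisely why the lemma carries the extra factor $i$ in \eqref{td} rather than reproducing Lemma \ref{lem:timecom}.

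The error then propagates into your final bookkeeping, which is internally inconsistent: vanishing of $\Im\langle\xi,U(x)Z\eta\rangle$ for all $\xi,\eta\in H(O_r)$ would give $ZH(O_r+x)\subset H(O_r)'$, i.e.\ exactly the untwisted timelike inclusion of Lemma \ref{lem:timecom}, with no factor $i$ anywhere; your remark that the $i$ arises ``because the relevant function that vanishes is $\Im f$'' has no derivation behind it (in Lemma \ref{lem:timecom} the vanishing function is also $\Im f$). The correct route is the one forced by the Hilbert transform: from the timelike vanishing of $\Re f$ write $\Re f(x)=\Im\bigl(if\bigr)(x)=\Im\langle\xi,U(x)Z(i\eta)\rangle$, so the symplectic orthogonality holds between $H(O_r)$ and $U(x)Z\,iH(O_r)=iZ\,H(O_r+x)$, which is precisely the twisted inclusion \eqref{td}; the factor $i$ comes from the replacement $\eta\mapsto i\eta$. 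Once this step is corrected, the remaining elements of your write-up (the verification that $\Box f=0$, $f$ bounded continuous and of zero mean, and the reduction of general $O_1,O_2$ to the translated configuration with $r$ arbitrary) are sound.
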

\begin{proof}
With $O_r$ and 
$
f(x)=\langle\xi,U(x)Z\eta\rangle$, $\xi,\eta\in H(O_r)$
as in the proof of Lemma \ref{lem:timecom}, we have $\Box\Im f=0$ and $\Im f(x)=0$ if $x\in O'_{2r}$. 

Thus, by Proposition \ref{Huy}, $\mathfrak h(\Im(f)) = -\Re(f)$ vanishes in the timelike complement of $O_{2r}$; but
\[
\Re(f)(x) = \Im(if)(x)  =  \Im i\langle\xi,ZU(x)\eta\rangle = \Im \langle\xi,ZU(x)i\eta\rangle
\]
and we get the thesis.
\end{proof}
We may now extend Proposition \ref{prop:dil} in any spacetime dimension. Note that, in the following Prop.\ \ref{prop:s-dil}, the cyclicity assumption for $H(O)$ follows from $H(O)\neq \{0\}$
by Prop.\ \ref{prop:cicl}. 
 \begin{proposition}\label{prop:s-dil}
Let $U$ be a massless representation of $\tPoi$, acting covariantly on a net $H$ of standard subspaces of $\H$, satisfying 1--5, on wedges  on the $s+1$-dimensional Minkowski spacetime, with $s\geq 2$.

If $H(O)$ is cyclic for some double cone $O$, then $U$ is dilation covariant. 

Moreover the dilation one-parameter unitary group $D$ can be chosen canonically, and $D(t)\in U(\tPoi)''$, $t\in\RR$.
\end{proposition}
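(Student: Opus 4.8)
The plan is to adapt the proof of Proposition~\ref{prop:dil} to arbitrary spacetime dimension $s\geq 2$, using the even-dimensional timelike locality from Lemma~\ref{even} in place of Lemma~\ref{lem:timecom} when $s$ is even. First I would form the subspaces $H(V_+)$ and $H(V_-)$ as the closed real linear spans of $H(O)$ over double cones $O$ contained in the forward, resp.\ backward, light cone. Since $H(O)$ is cyclic, each $H(V_\pm)$ contains a cyclic subspace and is therefore cyclic. The crucial point is to show $H(V_+)$ is also separating: when $s$ is odd one invokes Huygens as in Lemma~\ref{lem:timecom} to obtain $H(V_+)\subset ZH(V_-)'$, while when $s$ is even one uses Lemma~\ref{even}, which gives the twisted timelike commutation $H(O_2)\subset iZH(O_1)'$ and hence the analogous relation $H(V_+)\subset iZH(V_-)'$. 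In either case separating follows because a symplectic complement is always separating, so $H(V_+)$ is standard.

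Next I would define the candidate dilation group by the modular operator of this standard subspace, setting $D(2\pi t)\equiv \Delta_{H(V_+)}^{-it}$. By Lemma~\ref{inc2}, since every Lorentz transformation $g$ fixes $V_+$ (so $U(g)H(V_+)=H(V_+)$), the unitary $U(g)$ commutes with $\Delta_{H(V_+)}$ and hence with $D(t)$; this secures commutation of $D$ with $U|_{\tilde\LL_+}$. To obtain the scaling action on translations I would apply the one-particle Borchers theorem (Theorem~\ref{Borch}): for any one-parameter group of \emph{timelike} future-directed translations $U(x)$, positivity of the energy and the inclusion $U(x)H(V_+)\subset H(V_+)$ for the appropriate sign of $x$ give $\Delta_{H(V_+)}^{is}U(x)\Delta_{H(V_+)}^{-is}=U(e^{-2\pi s}x)$, i.e.\ $D(s)U(x)D(-s)=U(e^s x)$. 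Since timelike translations generate the full translation group $\RR^{s+1}$, this extends to all translations, establishing dilation covariance.

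For the two new assertions I would argue as follows. Canonicity is immediate from the construction: $D$ is defined directly as $\Delta_{H(V_+)}^{-i\,\cdot\,/2\pi}$, with $H(V_+)$ intrinsically determined by the net, so there is a distinguished choice once the net is fixed. For the membership $D(t)\in U(\tPoi)''$, the natural route is to show that $\Delta_{H(V_+)}$ is affiliated to the von Neumann algebra generated by $U$. One expects $H(V_+)$ to coincide with $H_U(V_+)$ built from the canonical modular data, so that its modular group is determined by $U$; concretely, one identifies $D(t)$ as a limit of products of the boost unitaries and translation unitaries already in $U(\tPoi)$, using the scaling relation just derived together with Lorentz covariance, and concludes $D(t)$ lies in the weak closure $U(\tPoi)''$.

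The main obstacle I anticipate is the even-dimensional case, specifically ensuring that Lemma~\ref{even} delivers a genuinely separating (not merely twisted) conclusion for $H(V_+)$ and that the extra factor of $i$ in $H(V_+)\subset iZH(V_-)'$ does not disrupt the standardness argument or the subsequent application of Borchers' theorem. One must check that the twist unitary $Z$ and the factor $i$ commute with $U$ and with $\Delta_{H(V_+)}$ so that they can be absorbed harmlessly; this requires care because $Z$ involves $\Gamma=U(2\pi)$ and the sign conventions differ between Bose and Fermi cases. The second delicate point is the affiliation claim $D(t)\in U(\tPoi)''$: proving membership in the von Neumann algebra generated by $U$, rather than merely commutation with the Lorentz subgroup, needs the identification of $H(V_+)$ with the canonically defined subspace so that $\Delta_{H(V_+)}$ is intrinsically a function of $U$.
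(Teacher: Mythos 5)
Your construction of $H(V_\pm)$, the standardness argument (Huygens for $s$ odd, Lemma \ref{even} for $s$ even), the definition $D(2\pi t)=\Delta_{H(V_+)}^{-it}$, and the use of Lemma \ref{inc2} and Theorem \ref{Borch} to obtain commutation with the Lorentz unitaries and the scaling of translations coincide with the paper's proof, as does the remark that canonicity is automatic from the modular construction. (One small imprecision: a symplectic complement is \emph{not} always separating --- $K'$ is separating iff $K$ is cyclic; this is harmless here because you have already shown that $H(V_-)$ is cyclic.)

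The genuine gap is your treatment of $D(t)\in U(\tPoi)''$. No argument that uses only the derived commutation relations can work: if $U$ has nontrivial commutant (e.g.\ a representation with multiplicity, $U=U_0\otimes\eins_\K$), one may replace $D(t)$ by $D(t)V(t)$ with $V$ any unitary one-parameter group in $U(\tPoi)'$ commuting with $D$, without disturbing any of those relations; consequently your proposed identification of $D(t)$ as a weak limit of products of boost and translation unitaries cannot be forced by the scaling relation, and there is no reason such a limit presentation exists at all. Likewise, the remark that $\Delta_{H(V_+)}$ is ``intrinsically a function of $U$'' is insufficient: canonical (functorial) dependence on $U$ does not place an operator in the von Neumann algebra generated by $U$. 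The paper's argument supplies the missing content in three steps. First, by Proposition \ref{prop:ciclcone} (cone cyclicity holds via Prop.\ \ref{prop:cicl}), $U$ extends to an (anti-)unitary representation $\hat U$ of $\tilde\cP_+$ and $H=H_{\hat U}$ is the canonical net of $\hat U$. Second, when $U$ is irreducible the claim is trivial, but the delicate case is $\hat U$ irreducible with finite non-zero helicity $h$, where $U=U_h\oplus U_{-h}$ is reducible: there one uses that finite helicity representations are dilation covariant, so a dilation implementation $T(s)$ exists that decomposes according to $U_h\oplus U_{-h}$; since $T(s)D(-s)\in U(\tPoi)'$ and $U_h$, $U_{-h}$ are disjoint, $D(s)$ decomposes as well, hence $D(s)\in U(\tPoi)''$. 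Third, in the general reducible case one disintegrates $\hat U$ into irreducibles; the canonical net $H_{\hat U}$, hence $H(V_+)$ and its modular unitaries, disintegrate accordingly, and the fiberwise statements yield $D(t)\in U(\tPoi)''$. This extension-plus-disjointness-plus-disintegration argument is what your final step is missing.
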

\begin{proof}
$H(V_+)$, the closed linear span of all spaces $H(O)$ with $O\subset V_+$, is a standard subspace of $\H$ by Lemma \ref{even}, so, by positivity of the energy and Theorem \ref{Borch},
the rescaled modular unitary group $D(t) \equiv\Delta_{H(V_+)}^{ -i\frac{t}{2\pi}}$  implements dilations  on $U$-translations, and commutes with the Lorentz unitaries by Lemma \ref{inc2}. Namely $D$ implements the dilations on $U$.

Now, by Proposition \ref{prop:ciclcone}, $U$ extends to an (anti)-unitary representation $\hat U$ of $\tilde\cP_+$ on $\H$,  $\hat U$ maps the reflection around the edge of $W$ to $J_{H(W)}$, and $H(W)= H_{\hat U}(W)$, the standard subspace associated by $\hat U$ with $W$.

%%In order to show that $D(t)\in U(\Poi)''$, let $T$ be a unitary on $\H$ commuting with $U$. Then $T$ commutes with $\Delta_{H(W)}$ and with $J_{H(W)}$; thus $T$ commutes with $S_{H(W)}$, for all $W\in\W$. Hence $TH(W)=H(W)$, thus $TH(O)=H(O)$, $O\in\O$, and $TH(V_+)=H(V_+)$.
%%We then infer from Lemma \ref{inc2} that $T$ commutes with $D(2\pi t) = \Delta_{H(V_+)}^{-it}$ as desired.

Our choice of $D$ is canonical as it is given by modular unitaries. To
show that $D(t)\in U(\widetilde{\mathcal{P}}_+^\uparrow)''$, notice
that this trivially holds if $U$ is irreducible. Recall now that
finite helicity representations are dilation covariant. Assume first
that $\hat U$ is an irreducible, finite non-zero helicity $h$
representation of $\widetilde{\mathcal{P}}_+$, then $\hat U$
restricts to $U=U_{h}\oplus U_{-h}$ on
$\widetilde{\mathcal{P}}_+^\uparrow$, where $U_h$ 
is the helicity $h$ irreducible representation of
$\widetilde{\mathcal{P}}_+^\uparrow$. There is a unitary
implementation of dilations $T(s)$ which decomposes according 
to $U$. As $T(s)D(-s)\in U(\widetilde{\mathcal{P}}_+^\uparrow)'$ and
$U_{h}$ and $U_{-h}$ are disjoint, also $D(s)$ decomposes according to
$U$, and $D(t)\in U(\widetilde{\mathcal{P}}_+^\uparrow)''$ holds. 

In the general case with $U$ reducible, $U$ extends as above to a
representation $\hat U$ of $\widetilde{\mathcal{P}}_+,$ and so the
net $H_{\hat U}$ disintegrates according to $\hat U$. In
particular, $H(V_+)$ and its modular unitaries disintegrate according
to $\hat U$ and we have $D(t)\in U(\widetilde{\mathcal{P}}_+^\uparrow)''$ as
stated. 
\end{proof}
We note that, by Lemma \ref{inc}, if $s$ is even we have 
\[
H(V_+) = iZH(V_-)'\ .
\]
In particular, if $H$ is local,
we have twisted timelike duality
$H(V_+) = iH(V_-)'$, and if $H$ is purely Femi-local ($Z=-i$) we have timelike duality $H(V_+) = H(V_-)'$. 
\subsection{General result} 
We indicate in this section the modifications that are necessary to extend our results in any spacetime dimension $s+1\geq 3$.

Let $U$ be a unitary, positive energy representation of ${\tPoi}(s)$
on a Hilbert space $\H$. We assume here that a $2\pi$-rotation in
space gives a selfadjoint operator $\Gamma\equiv U(2\pi)$, i.e., the
eigenvalues of $\Gamma$ are $\pm 1$. In other words $U$ is a
representation of the double cover of ${\Poi}(s)$ which coincides with the universal cover ${\tPoi}(s)$ if $s>2$; and $\Gamma$ is the image under $U$ of the non-trivial element in the centre of the double cover.

A $U$-covariant (twisted-local) net of standard subspaces $H$ is defined as a map
\[
\cW\ni W\longmapsto H(W)\subset\H
\]
as in Section \ref{ax}.

Note that the proof Proposition \ref{prop:cicl} does not use the twisted locality property, and is valid also here.
We have:
\begin{theorem}\label{prop:f-dil} 
Let $U$ be a unitary, positive energy representation of the cover of the Poincar\'e group ${\tPoi}(s)$, acting covariantly on a net $H$ of standard subspaces of $\H$ on wedges satisfying 1--5 as above, $s\geq 2$.
\begin{itemize}\itemsep0mm
\item[$(a)$] If $H(O)$ is cyclic for some double cone $O$, then $U$ does not contain an infinite spin sub-repre\-sentation (namely there is no infinite spin fibre in the irreducible direct integral decomposition).
\item[$(b)$] If $U$ is irreducible and $H(O)\neq \{0\}$ for some double cone $O$, then $U$ is not massless with infinite spin. 
\item[$(c)$] If\, $U$ extends to an (anti-)unitary, irreducible representation $\hat U$ of $\tilde\cP_+$ on $\H$ and $H(O)\neq \{0\}$ for some double cone $O$, then $U$ does not contain an infinite spin sub-repre\-sentation.
\end{itemize}
\end{theorem}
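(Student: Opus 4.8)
The plan is to settle the three parts by a single mechanism. In each case I isolate the (necessarily massless) infinite-spin part of $U$, invoke Proposition~\ref{prop:s-dil} to obtain a one-parameter dilation group that is moreover \emph{inner}, $D(t)\in U(\tPoi)''$, and then contradict the scaling behaviour of the radius. The decisive point is this inner-ness: the counter-example of Section~\ref{ce} shows that a bare direct integral of infinite-spin representations \emph{can} be dilation covariant, so Proposition~\ref{prop:s-nodil} cannot be applied fibrewise; what actually excludes infinite spin is that the dilations produced by the modular structure lie in $U(\tPoi)''$.

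The fact I would record first is the following. Let $K\ge0$ be the radius operator, equal to $\kappa$ on each irreducible infinite-spin fibre and to $0$ on the finite-spin and massive fibres. As a Casimir, $K$ commutes with $U(\tPoi)$, so every spectral projection $P_{[a,b]}(K)$ lies in $U(\tPoi)'$; and by the computation behind Propositions~\ref{dil-inv} and~\ref{prop:s-nodil} (radius $\kappa\mapsto e^{-t}\kappa$) any dilation unitary satisfies $D(t)KD(-t)=e^{t}K$. Hence, if $D(t)\in U(\tPoi)''$ it commutes with $P_{[a,b]}(K)\in U(\tPoi)'$, while at the same time $D(t)P_{[a,b]}(K)D(-t)=P_{[e^{-t}a,e^{-t}b]}(K)$; choosing $0<a<b$ with $P_{[a,b]}(K)\ne0$ and $t$ so large that $[e^{-t}a,e^{-t}b]\cap[a,b]=\varnothing$ makes these two projections orthogonal and equal, forcing $P_{[a,b]}(K)=0$. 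Thus $K$ has no spectrum in $(0,\infty)$: no infinite-spin content can coexist with an inner dilation.

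For part~$(a)$, cyclicity of $H(O)$ makes $H(C)$ cyclic for all spacelike cones, so Proposition~\ref{prop:ciclcone} furnishes an (anti-)unitary extension $\hat U$ of $U$ to $\tilde\cP_+$ with $H=H_{\hat U}$. The central projection $P_\infty$ onto the infinite-spin fibres commutes with $U(\tPoi)$ and, the infinite-spin type being PCT-invariant, also with the reflections, hence with every $\Delta_{H(W)}$ and $J_{H(W)}$ and so with each Tomita operator $S_{H(W)}$. Therefore the net descends: $H_\infty(W):=P_\infty H(W)=H(W)\cap P_\infty\H$ is the canonical net of the massless representation $U_\infty:=U|_{P_\infty\H}$ and still satisfies properties~1--5, while $H_\infty(O)=P_\infty H(O)$ is cyclic in $P_\infty\H$ because $P_\infty$ is continuous and $H(O)+iH(O)$ is dense. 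Proposition~\ref{prop:s-dil} then yields $D(t)\in U_\infty(\tPoi)''$ implementing dilations, and the Casimir argument forces $P_\infty=0$. For part~$(b)$ I argue by contraposition: were the irreducible $U$ massless with infinite spin, Proposition~\ref{prop:cicl} would promote $H(O)\ne\{0\}$ to cyclicity of $H(\tilde O)$, Proposition~\ref{prop:s-dil} would give dilation covariance, and Proposition~\ref{prop:s-nodil} would be contradicted. For part~$(c)$, if $U$ contained an infinite-spin subrepresentation then, $\hat U$ being irreducible and its $\tPoi$-components being permuted by the PCT conjugation (which preserves the infinite-spin type), all of $U$ would be infinite spin, hence massless; I first extend Proposition~\ref{prop:cicl} to the merely $\hat U$-irreducible case by observing that the complex span of $\{H(gO):g\in\cP_+^\uparrow\}$ is in fact $\hat U(\tilde\cP_+)$-invariant, since PCT sends the double cone $O$ to a double cone of the same size, therefore into the $\cP_+^\uparrow$-orbit of $O$; the span is then all of $\H$ and $H(\tilde O)$ is cyclic, after which Proposition~\ref{prop:s-dil} and the Casimir argument close the case.

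The main obstacle is exactly the one the counter-example signals: once $U$ is reducible, dilation covariance by itself does not rule out infinite spin, so the reducible cases $(a)$ and $(c)$ cannot be handled through Proposition~\ref{prop:s-nodil} alone. Everything rests on upgrading the dilations to inner operators $D(t)\in U(\tPoi)''$ — the final assertion of Proposition~\ref{prop:s-dil} — which is what lets the central radius operator enforce a contradiction. The remaining work I expect to be routine bookkeeping: verifying that $P_\infty$ commutes with the modular data so that both the net and its cyclicity pass to the infinite-spin subspace, and checking that the Reeh--Schlieder argument of Proposition~\ref{prop:cicl} survives when only $\hat U$, and not $U$, is irreducible.
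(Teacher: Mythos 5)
Your parts $(a)$ and $(b)$ follow the paper's own strategy: the paper likewise restricts to a sub-representation (the massless component rather than your infinite-spin component $P_\infty\H$), applies Proposition \ref{prop:s-dil}, and uses precisely the inner-ness $D(t)\in U(\tPoi)''$ — which, as you correctly stress, is what the counter-example of Sect.\ \ref{ce} shows to be indispensable. The only difference is cosmetic: the paper lets $D$ decompose along the irreducible direct integral (possible because $D(t)$ commutes with the diagonal algebra $\subset U(\tPoi)'$) and applies Proposition \ref{prop:s-nodil} fibrewise, whereas you package the same information into a ``radius operator'' $K$ and derive $K=0$ from $D(t)KD(-t)=e^{\pm t}K$ together with $[D(t),K]=0$. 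That is a correct and slightly slicker finish, but note that the scaling relation for a \emph{reducible} $U$ is itself only obtained by naturality of $K$ under the fibrewise identification $U_\kappa\cdot\delta_t\cong U_{e^{-t}\kappa}$, so the direct-integral bookkeeping is hidden inside the definition of $K$, not avoided.

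Part $(c)$, however, has a genuine gap. You assert that the closed complex span $\K$ of $\{H(gO):g\in\cP_+^\uparrow\}$ is $\hat U(\tilde\cP_+)$-invariant ``since PCT sends the double cone $O$ to a double cone of the same size''. That geometric fact concerns regions; to convert it into invariance of $\K$ you need the net to transform covariantly under the anti-unitary part of $\hat U$, i.e., $\hat U(\theta)H(X)=H(\theta X)$ with $\theta$ the PCT reflection. But in $(c)$ the only hypothesis is that the abstract representation $U$ admits an irreducible (anti-)unitary extension $\hat U$; nothing ties the anti-unitary elements of $\hat U$ to the net. The identification $H=H_{\hat U}$, which would give exactly that covariance, is the conclusion of Proposition \ref{prop:ciclcone} and requires the cone cyclicity you are in the middle of proving — the argument is circular. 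Worse, the invariance you claim is precisely what fails in the configuration that must be excluded: since $\hat U$ is irreducible, either $U$ is irreducible (then $(b)$ applies) or $U=U_1\oplus U_2$ on $\H_1\oplus\H_2$ with $U_1\not\cong U_2$ swapped by $\hat U(\theta)$; if $\K$ happened to be the proper invariant subspace $\H_1$, then $\hat U(\theta)\K=\H_2\neq\K$. So your step assumes away the only problematic case instead of ruling it out. The paper closes this case by a different device: if $\K=\H_1$, the double-cone subspaces are cyclic in $\K$, so by \cite{GL1} the positive-energy representation $U_1$, acting covariantly on a net with cyclic double-cone spaces, admits its \emph{own} (anti-)unitary extension to $\tilde\cP_+$; then $U_2$, being the PCT-conjugate of $U_1$, is unitarily equivalent to $U_1$, contradicting $U_1\not\cong U_2$ (equivalently, $\hat U$ would be reducible). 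Hence $\K=\H$, the Reeh--Schlieder argument of Proposition \ref{prop:cicl} gives cyclicity of $H(\tilde O)$, and $(a)$ applies. You need this \cite{GL1} input, or some substitute for it, to make your part $(c)$ sound.
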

\begin{proof}
$(b)$ follows from $(a)$ by Prop.\ \ref{prop:cicl}, so we prove the statement $(a)$.

By restricting to the massless component, we may assume that $U$ is massless.
By Proposition \ref{prop:s-dil}, $U$ is dilation covariant; $U =\int^\oplus_X U_\lambda d\mu(\lambda)$ is the irreducible direct integral decomposition of $U$, by Prop.\ \ref{prop:s-dil} the dilation unitary group $D$ decomposes accordingly, $D =\int^\oplus_X D_\lambda d\mu(\lambda)$, so $U_\lambda$ is dilation covariant for $\mu$-almost all $\lambda$. Thus $U_\lambda$ has not infinite spin by
Proposition \ref{prop:s-nodil}.

$(c)$: Either $U$ is irreducible, and we apply $(b)$, or $U$ is the direct sum of two irreducible, inequivalent representations of $\tPoi$, $U =U_1 \oplus U_2$ on $\H_1\oplus \H_2$. In this case, let $\K\subset\H$ be the complex Hilbert subspaces generated by $H$. 
Then $\K$ is $U$-invariant. If $\K=\H$ we apply $(a)$. Otherwise
$U|_\K = U_1$ (or $U|_\K = U_2$). Then $H(O)$ is cyclic on $\K$ for some double cone as in Prop.\
\ref{prop:cicl}, so $U_1$ extends to an (anti-)unitary representation
of $\tilde\cP_+$ on $\H_1$ \cite{GL1}; thus $\hat U$ is easily seen to
be reducible, contrary to our assumption. Therefore $\K = \H$, and the
conclusion follows from $(a)$. 
\end{proof}

\section{Quantum Field Theory: Nets of von Neumann algebras}
\label{sec:QFT}
In this section the Minkowski spacetime dimension is $s\geq 2$. 

Given a positive energy (anti-)unitary representation of the proper Poincar\'e group $\cP_+$ on a Hilbert space $\H$, the paper \cite{BGL} provides a canonical construction of a $U$-covariant local net of standard subspaces of $\H$ on wedges with the properties 1--5. Similarly, this construction gives a twisted-local canonical $U$ covariant net if one considers a representation $U$ of the the universal cover $\tilde\cP_+$.
The above Theorems \ref{teo:noloc}, \ref{prop:f-dil} apply to this net, hence to the net of von Neumann algebras obtained via second quantisation on the Bose/Fermi Fock space, depending on $U(2\pi)=\pm 1$.
\smallskip

\noindent
A twisted-local, $U$-covariant net of von Neumann algebras on wedges $\F$ is an isotonous map
\[
W\longmapsto\F(W)
\]
that associates a von Neumann algebra $\F(W)$ on a fixed Hilbert space $\H$ with every $W\in\W$, with the following properties:
\begin{itemize}\itemsep0mm
\item {\it Poincar\'e covariance}:  $U(g)\F(W)U(g)^* = \F(gW)$,\ \  $g\in \tPoi$;
\item {\it Vacuum with Reeh-Schlieder property}: there exists a unique (up to a phase) $U$-invariant vector $\Omega\in\H$ and 
$\F(W)$ is cyclic on $\Omega$ for all $W\in\cW$;
\item {\it Bisognano-Wichmann property}: 
\[
\Delta^{it}_{W}=U\big(\Lambda_W(-2\pi t)\big),\quad 
W\in\W \ , 
\]
where $\Delta_W$ is the modular operator of $(\F(W),\Omega)$;
\item {\it Twisted locality}: 
For every wedge $W\in\cW$ we have
 \[
 Z \F(W')Z^*\subset \F(W)'
 \]
where $Z$ is unitary and $\displaystyle{Z=\frac{\eins+i\Gamma}{1+i}}$, $\Gamma = U(2\pi)$ as above.
\end{itemize}
Due to twisted locality, $\Omega$ is indeed also separating for each $\F(W)$, so the modular operators $\Delta_W$ are defined.
\smallskip

\noindent
Given $\F$ as above, we define the von Neumann algebra associated with the region $O$ as
\begin{equation}\label{FO}
\F(O) \equiv \bigcap_{\W\ni W\supset O}
\F(W)\ .
\end{equation}
A twisted-local, $U$ covariant net $O\longmapsto \F(O)$ on double cones is analogously
defined, by requiring the $U$-covariance and the cyclicity of the
algebras $\F(O)$. Then $\F(W)$ is defined by additivity and $W\longmapsto \F(W)$ is a twisted-local, $U$-covariant net on wedges. The von Neumann algebras $\F(O)$ defined by \eqref{FO} are, in general, larger than the original $\F(O)$ (they define the dual net). 

The free Bose (resp.\ Fermi) field net $\F_\pm$ is defined by second quantization on
the symmetric/anti-symmetric Fock space $\mathrm{F}_\pm(\H)$ as
\[
\F_\pm(W) \equiv R_\pm\big(H(W)\big)\ ,\quad W\in\W\ ,
\]
where $H = H_U$ is the canonical net of standard subspaces of the
one-particle Hilbert space $\H$ associated with the unitary representation
$U$ of the cover of Poincar\'e group with $U(2\pi) =\pm \eins$, and
$R_\pm(H(W))$ are defined as follows.

With $\H$ a Hilbert space and $H\subset\H$ a real linear subspace,
$R_\pm(H)$ is the von Neumann algebra on $\mathrm{F}_\pm(\H)$ generated by the CCR/CAR operators:
\begin{equation}\label{Rpm}
R_+(H) \equiv \{\mathrm{w}(\xi): \xi\in H\}'',\quad R_-(H) \equiv \{\Psi(\xi): \xi\in H\}''\ ,
\end{equation}
with $\mathrm{w}(\xi)$ the Weyl unitaries on $\mathrm{F}_+(\H)$ and $\Psi(\xi)$ the Fermi field operators on $\mathrm{F}_-(\H)$. 

Note that, by continuity, 
\[
R_\pm(H) = R_\pm(\bar H)\ .
\]
Moreover the vacuum vector $\Omega$ is cyclic (resp.\ separating) for $R_\pm(H)$ iff $\bar H$ is cyclic (resp.\ separating).

If $H$ is standard, we denote by $S^\pm_H$, $J^\pm_H$, $\Delta^\pm_H$
the Tomita operators associated with $(R_\pm(H),\Omega)$, and by
$\mathit{\Gamma}_\pm(T)$ the Bose/Fermi second quantization of a one-particle
operator $T$ on $\H$, defined by tensor products on $\mathrm{F}_\pm(\H)$. 

This assignment \eqref{Rpm} respects the lattice structure, as originally proven in \cite{A} (Bose case) and \cite{F} (Fermi case). The modular operators were computed in \cite{EO,LRT,F}. For convenience, we state these properties in the following proposition with a sketch of proof.
\begin{proposition}\label{prop:secquant}
Let $H$ and $H_a$ be closed, real linear subspaces of $\H$. We have
\begin{itemize}\itemsep0mm
\item[$(a_{\text{\tiny $+$}})$] $S^+_H = \mathit{\Gamma}_+(S_H)$, \ $J^+_H = \mathit{\Gamma}_+(J_H)$, \ $\Delta^+_H= \mathit{\Gamma}_+(\Delta_H)$, 
\item[$(a_{\text{\tiny $-$}})$] $S^-_H = Z\mathit{\Gamma}_-(iS_H)$, \ $J^-_H = Z\mathit{\Gamma}_-(iJ_H)$, \ $\Delta^-_H= \mathit{\Gamma}_-(\Delta_H)$,
\item[$(b)$] $R_+(H)' = R_+(H')$ and $R_-(H)' = ZR_-(iH')Z^*$,
\item[$(c)$] $R_\pm(\sum_a H_a) = \bigvee_a R_\pm(H_a)$,
\item[$(d)$] $R_\pm(\cap_a H_a) = \bigcap_a R_\pm(H_a)$,
\end{itemize}
where $\bigvee$ denotes the von Neumann algebra generated, $Z =
\eins$ (resp. $Z= -i$) on the $n$-particle subspace, $n$ even (resp.\ odd), and $H$ is standard in 
$(a_{\text{\tiny $\pm$}})$.
\end{proposition}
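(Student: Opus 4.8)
The plan is to prove the four items in the order $(a_{\text{\tiny $\pm$}}) \to (b) \to (c) \to (d)$, since the modular data computed in $(a_{\text{\tiny $\pm$}})$ drive the duality $(b)$, item $(c)$ is elementary, and $(d)$ is a purely formal consequence of $(b)$, $(c)$ and Lemma \ref{lem:1} together with the commutation theorem for von Neumann algebras.

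For $(a_{\text{\tiny $+$}})$ I would use that second quantization $\mathit{\Gamma}_+$ is multiplicative and preserves polar decompositions: writing $S_H = J_H\Delta_H^{1/2}$, the operator $\mathit{\Gamma}_+(J_H)$ is antiunitary, $\mathit{\Gamma}_+(\Delta_H) = \mathit{\Gamma}_+(\Delta_H^{1/2})^2$ is positive selfadjoint, and
\[
\mathit{\Gamma}_+(J_H)\,\mathit{\Gamma}_+(\Delta_H)\,\mathit{\Gamma}_+(J_H) = \mathit{\Gamma}_+(J_H\Delta_H J_H) = \mathit{\Gamma}_+(\Delta_H^{-1}) = \mathit{\Gamma}_+(\Delta_H)^{-1},
\]
so by the one-to-one correspondence between Tomita operators and standard subspaces recalled above, $\mathit{\Gamma}_+(S_H)$ is itself a Tomita operator. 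It remains to identify its standard subspace with that of $(R_+(H),\Omega)$, which I would do on the total set of coherent vectors $\mathrm{w}(\xi)\Omega$ by checking that $\mathit{\Gamma}_+(S_H)$ sends $\mathrm{w}(\xi)\Omega \mapsto \mathrm{w}(S_H\xi)\Omega = \mathrm{w}(\xi)^*\Omega$ for $\xi\in H$, i.e.\ that it agrees with $S^+_H$ on $R_+(H)\Omega$; uniqueness then forces $S^+_H = \mathit{\Gamma}_+(S_H)$, and the formulas for $J^+_H,\Delta^+_H$ follow by polar decomposition. For $(a_{\text{\tiny $-$}})$ the same scheme applies on the antisymmetric Fock space, the only new ingredient being the grading/twist: one verifies that $Z\mathit{\Gamma}_-(iS_H)$ is an antilinear involution (using $(iS_H)^2 = \eins$ and that $Z$ commutes with $\mathit{\Gamma}_-(\Delta_H)$) whose fixed-point space is that of $(R_-(H),\Omega)$, testing on the dense set of vectors $\Psi(\xi_1)\cdots\Psi(\xi_n)\Omega$.

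For the duality $(b)$ I would get one inclusion directly from the commutation relations: if $\xi\in H'$ then $\Im\langle\xi,\eta\rangle = 0$ for all $\eta\in H$, so the Weyl relation $\mathrm{w}(\xi)\mathrm{w}(\eta) = e^{-i\Im\langle\xi,\eta\rangle}\mathrm{w}(\eta)\mathrm{w}(\xi)$ shows $\mathrm{w}(\xi)\in R_+(H)'$, whence $R_+(H')\subseteq R_+(H)'$. For the opposite inclusion, with $H$ standard, Tomita--Takesaki gives $J^+_H R_+(H) J^+_H = R_+(H)'$, and since $J^+_H = \mathit{\Gamma}_+(J_H)$ conjugates $\mathrm{w}(\xi)$ into the Weyl unitary associated with $J_H\xi$, we obtain $R_+(H)' = R_+(J_H H) = R_+(H')$ using $J_H H = H'$. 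The general, non-standard $H$ I would reduce to the standard case by isolating the part of $\H$ on which $H$ is cyclic and separating and treating the degenerate directions separately, as in Araki's duality theorem; in the Fermi sector the same argument yields the twisted form $R_-(H)' = ZR_-(iH')Z^*$, with $Z$ and $i$ exactly those appearing in $(a_{\text{\tiny $-$}})$. Item $(c)$ is then immediate: $\mathrm{w}(\xi)$ for $\xi$ in the real span $\sum_a H_a$ is a product of Weyl unitaries coming from the $R_+(H_a)$, while $\Psi$ is real-linear so $\Psi(\xi)$ is a sum of generators from the $R_-(H_a)$; together with isotony and $R_\pm(H) = R_\pm(\overline H)$ this gives $R_\pm(\sum_a H_a) = \bigvee_a R_\pm(H_a)$.

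Finally $(d)$ follows formally: by the commutation theorem and then $(b)$ and $(c)$,
\[
\bigcap_a R_\pm(H_a) = \Big(\bigvee_a R_\pm(H_a)'\Big)' = R_\pm\Big(\big(\sum_a H'_a\big)'\Big) = R_\pm\Big(\bigcap_a H_a\Big),
\]
where the last equality is Lemma \ref{lem:1} combined with $K = K''$, and in the Fermi case the $Z$ and $i$ twists from $(b)$ cancel across the double complement, leaving the clean statement. I expect the main obstacle to be the bookkeeping in the Fermi sector --- tracking the grading $Z$ and the factor $i$ consistently through $(a_{\text{\tiny $-$}})$, $(b)$ and the double-complement computation in $(d)$ --- and, secondarily, the treatment of general non-standard $H$ in the duality $(b)$, which is precisely where the classical Araki--Leyland--Roberts--Testard argument is needed rather than pure modular theory.
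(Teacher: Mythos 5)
Your proposal is correct and takes essentially the same route as the paper's own (sketched) proof: $(a_{\pm})$ by checking the candidate Tomita operator on coherent vectors resp.\ Fermi field products and invoking uniqueness of the polar decomposition, $(b)$ from Tomita--Takesaki together with $J_HH=H'$, with the general non-standard case handled by splitting off $H\cap iH$ and $H^{\bot}$ exactly as in the paper, $(c)$ from the Weyl relations / real linearity of $\Psi$, and $(d)$ by the bicommutant computation (using Lemma \ref{lem:1}, and with the Fermi twists indeed cancelling) that the paper leaves as ``follows easily''. One cosmetic correction: for $\xi\in H$ one has $\mathit{\Gamma}_+(S_H)\mathrm{w}(\xi)\Omega=\mathrm{w}(-\xi)\Omega=\mathrm{w}(\xi)^*\Omega$ by antilinearity of $\mathit{\Gamma}_+(S_H)$ acting on the exponential series, whereas your intermediate expression $\mathrm{w}(S_H\xi)\Omega$ equals $\mathrm{w}(\xi)\Omega$ (since $S_H\xi=\xi$), not $\mathrm{w}(\xi)^*\Omega$.
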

\begin{proof}
$(a_{\text{\tiny $\pm$}})$ $S^+_H = \mathit{\Gamma}_+(S_H)$ due to the relation 
$S^+_H {\rm w}(\xi)\Omega = {\rm  w}(-\xi)\Omega$ (see \cite{LRT}), 
while $S^-_H = Z\mathit{\Gamma}_-(iS_H)$ due to the relation
$S^-_H \Psi(\xi_1)\Psi(\xi_2)\dots \Psi(\xi_n)\Omega = 
\Psi(\xi_n)\dots \Psi(\xi_2)\Psi(\xi_1)\Omega$,
$\xi\in H$, with $\Omega$ the Fock vacuum vector (see \cite{F}). 
By the uniqueness of the polar decomposition, we then have $J^+_H = \mathit{\Gamma}_+(J_H)$, 
$J^-_H = Z\mathit{\Gamma}_-(iJ_H)$ and $\Delta^\pm_H= \mathit{\Gamma}_\pm(\Delta_H)$. 

$(b)$ Assume first that $H$ is standard. By $(a)$ we have
\begin{gather*}
R_+(H)' = J^+_H R_+(H) J^+_H = R_+(J_H H) = R_+(H') \ , \\
R_-(H)' = J^-_H R_-(H) J^-_H = ZR_-(iJ_H H)Z^* = ZR_-(i H')Z^* \ .
\end{gather*}
Now $(b)$ trivially holds for $H=\H$ or $H=\{0\}$ too. To prove $(b)$ for a general closed real linear subspace $H$ of $\H$, we may decompose $\H$ in the direct sum $\H = \H_{-1}\oplus \H_0 \oplus \H_1$, where $\H_1 \equiv H^\bot$ and $\H_{-1} \equiv H\cap iH$. Then $H$ decomposes as 
$H = H_{-1}\oplus H_0 \oplus H_1$ with $H_{-1} =\H_{-1}$, $H_1 =\{0\}$ and $H_0$ a standard subspace of $\H_0$, and the statement follows at once.

$(c)$ is an immediate consequence of the Weyl relations $\mathrm{w}(\xi + \eta) = e^{-\Im\langle \xi,\eta\rangle}\mathrm{w}(\xi)\mathrm{w}(\eta)$ (Bose case), the real linearity of $\Psi$ (Fermi case), and $(a)$.

$(d)$ now follows easily from $(b)$ and $(c)$.
\end{proof}
We state now the following consequence of Theorems \ref{teo:noloc},
\ref{prop:f-dil} for free fields. 
\begin{corollary} \label{cor:free}
Let $\F_\pm$ be the free Bose/Fermi field net of von Neumann algebras on wedges associated with a positive energy, infinite spin, irreducible unitary Bose/Fermi representation of $\tPoi$ \cite{BGL}. 

Then $\F(C)$ is cyclic on the vacuum vector if $C$ is a spacelike
cone, but $\F(O)= \CC\cdot\eins$ if $O$ is any bounded spacetime region.
\end{corollary}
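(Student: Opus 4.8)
The plan is to reduce both assertions to the one-particle level, exploiting the functoriality of second quantization collected in Proposition \ref{prop:secquant}. Since $\F_\pm(W) = R_\pm(H(W))$ with $H = H_U$ the canonical net, the definition \eqref{FO} together with Proposition \ref{prop:secquant}$(d)$ gives, for any region $X$ contained in a wedge,
\[
\F_\pm(X) = \bigcap_{\W\ni W\supset X} R_\pm\big(H(W)\big) = R_\pm\Big(\bigcap_{\W\ni W\supset X} H(W)\Big) = R_\pm\big(H(X)\big)\ ,
\]
so that each local algebra is completely determined by the corresponding one-particle subspace. The corollary thus amounts to two facts about $H$: triviality of $H(O)$ on bounded regions and standardness of $H(C)$ on spacelike cones.

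For the bounded region, I would first treat a double cone $O$. The canonical net $H_U$ satisfies properties 1--5 (it is standard on wedges, wedge dual, and Bisognano-Wichmann by construction), so Theorem \ref{teo:noloc} applies and gives $H(O) = \{0\}$. For a general bounded region $X$, picking a double cone $\tilde O\supset X$ and using isotony of $H$ yields $H(X)\subset H(\tilde O) = \{0\}$, hence $H(X)=\{0\}$. Since $\mathrm{w}(0)=\eins$ in the Bose case and $\Psi(0)=0$ in the Fermi case, one has $R_\pm(\{0\})=\CC\cdot\eins$, so $\F_\pm(X)=\CC\cdot\eins$.

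For a spacelike cone $C$, I would invoke the general result of \cite{BGL} that $H(C)=H_U(C)$ is standard, in particular cyclic, for every spacelike cone and every representation $U$ (including infinite spin). As $H(C)$ is closed, the criterion recalled before Proposition \ref{prop:secquant} (namely that $\Omega$ is cyclic for $R_\pm(H)$ iff $\overline H$ is cyclic) transfers cyclicity of $H(C)$ to cyclicity of the vacuum $\Omega$ for $\F_\pm(C)=R_\pm(H(C))$.

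The genuine mathematical content is already packaged in Theorem \ref{teo:noloc} and in the standardness of $H(C)$ for spacelike cones, so I expect no serious obstacle here; the one point deserving care is that Proposition \ref{prop:secquant}$(d)$ must be applied to the (in general uncountable) family of wedges containing $X$, guaranteeing that passage to the one-particle subspace commutes with formation of the local algebra. The Bose/Fermi distinction enters only through the twist $Z$ in $R_-$ and affects neither the triviality computation nor the cyclicity criterion.
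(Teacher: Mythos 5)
Your proposal is correct and takes essentially the same route as the paper's proof, which likewise combines Theorem \ref{teo:noloc}, the cone--cyclicity result of \cite{BGL}, and the intersection property of second quantization (Proposition \ref{prop:secquant}) via $\bigcap_{W\supset X} R_\pm\big(H(W)\big) = R_\pm\big(\bigcap_{W\supset X} H(W)\big)$. The only difference is that you make explicit two steps the paper leaves implicit: the passage from double cones to arbitrary bounded regions by isotony, and the identification $R_\pm(\{0\})=\CC\cdot\eins$.
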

\begin{proof}
Immediate by Theorem \ref{teo:noloc}, the results in \cite{BGL}, and
the fact (Theorem \ref{prop:secquant}) that the intersection of closed
real linear spaces of the one-particle Hilbert space corresponds to
the intersection of the associated von Neumann algebras on the Fock space:
\[
\bigcap_{W\supset X}\F_\pm(W) \equiv \bigcap_{W\supset X} R_\pm\big(H(W)\big) =
R_\pm\big(\bigcap_{W\supset X} H(W)\big) \ 
\]
for $X=C$ a cone, resp.\ $X=O$ a double cone.
\end{proof}
The following theorem shows why infinite spin representations do not occur in a theory of local observables.

We shall say that a unitary representation $U$ of $\tPoi$ has \emph{infinite spin} if $U$ is a direct integral of irreducible, infinite spin representations. Thus $U$ does not not contain an infinite spin sub-representation iff no infinite spin representation appears in the irreducible direct integral decomposition of $U$ (up to a measure zero set).
\begin{theorem}\label{main}
Let $\F$ be a twisted-local net of von Neumann algebras $\F(O)$ on double cones on a Hilbert space $\H$, covariant w.r.t.\  a unitary representation $U$ of $\tPoi$ with vacuum vector $\Omega\in\H$. As above, we assume the double cone Reeh-Schlieder property and the Bisognano-Wichmann property.

Then $U$ does not contain an infinite spin sub-representation.
\end{theorem}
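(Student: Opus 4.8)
The plan is to reduce this statement about a net of von Neumann algebras to the one-particle standard-subspace result already established in Theorem \ref{prop:f-dil}, part $(a)$. The bridge is the canonical standard subspace associated with the vacuum: for each wedge $W$ set
\[
H(W) \equiv \overline{\{A\Omega : A\in\F(W)_{\rm s.a.}\}}\ ,
\]
the closure of the real subspace obtained by applying the self-adjoint part of $\F(W)$ to $\Omega$. First I would verify that $W\longmapsto H(W)$ is a $U$-covariant net of standard subspaces satisfying properties 1--5. Isotony is immediate from isotony of $\F$; Poincar\'e covariance follows from the covariance of $\F$ together with $U(g)\Omega=\Omega$; the Reeh--Schlieder hypothesis gives that $\Omega$ is cyclic, hence each $H(W)$ is cyclic; twisted locality of $H$ descends from twisted locality of $\F$ (with the same $Z$); and the Bisognano--Wichmann property for $H$ is exactly the Bisognano--Wichmann property assumed for $\F$, since by the Tomita--Takesaki theory the modular operator of the pair $(\F(W),\Omega)$ restricts on $\overline{\F(W)_{\rm s.a.}\Omega}$ to the modular operator of the standard subspace $H(W)$. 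Thus $\Delta_{H(W)}^{it}=\Delta_W^{it}=U(\Lambda_W(-2\pi t))$ and property 4 holds.

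The second step is to transfer the double-cone cyclicity. By the hypothesis the vacuum $\Omega$ is cyclic for $\F(O)$, so the standard subspace $H(O)$ built from wedges as in \eqref{HO2} contains the cyclic subspace $\overline{\F(O)_{\rm s.a.}\Omega}$; more carefully, one checks $\bigcap_{W\supset O}H(W)\supset\overline{\F(O)_{\rm s.a.}\Omega}$, because $\F(O)\subset\F(W)$ for every $W\supset O$ forces $\overline{\F(O)_{\rm s.a.}\Omega}\subset H(W)$. Hence $H(O)$ is cyclic for every double cone $O$. At this point Theorem \ref{prop:f-dil}$(a)$ applies verbatim: it concludes that $U$ contains no infinite spin sub-representation, which is precisely the assertion of Theorem \ref{main}.

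The main obstacle is the first step, specifically the identification of the modular objects of the von Neumann algebra with those of the standard subspace it generates. One must invoke the one-particle/algebra correspondence carefully: the self-adjoint part $\F(W)_{\rm s.a.}\Omega$ need not be closed, so $H(W)$ is its closure, and one needs that the Tomita operator $S_W$ of $(\F(W),\Omega)$ leaves this real subspace invariant and restricts to the standard-subspace Tomita operator. This is standard Tomita--Takesaki theory, but it is the only place requiring genuine operator-algebraic input rather than purely geometric or representation-theoretic arguments; everything downstream is a direct citation of the already-proven Theorem \ref{prop:f-dil}. I would also note that no assumption on the structure of $U$ (irreducibility, mass) is needed here, because Theorem \ref{prop:f-dil}$(a)$ already handles the fully reducible case by disintegration.
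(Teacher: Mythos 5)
Your proposal is correct and takes essentially the same route as the paper's own proof: the paper likewise defines $H(W)=\overline{\F(W)_{\rm s.a.}\Omega}$, notes that this is a $U$-covariant net of standard subspaces satisfying properties 1--5, observes that $H(O)\equiv\bigcap_{\W\ni W\supset O}H(W)\supset\F(O)_{\rm s.a.}\Omega$ is cyclic, and concludes via Theorem \ref{prop:f-dil}$(a)$. The only difference is that you spell out the Tomita--Takesaki identification of the modular objects of $(\F(W),\Omega)$ with those of the standard subspace $H(W)$ (more precisely, $S_{H(W)}$ equals the algebraic Tomita operator $S_W$, hence $\Delta_{H(W)}=\Delta_W$ as operators on $\H$), a point the paper subsumes under ``by assumptions''.
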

\begin{proof}
For every wedge $W\in\W$ we set 
\[
H(W)=\overline{\F(W)_{\rm s.a.}\Omega}\ ,
\]
where $\F(W)_{\rm s.a.}$ is the selfadjoint part of $\F(W)$. By assumptions, $H: W\longmapsto H(W)$ is then a twisted-local, $U$-covariant net of standard subspaces of $\H$ satisfying Properties 1--5.

With $O$  a double cone, we have that
\[
H(O) \equiv \bigcap_{\W\ni W\supset O}\!\!\!\! H(W) \supset \F(O)_{\rm s.a.}\Omega
\]
is cyclic. 

We thus infer from Theorem \ref{prop:f-dil} $(a)$ that $U$ does not contain an infinite spin sub-representation.
\end{proof}
We now start with a local net $\A$ of von Neumann algebras on double
cones, with the double cone Reeh-Schlieder property and the Bisognano-Wichmann property.
Let 
\[
{\mathfrak A}\equiv \overline{\bigcup_{O\in\O}\A(O)}
\]
(norm closure) be the quasi-observable $C^*$-algebra.
We shall say that a representation $\pi$ of $\mathfrak A$ is cone
localizable if, for every spacelike cone $C$, $\pi |_{\mathfrak A(C')}$ is unitarily equivalent to ${\rm id}|_{{\mathfrak A}(C')}$, where $\mathfrak A(C)$ is the $C^*$-algebra generated by $\A(O)$ as $O$ runs in the double cones contained in $C$. Similarly $\pi$ is double cone localizable if $\pi|_{{\mathfrak A}(O')}\simeq{\rm id}|_{{\mathfrak A}(O')}$, for all double cones $O$.

A Doplicher-Haag-Roberts (DHR) (resp.\ a Buchholz-Fredenhagen) representation \cite{DHR,BF} is a Poincar\'e covariant representation with positive energy, which is double cone (resp.\ cone) localizable. (Poincar\'e covariance with positive energy follows by general assumptions \cite{GL1}). 
\begin{theorem}
Let $\pi$ be a DHR representation of $\mathfrak A$ with finite statistics \cite{DHR}. 
Then  the unitary representation $U_\pi$ of $\tPoi$ in the
representation $\pi$ does not contain infinite spin sub-representations.
\end{theorem}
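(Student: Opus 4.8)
The plan is to deduce the statement from Theorem \ref{main} by manufacturing, out of the sector $\pi$, a twisted-local net of von Neumann algebras to which that theorem applies and whose covariance representation contains $U_\pi$. Since $\pi$ has finite statistics, the associated localized endomorphism $\rho$ of $\mathfrak A$ admits a conjugate, so the tensor $C^*$-category generated by $\rho$ under direct sums, subobjects, tensor products and conjugates is a symmetric tensor $C^*$-category with conjugates and finite-dimensional arrow spaces. By the Doplicher--Roberts reconstruction \cite{DHR} one obtains a twisted-local field net $\F\supset\A$ carrying an action of a compact gauge group $G$ with $\A=\F^G$, whose Hilbert space decomposes as $\H_\F=\bigoplus_\xi \H_\xi\otimes V_\xi$, the sum running over the irreducible classes $\xi$ realized in the category, with $\H_\xi$ carrying the representation $\pi_\xi$ of $\mathfrak A$ and $V_\xi$ a finite-dimensional multiplicity space. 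In particular the class of $\rho$ occurs among the $\xi$, so that $\pi$ is realized inside $\H_\F$.

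First I would equip $\F$ with the structure demanded by Theorem \ref{main}. By construction $\F$ is twisted-local (normal Bose/Fermi commutation relations). Since $\A$ satisfies the Bisognano--Wichmann property, every finite-statistics sector is Poincar\'e covariant with positive energy \cite{GL1}, and these sector covariances combine into a single positive-energy representation $U_\F$ of $\tPoi$ on $\H_\F$, implementing the geometric action on $\F$ and fixing the vacuum $\Omega$; writing $U_\F=\bigoplus_\xi U_{\pi_\xi}\otimes\eins_{V_\xi}$ we see that $U_\pi$ is a sub-representation of $U_\F$. The double cone Reeh--Schlieder property for $\F$ is the standard fact that charged fields localized in a double cone generate all sectors from the vacuum, i.e.\ $\overline{\F(O)\Omega}=\H_\F$ for every double cone $O$.

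It remains to secure the Bisognano--Wichmann property for $\F$. The gauge group $G$ is implemented by unitaries commuting with $U_\F$ and fixing $\Omega$; for a wedge $W$ these unitaries therefore commute with the modular group of $(\F(W),\Omega)$, which consequently leaves $\A(W)=\F(W)^G$ globally invariant. By Takesaki's theorem its restriction is the modular group of $(\A(W),\Omega)$, which by the Bisognano--Wichmann property of $\A$ equals $\mathrm{Ad}\,U(\Lambda_W(-2\pi t))$. That the modular flow of $\F(W)$ acts geometrically on the charged fields as well, so that it coincides with $\mathrm{Ad}\,U_\F(\Lambda_W(-2\pi t))$ on all of $\F(W)$, is the statement that modular covariance passes from the observable net to the Doplicher--Roberts field net, which I would take from \cite{GL1}. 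Granting this, $\F$ satisfies Poincar\'e covariance, the double cone Reeh--Schlieder property, the Bisognano--Wichmann property and twisted locality, so Theorem \ref{main} yields that $U_\F$ contains no infinite spin sub-representation; as $U_\pi$ is a sub-representation of $U_\F$, neither does $U_\pi$, which is the claim.

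I expect the verification of the Bisognano--Wichmann property for $\F$ to be the main obstacle. The restriction of the field modular group to $\A(W)$ is forced by Takesaki's theorem, but showing that this flow remains geometric on the charged sectors --- equivalently, that the Doplicher--Roberts extension preserves modular covariance --- is the substantive point, and it is exactly here that the Bisognano--Wichmann hypothesis on $\A$ enters in an essential way.
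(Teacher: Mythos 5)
Your proposal follows essentially the same route as the paper: pass to the Doplicher--Roberts field net $\F\supset\A$, check the hypotheses of Theorem \ref{main} for it --- with the inheritance of the Bisognano--Wichmann property from $\A$ correctly identified as the one substantive point, which the paper likewise settles by appeal to the literature (via identification of Connes--Radon--Nikodym cocycles, \cite{LKW,I}, rather than your citation of \cite{GL1}) --- and then restrict the covariance representation to recover $U_\pi$. The only detail you omit is that the Doplicher--Roberts construction requires Haag duality for double cones, which the paper secures at the outset by replacing $\A$ with its dual net; this is a harmless preliminary under the standing Bisognano--Wichmann assumption.
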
 
\begin{proof}
By considering the dual net, we can assume Haag duality for double
cones. We consider the Doplicher-Roberts twisted-local field net $\cF$. We have $\A(O)\subset\cF(O)$ and the restriction of the vacuum representation of $\cF$ to $\A$ is the direct sum (with multiplicity) of all DHR representations of $\A$ with finite statistics.

The representation $U_\cF$ of $\tPoi$ restricts accordingly to the
representations of $\A$. Thus we have to show that $U_\cF$ does not
contain an infinite spin sub-representation. This will follow from Theorem \ref{main} once we show the Bisognano-Wichmann property. Now the Bisognano-Wichmann property for $\F$ is a consequence of the Bisognano-Wichmann property for $\A$ as one can identify the Connes-Radon-Nikodym cocycles, see \cite{LKW,I}.
\end{proof}
As a consequence, let  $\pi$ be a Poincar\'e covariant representation of  $\mathfrak A$.
If $\pi$ contains infinite spin particles (i.e., $U_\pi$ contains an infinite spin sub-representation) then:
\[
\text{$\pi$ is localizable in a double cone} \implies \text{$\pi$ has infinite statistics}.
\]
This indicates an intimate relation among infinite spin, infinite statistics and localization in infinitely extended regions.
\section{Final comments}
\subsection{Field algebra structure}
\label{fieldstr}
We now describe the field algebra structure that we obtain starting from the observable algebra and adding all charges with finite statistics, including the ones with infinite spin (space dimension $s>2$).

Let $\A$ be as a above a local net with the double cone Reeh-Schlieder
property and the Bisognano-Wichmann property. Let $\T$ be the family of all irreducible representations, up to unitary equivalence (sectors), of $\mathfrak A$ of Buchholz-Fredenhagen type with finite statistics. 

The Doplicher-Roberts construction \cite{DR} yields a field net $\F$ of von Neumann algebras on a larger Hilbert space with $\F(C)\supset \A(C)$ for every cone $C$, and the identity representation of $\A$ on the Hilbert space of $\F$ decomposes into the direct sum of elements of $\T$, with multiplicity. 

By the spin-statistics theorem \cite{GL}, $\F$ is a twisted-local
net. If infinite spin sectors exist, then by Theorem \ref{main}
$\F(O)$ cannot be cyclic on the vacuum vector if $O$ is a bounded
region. If one restricts $\F$ to the cyclic Hilbert space generated by
$\F(O)$, one gets the field algebra associated with DHR charges. 
We discuss a physical interpretation of this structure in the outlook.

We mention also that, in two space dimensions, cone localizable
representations may have braid group statistics. If we consider only
those ones with Bose or Fermi statistics, then the above field algebra
description still holds (the spin-statistics theorem in $2+1$ dimensions is treated 
in \cite{Lt}). However, with general statistics, no field algebra
exists that describes an analogue of the above picture. 

\subsection{de Sitter spacetime}
If $\A$ is a local net on spacelike cones of the Minkowski spacetime
$\RR^{s+1}$, one can associate a local net $\B$ on double cones of the
$s$-dimensional de Sitter spacetime $dS_s$ (and similarly in the
twisted-local case). As usual, one views $dS_s$ as an hyperboloid of
$\RR^{s+1}$, which is the manifold of spacelike directions of
Minkowski spacetime. With $E$ any region of $dS_s$, one sets
$\B(E)\equiv \A(C_E)$, where $C_E\subset \RR^{s+1}$ is the spacelike cone with apex in the origin spanned by $E$.

This construction has been made in \cite{BGL}. In particular, in the free field case (finite or infinite spin), one gets the canonical modular construction on $dS_s$ associated with the restriction of Poincar\'e unitary representation to the Lorentz subgroup.

We emphasize here that the de Sitter picture is natural in the
presence of infinite spin particles. These particles have no bounded
spacetime localization on the Minkowski spacetime, yet they are
localized in bounded spacetime regions of the de Sitter spacetime.

\subsection{The role of the Bisognano-Wichmann property}
\label{comm:BW}
In this paper, we rely on the Bisognano-Wichmann property as a first principle (cf. \cite{GL1, GL}), so we briefly comment here on its roots. 

The Bisognano-Wichmann property implies the positivity of the energy (see \cite {BGL}), and is slightly stronger than that; it reflects the stability of the vacuum state.
It always holds in a Wightman theory \cite{BW}, including string
localized fields \cite{MSY}. In the local algebra framework one can
find a counter-example (see Sect.\ \ref{ce}), that has however a
pathological nature (with continuous degeneracy) and is built on the non-uniqueness of the covariance unitary representation of the Poincar\'e group: if one chooses the wrong (non-canonical) representation, one obviously violates the Bisognano-Wichmann property. So we may expect the latter to always hold when the Poincar\'e representation is unique, say by assuming the split property. 

In a massive theory, the Bisognano-Wichmann property can be derived by asymptotic completeness \cite{M}. It always holds in the conformal case. It is equivalent to a sub-exponential growth estimate on the energy density levels of localized states for the Rindler Hamiltonian, namely
$(\xi,e^{-2\pi K}\xi)<\infty$ for all vector states $\xi$ localized in
a given cone $C$ contained in a wedge $W$, with $K$ the generator of the
unitary one-parameter group of boosts associated with $W$ \cite{GL2}.

A further argument for the
Bisognano-Wichmann property is its mentioned equivalence with
the Hawking-Unruh effect for Rindler black holes \cite{Sew} (the Hawking temperature is the KMS temperature). An illustration of this fact goes beyond the purpose of this paper and we refer to the book \cite{H} for more insight on this point and related aspects.
\section{Outlook}
\label{Concl}
Infinite spin particle states cannot be 
localized in a bounded spacetime region. This corresponds to the fact that no local
observables exist that can generate these states from the
vacuum. These results, obtained in the present paper in the Operator
Algebraic intrinsic setup \cite{BGL}, extend the no-go theorem on local fields with 
infinite spin obtained previously in the Wightman setting \cite{Y}.
The string-localized free fields constructed in \cite{MSY}, that correspond to and generate the von Neumann algebras in \cite{BGL}, cannot thus be compactly localized.

As described in Sect.\ \ref{fieldstr} our results provide the following picture in a theory of local observables.

A quantum field theory on a Hilbert space
including infinite spin states is described by a net $W\longmapsto \F(W)$ of von Neumann algebras for wedge regions, and the
vacuum vector is cyclic for the von Neumann algebras for spacelike
cones (defined by intersections of wedge algebras), and has the
Bisognano-Wichmann property. The algebras for double cone regions are
non-trivial, forming a covariant subnet $O\longmapsto \A(O)$, but
the vacuum is not cyclic for $\A(O)$. The full Hilbert space therefore
splits into representations of $\A$, with the infinite spin states
absent from the vacuum representation. The representations containing
the infinite spin states are massless sectors of the
Buchholz-Fredenhagen type, i.e., localized in spacelike cones, and the
net $\F$ serves as a field algebra for these sectors.  
This picture complies with the scenario proposed by Schroer \cite{Sch}
with a hindsight on ``dark matter''. 

One may reasonably expect that $\A$ contains local generators of
Poincar\'e transformations, i.e., a stress-energy tensor subnet (which
could couple to gravity). As infinite spin states are localized in
spacelike cones, their Lorentz transforms will be localized in
different cones. Thus, the obstruction against infinite spin states to
be present in the vacuum representation is necessary because they
cannot be Lorentz transformed by local generators. But the
representatives of local generators in a cone-localized representation
may well Lorentz transform infinite spin states present in these sectors. 

By our result (Cor.\ \ref{cor:free}), if $\F$ is the \textit{free} field net
associated with an infinite spin representation, then the subnet $\A$
would be trivial. Hence, the above scenario necessarily requires a
self-interaction of some unknown sort. It is an exciting challenge to
describe such an
interaction, 
and the possible interaction of infinite
spin fields with ``ordinary matter'' fields.

\bigskip

\noindent
{\bf Acknowledgements.} We thank S.~Carpi, D.~Guido, G.~Morsella, and J.~Yngvason  for useful discussions.

\end{document}